\renewcommand\deleted[1]{}
\newtheorem{definition}{Definition}
\newtheorem{theorem}{Theorem}
\newtheorem{proposition}{Proposition}
\newtheorem{lemma}{Lemma}
\newtheorem{corollary}{Corollary}
\theoremstyle{remark} 
\theoremstyle{remark} \newtheorem{claim}{Claim}
\newcommand{\SharpP}{\text{\#P}}
\newcommand{\GapP}{\text{GapP}}
\newcommand{\PP}{\text{PP}}
\newcommand{\FP}{\text{FP}}
\newcommand{\NP}{\text{NP}}
\newcommand{\FNP}{\text{FNP}}
\newcommand{\BPP}{\text{BPP}}
\newcommand{\BQP}{\text{BQP}}
\newcommand{\AWPP}{\text{AWPP}}
\newcommand{\WCP}[1]{\text{\#P}_{\{#1\}}}
\newcommand{\WCPN}{\text{\#P}_{\mathbb{N}}}
\newcommand{\WCPZ}{\text{\#P}_{\mathbb{Z}}}
\newcommand{\WCPQ}{\text{\#P}_{\mathbb{Q}}}
\newcommand{\WCPQs}{\text{\#P}^{\mathrm{(pt)}}_{\mathbb{Q}}}
\newcommand{\WCPR}{\text{\#P}_{\mathbb{R}}}
\newcommand{\WDP}[1]{\text{PP}_{\{#1\}}}
\newcommand{\WDPN}{\text{PP}_{\mathbb{N}}}
\newcommand{\WDPZ}{\text{PP}_{\mathbb{Z}}}
\newcommand{\WDPQ}{\text{PP}_{\mathbb{Q}}}
\newcommand{\WDPQs}{\text{PP}^{\mathrm{(pt)}}_{\mathbb{Q}}}
\newcommand{\WDPR}{\text{PP}_{\mathbb{R}}}
\newcommand{\set}[1]{\mathcal{#1}}
\newcommand{\ps}[1]{#1}
\newcommand{\eq}{\ensuremath{\!=\!}} 
\newcommand\mg{\textcolor{black}}
\newcommand\mgg{\textcolor{black}}
\title{\textbf{A Structured View on Weighted Counting with Relations to Counting, Quantum Computation and Applications}}
\author[1,2]{Cassio P. de Campos\thanks{c.decampos@uu.nl}}
\author[3]{Georgios Stamoulis\thanks{georgios.stamoulis@maastrichtuniversity.nl}}
\author[4]{Dennis Weyland\thanks{dennisweyland@gmail.com}} 
\affil[1]{Queen's University Belfast, Northern Ireland, United Kingdom}
\affil[2]{Department of Information and Computing Sciences, Utrecht University, The Netherlands}
\affil[3]{Department of Data Science \& Knowledge Engineering, Maastricht University, The Netherlands}
\affil[4]{\mg{Google, Z\"{u}rich, Switzerland}}
\begin{document}
\date{}
\maketitle

\begin{abstract}
Weighted  counting problems are a natural generalization of counting problems where a weight is associated with every computational path of polynomial-time non-deterministic Turing machines and the goal is to compute the \emph{sum of the weights of all paths} (instead of just computing the \textit{number} of accepting paths). Useful closure properties and
plenty of applications make weighted counting problems interesting. The general definition of these problems captures even undecidable problems, but it turns out that
obtaining an exponentially small additive approximation is just as hard as solving conventional counting problems. In many cases such an approximation is sufficient and
working with weighted counting problems tends to be very convenient.
We present  a structured view on weighted counting by defining classes that depend on the range of the function that assigns weights to paths and by showing the
relationships between these different classes. These classes constitute generalizations of the usual counting problems. Weighted counting allows us to easily cast a
number of famous results of computational complexity in its terms, especially regarding counting and quantum computation. Moreover, these classes are flexible enough and
capture the complexity of various problems in fields such as probabilistic graphical models and stochastic combinatorial optimization. Using the weighted counting terminology
and our results, we are able to \deleted{greatly} simplify and answer some open questions in those fields.
\end{abstract}

\maketitle
\date{}



\section{Introduction}

Counting problems  play an important role in computer science and can be informally defined as finding the number of solutions (that is, computational paths of polynomial
length that end in an accepting state) of a given combinatorial decision problem. Weighted counting problems are a natural generalization of counting problems: a weight is
associated with every computational path and the goal is to compute the sum of the weights of all computational paths. Weighted counting has numerous applications in a wide
variety of fields, such as quantum computation, stochastic combinatorial optimization, probabilistic graphical models, to name but a few. In many situations it is more natural
and more convenient to use weighted counting problems instead of conventional counting problems. For instance, certain proofs can be stated in a more intuitive
manner by using weighted counting. It also offers additional closure properties which do not hold for the conventional counting classes. Because of that and because the
computational complexity of weighted counting is closely related to that of conventional counting, the former might be preferred to the latter when studying some computational
complexity questions.

In this work we give compact definitions for different classes of weighted counting problems, which differ only in the range of their weights. In particular, we study the
complexity of weighted counting problems using natural, integer, rational and real numbers as weights together with some more restricted cases where weights take values on
finite sets such as $\{0,1\}$ and $\{-1,1\}$. If one allows only positive integers as weights, it is easy to show that problems remain in $\SharpP$, the class of
(non-weighted) counting problems \cite{DBLP:journals/tcs/Valiant79}. If negative values are allowed, even if they are integers, then it is not possible anymore to guarantee
that the result is a non-negative value.  Therefore, these problems are not in $\SharpP$ by definition. In order to study these cases, we adopt the terminology of
$\SharpP$-\emph{equivalence} to indicate that a problem is $\SharpP$-hard under metric reductions and can be affinely reduced to a problem in $\SharpP$. This concept of affine reductions is closely related to weighted reductions~\cite{DBLP:journals/jcss/BulatovDGJJR12}, but it is not approximate preserving~\cite{Dyer2004}. In short, it allows polynomial-time preprocessing of the input and an affine transformation of the output from the function.
Essentially, the difference between $\SharpP$-\emph{equivalence} and previously used $\SharpP$-\emph{completeness} under metric reductions is the relaxation obtained with the affine transformation for the membership in $\SharpP$. Hence, problems which are not actually in $\SharpP$
\mg{for definitional reasons} can still be $\SharpP$-\emph{equivalent}. Using this terminology, we show that problems using weights from the set $\{-1,1\}$, \mg{or} from arbitrary integers, are closely related to $\SharpP$ problems, and also that the decision variants of these weighted counting problems remain in $\PP$ (the class of problems related to the decision of whether the majority of \mg{the computational paths of a non-deterministic polynomial time Turing machine are accepting}) \cite{gill1977computational}. While some of these results are considered to be known by the community, we are not aware of an explicit and precise treatment of all relevant cases.

\mg{For} the weighted counting  problems just described \mg{(that is, problems with integer weights)} \mg{the output size} is always bounded by a polynomial in the input size.
In this case, the corresponding decision variants remain in $\PP$. However, this is not necessarily the case for weighted counting problems when one allows weights to be taken from real or even rational numbers. In spite of that, we show that computing arbitrarily good additive approximations for these problems is a $\SharpP$-equivalent problem.
Regarding the corresponding decision problems, it is shown that the superclass of $\PP$ which allows rational weights (we will later define it as $\WDPQ$) contains the halting problem, and hence strictly contains $\PP$.

We also address the  relations between complexity classes of weighted counting problems and those of quantum computation. We show that weighted counting can be used to
represent the acceptance probability of a quantum Turing machine in a natural way. Based on this observation, it is possible to derive complexity results regarding quantum
computation using \mg{arguably simple} proofs. For instance, we are able to show that $\BQP \subseteq \AWPP$ \cite{DBLP:journals/jcss/FortnowR99} using a short
argumentation ($\BQP$ is the class of decision problems that can be solved by a quantum Turing machine with bounded error probability \cite{bernstein1993quantum} and $\AWPP$ is
a quite restricted gap definable counting class and the best known classical upper bound for $\BQP$ \cite{DBLP:journals/jcss/FennerFK94}). Finally, we give an
intuitive proof of the \mg{relatively recent} result that quantum Turing machines with bounded error can be simulated in a time- and space-efficient way using randomized algorithms
with unbounded error that have access to random access memory \cite{DBLP:journals/toc/MelkebeekW12}.

We conclude the paper with a discussion of the implications of \mg{these} results for problems in other fields such as inferences in probabilistic graphical models and
stochastic combinatorial optimization problems. Using these results and terminology, it is possible to simplify complexity proofs for those problems and to \mg{answer} some
\mg{important} questions, namely, upper bounds for the computational complexity of various computational tasks related to probabilistic graphical models and
two-stage stochastic combinatorial optimization problems.

\section{Counting Classes, Generalizations and Historical Developments}

Our starting point is the definition of conventional counting problems and the corresponding class of decision problems. This definition is then altered to allow for the
summation of weights instead of counting accepting paths. Restrictions on the weights lead to several different classes of problems. We discuss basic properties of these
classes and relate them to conventional counting problems. Additionally, we show that the closure properties for conventional counting problems can be extended to
weighted counting problems. While these properties make it convenient to work with weighted counting problems, the general definition leads to problems which may have an
exponentially long output or even capture undecidable problems. We then show that, from a complexity point of view, approximating weighted counting problems up to an
exponentially small additive error is basically equivalent to solving conventional counting problems. In many cases such an approximation is sufficient and therefore weighted
counting is a convenient tool. We begin with an introduction to related concepts. Since counting classes play a central role in this paper, we devote part of this section to discuss them.

\subsection{Counting Problems and Generalizations}

Problems in the complexity class $\NP$ are related  to the question of whether, for a given input, there exists at least one valid certificate of polynomial size (with respect to the
input size) and which can also be checked in polynomial time (with respect to the input size). This complexity class has been generalized in 1979
\cite{DBLP:journals/tcs/Valiant79, DBLP:journals/siamcomp/Valiant79} in the following way: instead of deciding if there exists at least one valid certificate for a given input,
we want to compute, for a given input, the number of valid certificates. Such problems are called counting problems and are subsumed in the complexity class $\SharpP$. More
formally, we can define counting problems as follows.

\begin{definition}[Counting Problem]
We are given a polynomial $p$ and a polynomial-time predicate $T$. The counting problem associated with $p$ and $T$ is to compute for $x \in \{0,1\}^\star$ the function
\[f(x) = \sum_{u \in \{0,1\}^{p(|x|)}} T(x,u) \text{.}\]
\end{definition}

We also need definitions of related complexity classes (such as $\BPP$, $\PP$, $\BQP$) while studying the relationships of weighted counting and other classes. Their definitions are given below.

\begin{definition}[$\BPP$]
A language $L \subseteq \{0,1\}^\star$ is in the complexity class $\BPP$ if and only if there exists a polynomial $p$ and a polynomial-time predicate $M$ (meaning a polynomial-time Turing machine whose output is either 0 or 1) such that for every $x
\in L$, the fraction of strings $y$ of length $p(|x|)$ which satisfy $M(x,y) = 1$ is greater than or equal to $2/3$; for every $x \notin L$, the fraction of strings $y$ of
length $p(|x|)$ which satisfy $M(x,y) = 1$ is less than or equal to $1/3$.\footnote{The string $y$ can be seen as the output of the random coin flips that the probabilistic Turing machine \mg{performs}.}
\end{definition}

\begin{definition}[$\BQP$]
A language $L \subseteq \{0,1\}^\star$ is in the complexity class $\BQP$ if and only if there exists a polynomial-time uniform family of quantum circuits $\{Q_{n}:n\in {\mathbb {N}}\}$, such that:
\begin{enumerate}
\item For all $n \in \mathbb {N}$, $Q_n$ takes n qubits as input and outputs 1 bit.
\item For all $x \in L$, ${\mathrm  {Pr}}(Q_{{|x|}}(x)=1)\geq {\tfrac  {2}{3}}$.
\item For all $x \notin L$, ${\mathrm {Pr}}(Q_{{|x|}}(x)=0)\geq {\tfrac  {2}{3}}$.
\end{enumerate}
\end{definition}
\noindent
This class is defined for a quantum computer and its natural corresponding class for an ordinary computer (or a Turing machine plus a source of randomness) is $\BPP$. Just like P
and $\BPP$, $\BQP$ is low for itself, which means $\BQP^{\BQP} = \BQP$ (we employ the usual notation for oracles). In fact, $\BQP$ is low for $\PP$, meaning that a $\PP$ machine (see the following definition) obtains no benefit from being able to solve $\BQP$ problems instantly, an indication of the possible difference in power between these related classes. The relation between $\BQP$ and $\NP$ is not known. Adding post-selection to $\BQP$ results in the complexity class PostBQP which is equal to $\PP$ \cite{Aaronson3473}.

The closest decision complexity class to $\SharpP$ is $\PP$, standing for Probabilistic Polynomial time.

\begin{definition}[$\PP$]
A language $L \subseteq \{0,1\}^\star$ is in the complexity class $\PP$
if and only if there exists a polynomial $p$ and a polynomial-time predicate $M$
(meaning a polynomial-time Turing machine whose output is either 0 or 1) such that:
\begin{enumerate}
\item For all $x \in L$, the fraction of strings $y$ of length $p(|x|)$ which satisfy $M(x,y) = 1$ is strictly greater than $1/2$.
\item For all $x \notin L$, the fraction of strings $y$ of length $p(|x|)$ which satisfy $M(x,y) = 1$ is less than or equal to $1/2$.
\end{enumerate}
\end{definition}

\noindent
The string $y$ should be interpreted as a string of bits corresponding to random choices. The terms ``less than or equal" can be changed to ``less than"  and the
threshold 1/2 can be replaced by any fixed rational number in $]0,1[$, without changing the class. $\PP$ is a very powerful class: it contains $\BPP$, $\BQP$ and $\NP$. Later
we give a new and extremely short intuitive proof that $\PP$ contains $\BQP$ (and also the related class $\AWPP$, see Section~\ref{section:quantumComputing}) based on the concept of weighted counting. A polynomial-time Turing machine with unlimited access to a $\PP$ oracle can solve the entire polynomial hierarchy. On the
other hand, $\PP$ is contained in PSPACE (polynomial-space bounded Turing Machines; for instance, polynomial space can solve the {\it majority satisfiability} problem simply by going through each possible assignment). Another concept from complexity theory that we need is the notion of \emph{Functional Problems}.

\begin{definition}[$\FP$]
A binary relation $P(x,y)$ is in the complexity class $\FP$ if and only if for every $x\in\{0,1\}^\star$ there is a deterministic polynomial-time (in $|x|$) algorithm that
finds some $y\in\{0,1\}^\star$ such that $P(x,y)$ holds (or tells that such $y$ does not exist).
\end{definition}

The difference between $\FP$ and P is that problems in P have one-bit yes/no answers, while problems in $\FP$ can have any output that can be computed in polynomial time. Just as P
and $\FP$ are closely related, $\NP$ is closely related to $\FNP$. In a straightforward way, the class of relations $\FP^{\SharpP}$ contains all relations that can be computed in polynomial time with access to a $\SharpP$ oracle. We use the terminology $\FP^{\SharpP[m]}$ to limit the overall number of calls to the oracle to $m$. We note that
$\FP^{\SharpP}$ is a very powerful class, as it was shown in \cite{DBLP:journals/tcs/TodaW92}: every function in $\#\mathrm{PH}$ (the class of functions that count the number of accepting paths of polynomial time-bounded nondeterministic oracle Turing machines with oracle sets from the polynomial hierarchy PH) is  metric  reducible to some function in $\SharpP$, that is, $\#\mathrm{PH} \subseteq \FP^{\SharpP[1]}$.

\subsection{Related Work and Historical Developments}

Since its introduction, the class $\SharpP$ has been very successful in characterizing the difficulty of  counting problems: for this, central is the role of
$\SharpP$-completeness \cite{DBLP:journals/tcs/Valiant79,DBLP:journals/siamcomp/Valiant79} introduced to capture the computational complexity of the \textit{permanent}.
Surprisingly, not only all NP-complete problems have their counting version $\SharpP$-complete (this is true under \textit{parsimonious} reductions) but also many ``easy"
problems such as {\it perfect matching} are $\SharpP$-complete in their counting versions. These problems with easy decision version (called ``hard to count-easy to decide'' in
\cite{DBLP:journals/tcs/DurandHK05}) cannot be $\SharpP$-complete under \textit{parsimonious} reductions, but are $\SharpP$-complete under Cook reductions. See \cite{DBLP:conf/mfcs/PagourtzisZ06}
for some closely related classes such as TotP and \#PE (standing for \#P-easy) and their corresponding structural properties and characterizations and also
\cite{DBLP:conf/icalp/HemaspaandraKW01,DBLP:journals/jcss/SalujaST95} for work on some other closely related subclasses of $\SharpP$ that contain functions with
easy decision variants.

In the classical ``textbook'' proof that computing the permanent of a  matrix $M \in \mathbb{Z}^{n\times n}$ is $\SharpP$-complete~\cite{DBLP:conf/istcs/Ben-DorH93}, first it is
shown a reduction to the computation of the permanent of another matrix $M' \in \mathbb{Z}_{\geq 0}^{n \times n}$ and then to the computation of the permanent of a matrix $M''$ with $0/1$ entries. In other words, a counting problem over arbitrary integers is reduced to a classic 0/1 counting problem. This however requires pre and postprocessing, so
it has been argued that $\SharpP$ might not be the correct class when we are interested in counting. Indeed $\SharpP$ has some disadvantages, including not being closed under many operations. Namely, $\SharpP$ is not (or is not known to be) closed under subtractions or binomial coefficients. A direct implication of
the first case is that computing the permanent of a matrix with arbitrary integer entries is not $\SharpP$ by definition. This was a motivation behind the work of
Fenner, Fortnow and Kurtz \cite{DBLP:journals/jcss/FennerFK94} that introduced and defined the complexity class GapP (intuitively, a class of functions
that introduces ``gaps'' between the number of accepting and rejecting computations of NP machines) as an alternative to $\SharpP$ (see also
\cite{DBLP:journals/iandc/FennerFL96} for a more systematic study of gap definability). The authors claim that this class constitutes a natural alternative for $\SharpP$.
Indeed, many computations (such as the permanent over arbitrary integers) which are outside $\SharpP$ fall now inside GapP. Moreover, GapP is shown to be closed under non-trivial
operations such as subtractions and binomial coefficients (see also~\cite{DBLP:conf/istcs/Beigel97}). Arguably, the success of GapP comes from the work of Beigel,
Reingold and Spielman to show that $\PP$ is closed under the operation of intersection~\cite{DBLP:journals/jcss/BeigelRS95} and also due to the fact that the definition of
GapP helps to simplify very important complexity theoretic results: Toda's famous theorem that the entire polynomial hierarchy is contained in $\mathrm{P}^{\PP}$
\cite{DBLP:journals/siamcomp/Toda91} can be cast in terms of GapP~\cite{DBLP:journals/siamcomp/TodaO92}. The very important result that $\BQP \subseteq
\PP$ by Adleman, DeMarrais and Huang \cite{DBLP:journals/siamcomp/AdlemanDH97} can also be simplified and improved with GapP: in
\cite{DBLP:journals/jcss/FortnowR99}, among others, it has been shown, using GapP definable functions, that $\PP^{\BQP} = \PP$. Moreover, it has been shown that $\BQP \subseteq \AWPP$, which is the best current upper bound for $\BQP$. Regarding the simulation of a $\BQP$ machine by a randomized,
unbounded error machine, the most efficient simulation was given in a very recent article by van Melkebeek and Watson~\cite{DBLP:journals/toc/MelkebeekW12}.
In spite of that, it seems that the class GapP cannot capture the complexity of \textit{weighted counting}, where every computational path has a
weight and we are interested in computing the sum of weights of all computational paths, which is the main focus of this work.

In another line of research, the complexity of counting {\it constraint satisfaction problems} (CSPs)
has been investigated. A CSP can be formulated as follows: Let $D$ be an arbitrary
finite set called the domain set. Let $\mathcal R = \{R_1, R_2, \ldots, R_k\}$ be an arbitrary finite set of relations on $D$. Each $R_i$ has an arity $n_i \in \mathbb{N}$. As
input we have variables ${x_1, \ldots, x_n}$ over ${D}$, and a collection of constraints ${R \in {\cal R}}$, each applied to a subset of \mg{the} variables. The decision query is
whether there is an assignment that satisfies all the constraints and the corresponding counting version asks to compute the number of satisfying assignments. \mg{If we}
identify each $R\in \mathcal{R}$ \mg{with a function $\phi(R)$}, \mg{then} a counting CSP is to evaluate the following so-called partition function on an input instance ${I}$:

\begin{displaymath}
P(I) = \sum_{x_i \in D} \prod_{\phi \in I} f(x_{i_1},\ldots,x_{i_r}),
\end{displaymath}
\noindent
where $\mg{\phi}$ has  arity ${r}$ and is applied to variables ${x_{i_1},\ldots,x_{i_r}}$. If $\mg{\phi}$ is 0/1 valued, then this formulation counts the number of solutions. If $\phi$ can take
arbitrary values, then we have a \textit{weighted} CSP problem (for example, see \cite{DBLP:journals/jcss/BulatovDGJJR12,DBLP:journals/siamcomp/DyerGJ09}).

Feder and Vardi \cite{DBLP:journals/siamcomp/FederV98} conjectured that a CSP over constraints $\mathcal{R}$ is either in P or NP-hard, and very recent work seems to have settled
this question~\cite{bulatov,zhuk}.
In a very recent breakthrough, Cai and Chen \cite{DBLP:journals/jcss/CaiLX14,DBLP:conf/stoc/CaiC12} proved a dichotomy theorem for counting versions of CSPs over
complex weights: Given any finite ${D}$ and any finite set of constraint functions ${{\cal F}}$, where  $f_i: D^{n_i} \rightarrow \mathbb{C}$, the problem of computing the
partition function ${P(I)}$ is either in P or $\SharpP$-hard. On the negative side, the criteria are not known to be computable, but still the dichotomy exists.  Another
similar dichotomy conjecture, but yet to be proven, exists for \#CSP, namely that any \#CSP problem is either in $\FP$ or $\SharpP$-complete. See \cite{DBLP:conf/focs/BulatovD03} for some partial results regarding this conjecture and see also
\cite{DBLP:journals/siamcomp/DyerGJ09,DBLP:conf/coco/CaiCL11,DBLP:journals/jcss/BulatovDGJJR12,DBLP:journals/siamcomp/CaiLX11,DBLP:journals/iandc/Yamakami12,
DBLP:journals/tcs/Yamakami12} for other results in the area of counting versions of CSPs.

Another related result comes from Bl\"aser and Curticapean \cite{DBLP:conf/iwpec/BlaserC12} in which the $\#\textrm{W}[1]$-hardness of the weighted counting of all
$k$-matchings in a bipartite graph was \mg{proven}. There, the weight of a matching is simply the product of the weights of the edges that constitute this matching.

In another direction and more relevant to our study, a weighted counting class $\SharpP_{Q'}$ has been defined by Goldberg and Jerrum~\cite{DBLP:journals/iandc/GoldbergJ08}. This is the class of functions $f: \{0,1\}^* \rightarrow \mathbb{Q}$ that can be written as the division of a $\SharpP$  function \mg{by} an $\FP$ function. This class was used to classify the complexity of approximating the value of Tutte polynomials of a graph that take as argument arbitrary rational numbers. Recall that the Tutte polynomial of a given graph $G$ is a two-variable polynomial $T_G(x,y)$.  Usually, the arguments $x$, $y$ are integers (not necessarily positive), but the authors were interested in the most general case where $x,y$ are arbitrary rational numbers. Tutte
polynomials can encode interesting graph theoretic properties. For instance, $T_G(1,1)$ counts the number of spanning trees in $G$, $T_G(2,1)$ counts the
number of forests in $G$, and they are also closely connected to chromatic polynomials, flow polynomials, etc. The class $\SharpP_{Q'}$ is a strict subclass of the class $\WCPQs$ defined later on. Exactly evaluating Tutte polynomials is $\SharpP$-hard \cite{jaeg90comp} except for some threshold cases. The authors were interested in some dichotomy results and they significantly widened the cases where there exist a fully polynomial-time randomized approximation scheme (FPRAS) for $T_G(x,y)$, and showed that some other particular cases do not attain FPRAS (modulo RP $\neq$ NP).

We close this section by mentioning the result of Yamakami \cite{DBLP:journals/ijfcs/Yamakami03}, probably the closest to the settings in this paper. There,
the author studied a class of quantum functions, namely $\# \mathrm{QP}_{\mathrm{K}}$, which is defined as the set of functions computing the acceptance probability of some
polynomial-time quantum Turing machine, the amplitudes of which are drawn from a set $K$. We follow a similar notation here.
Yamakami also defined the corresponding gap definable functions $\mathrm{GapQP}_{\mathrm{K}} = \# \mathrm{QP}_{\mathrm{K}} - \# \mathrm{QP}_{\mathrm{K}}$ which are named as \textrm{quantum probability gap} functions. Among other very interesting results, Yamakami proved the following characterization of $\PP$ in terms of some generalized quantum classes:

\begin{theorem}[Theorem 7.1 in \cite{DBLP:journals/ijfcs/Yamakami03}]
Let $A \subseteq \{0,1\}^*$ and let $\mathbb{A}$ be the set of complex algebraic numbers. Then, the following statements are all equivalent:
\begin{enumerate}
\item $A \in PP$,
\item $\exists f,g \in \#\mathrm{QP}_{\mathbb{A}}$ such that $\forall x \in \{0,1\}^*$, $x \in A \Leftrightarrow f(x) > g(x)$,
\item $\exists f,g \in \mathrm{GapQP}_{\mathbb{A}}$ such that $\forall x \in \{0,1\}^*$, $x \in A \Leftrightarrow f(x) > g(x)$.
\end{enumerate}
\end{theorem}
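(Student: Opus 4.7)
The plan is to prove the three-way equivalence as a cycle $(1)\Rightarrow(2)\Rightarrow(3)\Rightarrow(1)$, since the class $\mathrm{GapQP}_{\mathbb{A}}$ is defined as $\#\mathrm{QP}_{\mathbb{A}} - \#\mathrm{QP}_{\mathbb{A}}$ and the last implication carries all of the real work. Throughout, I would exploit the fact that classical polynomial-time computations embed faithfully into quantum polynomial-time computations using only the amplitudes $\{0,1,1/\sqrt{2}\}\subset\mathbb{A}$.

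For $(1)\Rightarrow(2)$, start from the standard $\PP$ characterization: there is a polynomial $p$ and a polynomial-time predicate $M$ such that $x\in A$ iff the fraction of $y\in\{0,1\}^{p(|x|)}$ with $M(x,y)=1$ is strictly greater than $1/2$. I would build a polynomial-time reversible QTM that prepares a uniform superposition over the $y$-register using $p(|x|)$ Hadamard gates, coherently evaluates $M$ into an output qubit, and measures. Its acceptance probability is precisely that $\PP$ fraction, so it is a function $f\in\#\mathrm{QP}_{\mathbb{A}}$. Taking $g$ to be the constant $1/2$ (computed by a one-qubit Hadamard-then-measure machine, so $g\in\#\mathrm{QP}_{\mathbb{A}}$) gives $x\in A \Leftrightarrow f(x)>g(x)$. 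The implication $(2)\Rightarrow(3)$ is then immediate, because $\#\mathrm{QP}_{\mathbb{A}}\subseteq\mathrm{GapQP}_{\mathbb{A}}$ via the obvious embedding $f = f - 0$ (the zero function is trivially in $\#\mathrm{QP}_{\mathbb{A}}$).

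The main obstacle is $(3)\Rightarrow(1)$. Given $f,g\in\mathrm{GapQP}_{\mathbb{A}}$ with $x\in A \Leftrightarrow f(x)>g(x)$, set $h=f-g$, still a $\mathrm{GapQP}_{\mathbb{A}}$ function, so deciding $A$ amounts to deciding the sign of $h$. The plan is to invoke the Adleman--DeMarrais--Huang / Fortnow--Rogers style representation, already discussed in the excerpt: for a polynomial-time QTM with algebraic amplitudes, the acceptance amplitude can be written as $F(x)/N(x)$ where $F\in\GapP$ and $N(x)$ is a polynomial-time computable positive integer (coming from a common denominator and a scaling that clears the algebraic operations). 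Applying this to the two $\#\mathrm{QP}_{\mathbb{A}}$ components of $h$ and equalizing denominators yields a single $\GapP$ function $H$ and a polynomial-time integer $N(x)>0$ with $h(x)=H(x)/N(x)$. Thus $\mathrm{sgn}(h(x))=\mathrm{sgn}(H(x))$, and the sign of a $\GapP$ function is decidable in $\PP$ by the standard $\GapP = \SharpP-\SharpP$ trick (one $\#\mathrm{P}$ counts positive contributions, the other counts negative ones, and the $\PP$ machine compares their accepting fractions).

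The genuinely delicate point is ensuring that strict inequality is preserved by the $\GapP$ representation. One must argue that whenever $f(x)\ne g(x)$, the gap $|f(x)-g(x)|$ is bounded below by $2^{-q(|x|)}$ for some polynomial $q$. This follows because $h$, being a $\mathrm{GapQP}_{\mathbb{A}}$ function, is a fixed algebraic combination of exponentially-many terms whose common denominator $N(x)$ is at most $2^{\mathrm{poly}(|x|)}$, so any nonzero value of $H(x)$ is at least $1$ in absolute value and therefore any nonzero $h(x)$ is at least $1/N(x)$. This inverse-exponential separation matches the precision of the $\GapP$ encoding exactly, so the sign test is faithful. I expect this quantitative gap argument, together with a careful description of the algebraic bookkeeping for amplitudes in $\mathbb{A}$, to be the step consuming most of the proof.
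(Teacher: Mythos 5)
First, a caveat: the paper does not prove this statement at all --- it is quoted verbatim as Theorem 7.1 of Yamakami's paper and used only as context --- so there is no in-paper proof to compare against. Judged on its own, your cycle $(1)\Rightarrow(2)\Rightarrow(3)\Rightarrow(1)$ is the right skeleton and the first two implications are fine (modulo the routine observation that $\mathrm{GapQP}_{\mathbb{A}}$ is closed under subtraction only after averaging, since a sum of two acceptance probabilities is not itself an acceptance probability; this is harmless because only the sign of $h$ matters).

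The genuine gap is in $(3)\Rightarrow(1)$, precisely at the step you identify as ``bookkeeping.'' Your claim that the acceptance probability of a polynomial-time QTM with amplitudes in $\mathbb{A}$ can be written as $H(x)/N(x)$ with $H\in\GapP$ integer-valued and $N(x)$ a positive integer in FP is false in general: the acceptance probability $\sum_{p_1\sim p_2} a_{p_1}\overline{a_{p_2}}$ is a real \emph{algebraic} number that need not be rational (take any amplitude generating a number field of degree $d>1$). What one actually gets, after fixing an integral basis $\omega_1,\dots,\omega_d$ of the number field generated by the finitely many transition amplitudes and their conjugates, is $h(x)=\bigl(\sum_{j}H_j(x)\,\omega_j\bigr)/N(x)$ with each $H_j\in\GapP$. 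Consequently your lower-bound argument --- ``any nonzero value of $H(x)$ is at least $1$, hence any nonzero $h(x)$ is at least $1/N(x)$'' --- only covers the degree-one case. The correct separation from zero is a Liouville-type norm bound: $N(x)h(x)$ is a nonzero algebraic integer all of whose $d$ conjugates are bounded by $2^{\mathrm{poly}(|x|)}$, and since the product of the conjugates is a nonzero rational integer (hence at least $1$ in absolute value), $|N(x)h(x)|\ge 2^{-\mathrm{poly}(|x|)}$. Only then can you replace each $\omega_j$ by a dyadic rational approximation of polynomially many bits, clear denominators to obtain a single integer-valued $\GapP$ function whose sign agrees with that of $h$, and finish with the standard sign-of-$\GapP$-is-in-$\PP$ argument. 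This norm argument is a distinct idea, not mere bookkeeping, and without it the sign test is not justified for amplitudes in all of $\mathbb{A}$.
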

\noindent
The usefulness of this result is even more evident when Yamakami, in the same paper, considers the quantum analog of PP, namely $\textrm{PQP}_{\mathrm{K}}$ for some amplitude
set $K$ (where the usual polynomial-time Turing machine of the standard definition of PP is replaced by a certain quantum Turing machine).
Using the above theorem, Yamakami proved that
$\mathrm{PQP}_{\mathbb{A}} = \mathrm{PP}$. Later in this paper, we will show that $\mathrm{PP}$ over arbitrary (approximable) rational numbers contains even \textit{undecidable} problems and thus cannot be equal to PP. However, we do not know if the statement is true if we consider a more restricted definition and we leave this as a very interesting open question.

\subsection{Weighted Counting Problems}

As a straightforward generalization of conventional counting problems, weighted counting problems can be formally defined in the following way.

\begin{definition}[Weighted Counting Problem] \label{definitionWeightedCountingProblem} We are given
  a polynomial $p$ and a function $w: \{0,1\}^\star \times \{0,1\}^\star \rightarrow \mathbb{R}$
  that can be approximated by a polynomial-time (in the size of the first two arguments and the
  third argument) computable function $v: \{0,1\}^\star \times \{0,1\}^\star \times \mathbb{N}
  \rightarrow \mathbb{Z}$, such that $|w(x, u) - v(x, u, b) / 2^b| \leq 2^{-b}$ for all $x \in
  \{0,1\}^\star, u \in \{0,1\}^\star, b \in \mathbb{N}$. The weighted counting problem associated
  with $p$ and $w$ is to compute for $x \in \{0,1\}^\star$ the function \[f(x) = \sum_{u \in
    \{0,1\}^{p(|x|)}} w(x,u) \text{.}\]

\end{definition}
Here, $x$ is the input to the weighted counting function and $w$ represents the $2^{p(|x|)}$ many weights for a given input $x$. With the restriction to $w$ imposed by the approximation property, we limit the range of $w$ to efficiently computable numbers as defined in  \cite{minsky1967computation} (p. 159). \mg{From this we immediately get that any such efficiently computable number is a weighted counting problem.} Similar notions in the context of quantum computation are used in \cite{DBLP:journals/siamcomp/AdlemanDH97, DBLP:journals/toc/MelkebeekW12}. \mg{We will use the notation $\#s$ to represent the numerical value of a binary string $s \in \{0,1\}^*$.}

For a given rational threshold value, we can define the corresponding decision problem in the following way.

\begin{definition}[Weighted Counting Problem, Decision Variant]
  \label{definitionWeightedCountingProblemDecision} We are given a weighted counting problem defined
  by a polynomial $p$ and a function $w: \{0,1\}^\star \times \{0,1\}^\star \rightarrow \mathbb{R}$
  as well as a threshold value $t \in \mathbb{Q}$ (which may depend on the input size $|x|$).
The corresponding decision problem is to decide for $x \in \{0,1\}^\star$ whether $f(x) \geq t$ or not.
\end{definition}

As mentioned earlier, weighted counting problems may have different characteristics depending on the set from which weights are taken.
\begin{definition}
For any given set $S \subseteq \mathbb{R}$, we define the class \#P$_S$ to consist of all weighted counting problems where the range of $w$ is restricted to $S$. The class of the corresponding decision problems is then denoted by $\PP_S$.
\end{definition}

By using this notation, we can define the classes $\WCP{0,1}$, $\WCP{-1,1}$,
$\WCP{-1,0,1}$, $\WCPN$, $\WCPZ$, $\WCPQ$ and $\WCPR$, as well as the corresponding classes of decision problems $\WDP{0,1}$, $\WDP{-1,1}$, $\WDP{-1,0,1}$, $\WDPN$, $\WDPZ$, $\WDPQ$ and $\WDPR$. These classes contain interesting problems, since for instance the {\it permanent} of matrices with entries that are not natural numbers falls in such different classes depending on the range of the entries.
Some inclusions among these classes are trivial, since $S_1\subseteq S_2\Rightarrow (\SharpP_{S_1}\subseteq \SharpP_{S_2})\land (\PP_{S_1}\subseteq \PP_{S_2})$.

It is easy to see that $\WCP{0,1}$ is equal to $\SharpP$ \cite{DBLP:journals/tcs/Valiant79}, the class of (non-weighted) counting problems. The same equality holds for the
corresponding classes of decision problems $\WDP{0,1}$ and $\PP$. Additionally, $\WCP{-1,0,1}$ is equal to $\GapP$ \cite{DBLP:journals/jcss/FennerFK94}, the closure
of $\SharpP$ under subtraction,
while $\WCP{-1,1}$ equals $\GapP$ under {\it normal} Turing machines (the computational tree is completely binary),
and therefore $\WDP{-1,1}$ and $\WDP{-1,0,1}$ are equal to $\PP$, since $\SharpP$ and $\GapP$ constitute the same class of decision problems.

The classes  $\WCPR$ and $\WDPR$ are extremely powerful since the output of $\WCPR$ could potentially be infinitely long.
The following result confirms that not only $\WCPR$ and $\WDPR$ but also $\WCPQ$ and  $\WDPQ$ are extremely powerful classes, since they contain the {\it halting} problem. Let $\langle a,b \rangle$ represent the concatenation of the bitstring representations of $a$ and $b$.

\begin{theorem}
\mg{$\WDPQ$} contains undecidable problems. \label{thalt}
\end{theorem}
\begin{proof}
\mg{Define a weighted counting problem (as in Definition~\ref{definitionWeightedCountingProblem}) as follows: let $p$ be the constant $1$ (so the summation of Definition~\ref{definitionWeightedCountingProblem} has only 2 terms), and given $\langle M,y\rangle$ consisting of a Turing machine $M$ and an input $y$, let $w(\langle M,y\rangle,0)=0$ if $M$ on input $y$ does not terminate, and $w(\langle M,y\rangle,0)=-2^{-t}$ if $M(y)$ terminates after $t$ steps; $w(\langle M,y\rangle,1)=0$ for all $\langle M,y\rangle$). Such a weight function $w(\langle M,y\rangle,u)$ can be approximated by a polynomial time computation $v$ which on input $(\langle M,y\rangle,0,b)$, with $b \in \mathbb{N}$, simulates $M$ on $y$ and if the simulation terminates in $t\leq b$ steps, it outputs $-2^{b-t}$, otherwise $v$ outputs $0$. The function $v/2^b$ equals $w(\langle M,y\rangle,u)$ if $u=1$, or if $u=0$ and $M(y)$ terminates within $b$ steps or does not terminate at all, while $v$ gives the desired approximation if $M(y)$ terminates in $t>b$ steps.
Therefore, we have constructed a problem in $\WCPQ$ such that $M$ does not terminate on input $y$ if and only if $f(\langle M,y\rangle) \geq 0$, and thus we could
decide (the complement of) the Halting Problem in $\WDPQ$.}
 \end{proof}

\begin{corollary}
PP is strictly contained in $\WDPQ$.
\end{corollary}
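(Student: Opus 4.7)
The plan is to establish both containment and strictness separately, each of which follows almost immediately from facts already in the excerpt.

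For the containment $\PP \subseteq \WDPQ$, I would invoke the identification $\PP = \WDP{0,1}$ that the authors state right after introducing the $\WDP{S}$ notation. Since $\{0,1\} \subseteq \mathbb{Q}$, the trivial inclusion $S_1 \subseteq S_2 \Rightarrow \WDP{S_1} \subseteq \WDP{S_2}$ (also noted in the excerpt) gives $\WDP{0,1} \subseteq \WDPQ$, hence $\PP \subseteq \WDPQ$. No further work is needed here.

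For strictness, I would use the standard fact that $\PP \subseteq \text{PSPACE}$, which is mentioned earlier in the excerpt (polynomial space can enumerate all assignments and count them). Every problem in PSPACE is decidable. By Theorem~\ref{thalt}, however, $\WDPQ$ decides the halting problem, which is undecidable. Therefore $\WDPQ \not\subseteq \PP$, and combined with the inclusion above the containment $\PP \subsetneq \WDPQ$ is strict.

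There is essentially no obstacle: the corollary is a one-line consequence of the previous theorem together with the already-noted identity $\PP = \WDP{0,1}$ and the monotonicity of $\WDP{\cdot}$ in the weight set. The only thing worth being explicit about is why $\PP$ is contained in the class of decidable problems, which is standard (via $\PP \subseteq \text{PSPACE}$) and already referenced in the paper.
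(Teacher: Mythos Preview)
Your proposal is correct and matches the paper's approach: the corollary is stated without proof immediately after Theorem~\ref{thalt}, with the intended argument being exactly what you wrote---containment via $\PP = \WDP{0,1} \subseteq \WDPQ$ from the monotonicity remark, and strictness because $\PP$ consists of decidable problems while $\WDPQ$ decides the halting problem. You have simply made explicit what the paper leaves as an immediate consequence.
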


Theorem~\ref{thalt} suggests that our definition gives too much power to $\WDPQ$ and $\WCPQ$ and $\WDPQ$ is possibly equal to $\WDPR$. Since we are interested in understanding the complexity of practical problems, as we will discuss in Section~\ref{furtherapp}, we decide to work with a restricted version of $\WDPQ$ and $\WCPQ$ from now on.

\begin{definition}
$\WCPQs$ is the subset of $\WCPQ$ for which the weighted counting problems have their function $w$ (as in Definition~\ref{definitionWeightedCountingProblem}) computable in polynomial time (in the size of its input). $\WDPQs$ is the associated decision version.
\end{definition}

In the following we further investigate the relations among the other classes of weighted counting problems and their corresponding decision versions. Before proceeding, we
must define the terminology of $\SharpP$-equivalence that will be used throughout the paper. This definition was necessary because $\SharpP$ is not closed with respect to
polynomial-time computations. This leads to several inconveniences. We make use of an adapted version of weighted reduction~\cite{DBLP:journals/jcss/BulatovDGJJR12}, as suggested in~\cite{2016arXiv161201120G}, but we allow for affine postprocessing of the output from the desired function. For other definitions of metric reductions, see for example~\cite{FALISZEWSKI2009101}.

\begin{definition}
(Adapted from \cite{FALISZEWSKI2009101}.)
A function $f: \{0,1\}^\star\rightarrow\mathbb{R}$ metric reduces to $g: \{0,1\}^\star\rightarrow\mathbb{R}$ if there are polynomial-time computable functions $h_1,h_2$ such that $\forall x\in \{0,1\}^\star: f(x)=h_1(x, g(h_2(x)))$. Problem $A$ metric reduces to problem $B$ if $A$ regards the computation of a function $f$ and $B$ regards the computation of a function $g$, as just defined.
\end{definition}

\begin{definition}\label{defff}
A problem $A$ is $\SharpP$-hard under metric reductions if
$\SharpP\subseteq \FP^{A[1]}$. Problem $A$ is said $\SharpP$-complete if $A$ is $\SharpP$-hard and $A\in\SharpP$.
\end{definition}

Note that we use a restricted version of hardness in Definition~\ref{defff} when compared to the definition in Valiant's seminal work~\cite{DBLP:journals/siamcomp/Valiant79}. There, an unlimited number of oracle calls are allowed, that is, {problem $y$ is $\SharpP$-hard if $\SharpP\subseteq \FP^y$}. This definition with unlimited calls allows $\PP$-complete problems such as {\it majority satisfiability} to be $\SharpP$-complete.
This is somewhat undesired, since the classes $\PP$ and $\SharpP$ are not known to have similar power. For instance, $\PP=\text{P}^{\PP[\log]}$~\cite{DBLP:journals/jcss/BeigelRS95,Fortnow19961}, while the corresponding question for $\SharpP$ is open, to the best of our knowledge. In particular, it is unknown whether $\FP^{\SharpP[1]}$ is strictly contained in $\FP^{\SharpP[2]}$~\cite{Green1995456,Regan92onthe}. As discussed earlier, $\#\mathrm{PH} \subseteq \FP^{\SharpP[1]}$, suggesting that limited calls to $\SharpP$ are more powerful than limited calls to $\PP$, while unlimited calls give us no distinction between them and $\FP^{\PP}=\FP^{\SharpP}$. We note that this distinction may have practical implications. For instance, the recently very active topic of {\it probabilistic logics} has seen some important problems falling into classes such as $\PP^{\text{NP}},\ldots,\PP^{\text{PH}}$~\cite{fabio2,fabio1}. All those classes are
confined between the limited and unlimited $\PP$ calls: $\PP=\text{P}^{\PP[\log]}\subseteq \PP^{\text{NP}}\subseteq \PP^{\text{PH}}\subseteq \text{P}^{\PP}$~\cite{DBLP:journals/siamcomp/Toda91}.

\begin{definition}
A function $f: \{0,1\}^\star\rightarrow\mathbb{R}$ affinely reduces to $g: \{0,1\}^\star\rightarrow\mathbb{R}$ if there are polynomial-time computable functions $h_1,h_2,h_3$ such that $\forall x\in \{0,1\}^\star: f(x)=h_1(x)\cdot g(h_2(x)) + h_3(x)$. Problem $A$ affinely reduces to problem $B$ if $A$ regards the computation of a function $f$ and $B$ regards the computation of a function $g$, as just defined.\label{def:affine}
\end{definition}

\begin{definition}
A problem $A$ is $\SharpP$-equivalent if it is $\SharpP$-hard under metric reductions
and can be affinely reduced to a problem in $\SharpP$.
\end{definition}

The concept of $\SharpP$-equivalence for problems relaxes the concept of $\SharpP$-completeness under  metric reductions
by introducing the affine reduction instead of actual membership. Arguably,
such concept gives a more useful relation between problems that are said to be $\SharpP$-equivalent when compared to $\SharpP$-completeness. Any
$\SharpP$-complete under  metric reduction problem is also $\SharpP$-equivalent (the inverse does not necessarily hold; for instance,
{\it permanent} of integer matrices is $\SharpP$-equivalent but is not $\SharpP$-complete, since it does not belong to $\SharpP$).
Actually, the derivations we have in this paper would work even if we restricted ourselves to affine reductions for showing hardness,
but metric reductions might be a reasonable compromise. \mg{We note that, by definition, any $\SharpP$-equivalent problem is a weighted counting problem, since the affine reduction can be moved inside/outside the summation of weights.}

\begin{definition}
A class $C$ is $\SharpP$-equivalent if any problem in $C$ can be reduced to a problem in $\SharpP$ via an affine reduction and any problem in $\SharpP$ can be reduced to a problem in $C$ via an affine reduction.
\end{definition}

The notion of class equivalence is commutative, and could be applied to any other functional complexity class. It gives a means to represent the strong relationship
between some classes, as we will see later on.
It is worth noting that the decision variants of problems in $\SharpP$-equivalent classes are in $\PP$.

We first show equality of the classes $\WCP{0,1}$ and $\WCPN$, as well as equality of the classes $\WCP{-1,0,1}$ and $\WCPZ$. This implies equality of the
corresponding classes of decision problems, namely of $\WDP{0,1}$ and $\WDPN$, and of $\WDP{-1,1}$, $\WDP{-1,0,1}$ and $\WDPZ$. These results are widely understood as known by
the community, but to the best of our knowledge they have never been explicitly stated. Since the range of the functions in $\WCP{-1,1}$ is not limited to non-negative integers,
as it is the case for the functions in $\WCP{0,1}$, these two classes cannot be equal. Nevertheless, it is possible to show that all these classes are
$\SharpP$-equivalent.

We later focus on weighted counting problems in $\WCPQs$, $\WCPQ$ and $\WCPR$ as well as on their corresponding decision variants. $\WCPQ$ and $\WCPR$ capture undecidable problems, as we have proven. It is immediate to show that $\WCPZ\subseteq\WCPQs$ and $\WDPZ\subseteq\WDPQs$.
Additionally, the size of the output of weighted counting problems belonging to $\WCPQs$ is not necessarily polynomially bounded by the input size. Therefore,
the inclusion of $\WCPZ$ in $\WCPQs$ is strict, and it is
generally not possible to give polynomial reductions from problems in $\WCPQs$ to any of the more restricted classes. As we will see in Section
\ref{section:weightedCounting:approximation}, it is still possible to solve problems in $\WCPR$
using a $\SharpP$-equivalent problem such that we lose only a small
additive approximation error. This result then implies that $\WDPQs$, $\WDPQ$ and $\WDPR$ are
actually PP as long as we focus on problems whose output size is
polynomially bounded in the input size.

We make use of the following property to show equality between $\WCP{0,1}$ and $\WCPN$. Given the weight function $w$ and
its approximation $v$ (Definition~\ref{definitionWeightedCountingProblem}), we can assure that the integer part of the weights can always be encoded using polynomially many bits. We then add this polynomial to the given
polynomial $p$ and construct a new weight function using only weights of $0$ and $1$, such that the overall sum does not change. We formalize these ideas in the proof of the
following theorem.

\begin{theorem}
$\WCP{0,1} = \WCPN$.
\end{theorem}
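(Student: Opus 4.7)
The inclusion $\WCP{0,1} \subseteq \WCPN$ is immediate since $\{0,1\} \subseteq \mathbb{N}$, so the task is to show $\WCPN \subseteq \WCP{0,1}$. The plan is the standard ``replace each path of weight $k$ by $k$ paths of weight $1$'' padding, but one first has to extract from Definition~\ref{definitionWeightedCountingProblem} that the weights are (i) exactly computable in polynomial time and (ii) polynomially bounded in magnitude.

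For step (i), fix an instance of $\WCPN$ with polynomial $p$, weight function $w$ ranging in $\mathbb{N}$, and polynomial-time approximation $v$. Taking $b=1$ in the approximation property yields $|w(x,u) - v(x,u,1)/2| \leq 1/2$, and since $w(x,u) \in \mathbb{N}$, the value $w(x,u)$ is uniquely determined as the natural number closest to $v(x,u,1)/2$; this gives exact polynomial-time computability of $w(x,u)$. For step (ii), the output of $v(x,u,1)$ has size polynomial in $|x|$ (here $|u|=p(|x|)$), so $w(x,u) \leq 2^{q(|x|)}$ for some polynomial $q$ that I can read off from the running time of $v$.

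Given this, I define a new weighted counting problem with polynomial $p'(n) = p(n) + q(n)$ and weight function $w': \{0,1\}^\star \times \{0,1\}^\star \to \{0,1\}$ by splitting each length-$p'(|x|)$ string as $\langle u, u' \rangle$ with $|u|=p(|x|)$ and $|u'|=q(|x|)$, and setting $w'(x, \langle u,u'\rangle) = 1$ if the natural number encoded by $u'$ is strictly less than $w(x,u)$, and $0$ otherwise. This $w'$ is polynomial-time computable (by step (i) together with a comparison of two polynomially-sized integers) and trivially satisfies the approximation condition of Definition~\ref{definitionWeightedCountingProblem} with $v'(x,\langle u,u'\rangle,b) = 2^b \cdot w'(x,\langle u,u'\rangle)$. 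Summing,
\[
\sum_{\langle u,u'\rangle} w'(x,\langle u,u'\rangle) \;=\; \sum_{u \in \{0,1\}^{p(|x|)}} w(x,u) \;=\; f(x),
\]
so the original problem lies in $\WCP{0,1}$.

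The only mildly delicate point is the unpacking of the approximation scheme to justify exact polynomial-time evaluation and a polynomial magnitude bound on $w$; the padding itself is routine once those are in hand. No step requires any deeper combinatorial or complexity-theoretic input.
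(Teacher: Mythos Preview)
Your proof is correct and follows essentially the same padding construction as the paper: split each certificate into $u_1$ and $u_2$ and set the new weight to $1$ exactly when $\#u_2 < w(x,u_1)$. You are in fact more explicit than the paper in extracting from the approximation scheme both the exact polynomial-time computability of $w$ and the polynomial magnitude bound, points the paper only asserts in passing.
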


\begin{proof}
  Since problems in $\WCP{0,1}$ are by definition in $\WCPN$, we only have to show the
  other inclusion. For a weighted counting problem in $\WCPN$ defined by a polynomial $p$ and a
  weight function $w: \{0,1\}^\star \times \{0,1\}^\star \rightarrow \mathbb{N}$, we construct the
  following weighted counting problem in $\WCP{0,1}$. Take the polynomial $p^\prime = p + q$, where
  $q$ is a polynomial bounding the number of bits required to encode the integer weights
  of our original problem. Define $w^\prime: \{0,1\}^\star \times \{0,1\}^\star \rightarrow
  \{0,1\}$ by \[w^\prime(x,u) = \begin{cases} 1 & \text{if } \#u_2 < w(x,u_1) \\ 0 &
    \text{else,}\end{cases}\] where $u_1$ are the first $p(|x|)$ bits of $u$ and $\#u_2$ is the
  number encoded by the last $q(|x|)$ bits of $u$. Since the two functions defined by the original
  weighted counting problem and the newly constructed weighted counting problem are identical, we
  conclude the proof. \end{proof}

\begin{theorem}
\label{firstResult}
$\WCP{-1,0,1} = \WCPZ$.
\end{theorem}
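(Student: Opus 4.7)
The proof mirrors the structure used to establish $\WCP{0,1}=\WCPN$, so the plan is again to pad each computational path with auxiliary bits that encode ``which unit of the integer weight this path represents,'' only now with signs. The trivial inclusion $\WCP{-1,0,1}\subseteq\WCPZ$ follows immediately from $\{-1,0,1\}\subseteq\mathbb{Z}$, so the content is the reverse direction.

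My first step is a preparatory observation about what we can compute about $w$ in polynomial time. Since $w(x,u_1)\in\mathbb{Z}$ and $w$ is approximable by $v$ with $|w(x,u_1)-v(x,u_1,b)/2^b|\le 2^{-b}$, taking $b=2$ and rounding $v(x,u_1,2)/4$ to the nearest integer recovers $w(x,u_1)$ exactly in polynomial time. The very existence of the polynomial-time approximation $v$ also bounds $|v(x,u_1,0)|$ by an exponential in some polynomial of $|x|$, so $|w(x,u_1)|$ has polynomially many bits; let $q$ be a polynomial upper-bounding this bit-length uniformly in $u_1$. Both the sign $\mathrm{sgn}(w(x,u_1))$ and the absolute value $|w(x,u_1)|$ are therefore computable in polynomial time from $(x,u_1)$.

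Now I define the new weighted counting problem in $\WCP{-1,0,1}$. Set $p'=p+q$, split each extended path $u$ of length $p'(|x|)$ as $u=(u_1,u_2)$ with $|u_1|=p(|x|)$ and $|u_2|=q(|x|)$, and let $\#u_2$ denote the integer encoded by $u_2$. Define
\[
w'(x,u)=\begin{cases} \mathrm{sgn}(w(x,u_1)) & \text{if } \#u_2 < |w(x,u_1)|,\\ 0 & \text{otherwise.}\end{cases}
\]
By construction $w'$ takes values in $\{-1,0,1\}$ and is polynomial-time computable, hence trivially satisfies the approximability condition of Definition~\ref{definitionWeightedCountingProblem}. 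Telescoping the inner sum gives $\sum_{u_2} w'(x,u_1,u_2)=\mathrm{sgn}(w(x,u_1))\cdot|w(x,u_1)|=w(x,u_1)$, so
\[
\sum_{u\in\{0,1\}^{p'(|x|)}} w'(x,u) = \sum_{u_1\in\{0,1\}^{p(|x|)}} w(x,u_1) = f(x),
\]
which places $f$ in $\WCP{-1,0,1}$ and completes the argument.

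The only potential obstacle is ensuring that the extra machinery of signed weights does not break the counting identity or the polynomial-time computability of $w'$, but the split into sign and magnitude handles both cleanly; nothing beyond the preceding theorem's bookkeeping is required.
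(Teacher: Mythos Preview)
Your proof is correct and follows exactly the approach the paper intends: the paper's own proof simply says ``same arguments as in the proof of the previous theorem, this time applied to positive and negative weights,'' and your sign/magnitude splitting of $w(x,u_1)$ into $\mathrm{sgn}(w(x,u_1))$ and $|w(x,u_1)|$ is precisely that adaptation. Your preparatory remark that integer-valued $w$ can be recovered exactly in polynomial time from the approximator $v$ is a welcome clarification the paper leaves implicit.
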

\begin{proof}
It follows from the same arguments as in the proof of the previous theorem, this time applied to positive and negative weights.
\end{proof}

Observe that $\WCP{-1,1}$ corresponds to GapP functions on normal Turing Machines (the computational tree is completely binary) whereas $\WCP{-1,0,1}=\WCPZ$ corresponds to GapP functions. Since GapP strictly contains normal GapP functions, we immediately have $\WCP{-1,1} \subset \WCP{-1,0,1}$.

The classes $\SharpP = \WCP{0,1} = \WCPN$, $\WCP{-1,1}$ and $\WCP{-1,0,1} = \WCPZ$ cannot be equal, since their ranges of output are different. Nevertheless, there are affine reductions from problems of each of these classes to the other classes.

\begin{theorem}
\label{theoremEquivalent} $\WCP{-1,1}$ and $\WCP{-1,0,1}$ (which equals to $\WCPZ$) are $\SharpP$-equivalent.
\end{theorem}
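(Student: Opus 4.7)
The plan is to show that each of $\WCP{-1,1}$ and $\WCPZ$ is individually $\SharpP$-equivalent. This requires, for each class $C \in \{\WCP{-1,1},\WCPZ\}$, exhibiting affine reductions in both directions between $\SharpP$ and $C$, i.e.\ polynomial-time computable $h_1,h_2,h_3$ realising $f(x)=h_1(x)\cdot g(h_2(x))+h_3(x)$.

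For the direction reducing $\SharpP$ into the target class, the case $C=\WCPZ$ is immediate: since $\{0,1\}\subseteq\{-1,0,1\}$, the identity on instances with $h_1\equiv 1$, $h_3\equiv 0$ suffices. The case $C=\WCP{-1,1}$ is more delicate because a zero-weight path has no direct encoding in $\{-1,+1\}$. I would bypass this by appending one extra path bit $b$ to the polynomial $p$ and defining a new $\{-1,+1\}$-valued weight function $w'(x,ub)=+1$ whenever $T(x,u)=1$ or $b=0$, and $w'(x,ub)=-1$ otherwise. Each $u$ with $T(x,u)=1$ then contributes $+2$ to the new sum, while each $u$ with $T(x,u)=0$ contributes $+1+(-1)=0$; the new $\WCP{-1,1}$ function equals $2f(x)$, so $h_1=1/2$, $h_3=0$ closes this direction.

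For the reverse direction, starting from $C=\WCP{-1,1}$, let $P(x)$ count the paths of weight $+1$ and $N(x)$ those of weight $-1$. Clearly $P\in\SharpP$, and since every path carries weight in $\{-1,+1\}$ we have $P(x)+N(x)=2^{p(|x|)}$, giving $f(x)=P(x)-N(x)=2P(x)-2^{p(|x|)}$; the constant $2^{p(|x|)}$ has polynomially many bits and is therefore polynomial-time computable from $x$, so this is a valid affine reduction. For $C=\WCPZ$ the naive expression $f=P-N$ involves two $\SharpP$ functions rather than one, so I would fuse them by appending one extra path bit $b$ and defining a predicate $T'(x,ub)$ that accepts iff either $b=0$ and $w(x,u)=+1$, or $b=1$ and $w(x,u)\in\{0,+1\}$. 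The induced $\SharpP$ function counts $2P(x)+Z(x)$ paths, where $Z$ tracks zero-weight paths; using $P+N+Z=2^{p(|x|)}$ this equals $f(x)+2^{p(|x|)}$, whence $f(x)=g(h_2(x))-2^{p(|x|)}$.

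The main conceptual obstacle is the direction $\WCPZ\to\SharpP$, since $\SharpP$ is not known to be closed under subtraction and one cannot literally compute $P-N$ inside $\SharpP$. The trick above resolves this by collapsing the positive and non-negative counts into a single non-negative quantity whose gap from $f(x)$ is the purely deterministic value $2^{p(|x|)}$, which is then absorbed into the affine offset $h_3$. All four constructions use only polynomial-time preprocessing of weight functions and predicates that were polynomial-time computable by hypothesis, hence they constitute valid affine reductions in the sense of the paper, and together they establish both claimed equivalences.
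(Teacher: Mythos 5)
Your proof is correct and follows essentially the same route as the paper: affine shifts of the weights (e.g.\ mapping $\{0,1\}$ into $\{-1,1\}$ and absorbing the resulting offset $2^{p(|x|)}$ into the affine transformation). The only differences are organizational: the paper gives just two reductions ($\SharpP\to\WCP{-1,1}$ and $\WCP{-1,0,1}\to\WCPN$) and lets the inclusion $\WCP{-1,1}\subset\WCP{-1,0,1}$ together with the previously proved equality $\WCP{0,1}=\WCPN$ supply the remaining directions, whereas you spell out all four reductions explicitly and inline the extra-path-bit counting trick that the paper delegates to its proof of $\WCP{0,1}=\WCPN$.
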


\begin{proof}
Since  $\WCP{-1,1}\subset \WCP{-1,0,1}$, it is enough to show reductions from $\SharpP$ to $\WCP{-1,1}$ and from $\WCP{-1,0,1}$ to $\SharpP$.
Any problem in $\WCP{0,1}$ (which equals $\SharpP$) can be affinely reduced to $\WCP{-1,1}$
by multiplying all its weights by 2 and then subtracting 1. After solving the problem in $\WCP{-1,1}$, the result is retrieved back using the inverse transformation, that is,
dividing the output by 2 and summing $2^{p(|x|)-1}$.
Any problem in $\WCP{-1,0,1}$ can be affinely reduced to $\WCPN$ (which equals $\SharpP$). For that we can simply add a value of $1$ to every weight of the given $\WCP{-1,0,1}$ problem, obtaining a problem in $\WCPN$ with weights in
$\{0,1,2\}$. We can then retrieve the result of the original problem in $\WCP{-1,0,1}$ by solving this new problem and then subtracting $2^{p(|x|)}$ from the resulting value.
\end{proof}

This relation of equivalence clarifies that, at least from a computational viewpoint, problems that are $\SharpP$-equivalent are not harder to solve than those in $\SharpP$, as a single evaluation of a $\SharpP$ function plus an affine transformation gives us the desired result. This fact might not be so surprising, but it seems that the literature still has not taken it for granted. For instance, Chapter 17.3 of~\cite{DBLP:books/daglib/0023084} (p. 347) suggests to the reader an approach to solve {\it permanent} of integer matrices using unlimited calls to a $\SharpP$ oracle (more precisely, a $\#\text{SAT}$ one). It also suggests two calls to solve matrices with entries in $\{-1,0,1\}$. With the weighted counting framework, it becomes clear that a single call is enough, even if entries would be drawn from rational numbers (as we will see later on -- the key is that the output of permanent is always polynomially bounded in the input size), and that the problem is in $\FP^{\SharpP[1]}$, perhaps also in some lower class (any $\SharpP$-equivalent problem is trivially in $\FP^{\SharpP[1]}$).

In the remainder of this section, we show a proof that for weighted counting problems in $\WCPQs$ the size of the output, even encoded as a fraction, is not necessarily bounded
by a polynomial in the input size. This proves that $\WCPQs$ cannot be $\SharpP$-equivalent. 

\begin{theorem} \label{theoremNOTEquivalent} $\WCPQs$ is \textbf{not} $\SharpP$-equivalent and is {\bf not} a subset of $\FP^{\SharpP}$.
\end{theorem}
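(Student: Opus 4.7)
The plan is to leverage the explicit rational-valued problem just constructed, together with the Lemma, and to contrast the doubly exponential bit-length of its output with the polynomial bit-length achievable either by an affine reduction from a $\SharpP$ function or by a polynomial-time oracle machine with a $\SharpP$ oracle. As a preliminary step I would verify that the weight function $w(x,u)=1/(\#u+1)$ when $\#u+1\in\mathbb{P}$ and $\#u<x$, and $w(x,u)=0$ otherwise, is exactly computable in time polynomial in $|x|+|u|$: primality is in $\mathrm{P}$, and the rational $1/(\#u+1)$ has a polynomial-size encoding in $u$. Hence the problem genuinely belongs to $\WCPQs$.

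Next I would turn the authors' remark that the displayed fraction ``cannot be simplified'' into a quantitative lower bound on the bit-length of $f(x)$. For each prime $p\leq x$, exactly one of the summands forming the numerator of $f(x)$ lacks the factor $p$, so $p$ does not divide the numerator. Thus $f(x)$ in lowest terms has denominator $D=\prod_{p\in\mathbb{P},\,p\leq x}p$, and by the Lemma $D\geq\prod_{p\in\mathbb{P},\,x/e<p\leq x}p\geq(x/e)^{x/(3\ln x)}$. Taking logarithms, the bit-length of $f(x)$ is $\Omega(x/\log x)$, which is doubly exponential in the input size $n=\lceil\log x\rceil$.

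Finally I would contrast this with the polynomial output-size of both target classes. Any $g\in\SharpP$ takes values in $\{0,1,\ldots,2^{p(|x|)}\}$ and therefore has at most $p(|x|)+1$ bits. If $f$ were $\SharpP$-equivalent via $f(x)=h_1(x)\cdot g(h_2(x))+h_3(x)$ with polynomial-time computable $h_1,h_2,h_3$, then $h_1(x)$ and $h_3(x)$ would be polynomial-size rationals, so the right-hand side is a polynomial-size rational, contradicting the doubly exponential lower bound. Likewise, any function in $\text{FP}^{\SharpP}$ is computed by a polynomial-time oracle Turing machine that can emit at most polynomially many output bits, again contradicting the lower bound. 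There is no deep obstacle; the only point that warrants explicit care is the coprimality argument in the previous paragraph, which is what makes the size lower bound robust to any normalization of the fractional representation.
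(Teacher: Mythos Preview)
Your proposal is correct and follows essentially the same approach as the paper: exhibit a problem in $\WCPQs$ whose output, in lowest terms, has a denominator requiring super-polynomially many bits, and conclude by noting that both an affine reduction to $\SharpP$ and membership in $\text{FP}^{\SharpP}$ force polynomial output size. One small slip: the bit-length $\Omega(x/\log x)$ is singly exponential in $n=\lceil\log x\rceil$, not doubly exponential (it is the value $(x/e)^{x/(3\ln x)}$ itself that is doubly exponential), but this does not affect the argument.
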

\begin{proof}
We focus on the following simple problem. According to Definition \ref{definitionWeightedCountingProblem}, take the polynomial $p$ to be the identity function and take the weight function $w: \{0,1\}^\star \times
\{0,1\}^\star \rightarrow \mathbb{Q}$ such that
\[
 w(x,u) = \begin{cases} 1 / (\#u + 1), & \text{if } \#u + 1 \in \mathbb{P} \text{ and } \#u < x \\ 0, & \text{otherwise.} \end{cases}
\]
\noindent
Here $\#u$ is the number represented by the bitstring $u$ and $\mathbb{P}$ is the set of prime numbers.  Note that this function fulfills the approximation properties of
Definition \ref{definitionWeightedCountingProblem}. For a given input $x \in \mathbb{N}$ of size $\lceil\log x\rceil$, the task is to compute the value
\[
f(x) = \sum_{\substack{p \in \mathbb{P} \\ p \leq x}} 1 / p
= \left( \sum_{\substack{p \in \mathbb{P} \\ p \leq x}} \enspace \prod_{\substack{q \in \mathbb{P} \\ q \leq x, q \neq p}} q \right) / \prod_{\substack{p \in \mathbb{P} \\ p \leq x}} p \text{.}
\]
\noindent
Note that this fraction cannot be simplified.  The only numbers that divide the denominator are prime numbers of value at most $x$. Each of these prime numbers divides all
parts of the sum except one and therefore it does not divide the whole sum.

In order to show that the result cannot be represented efficiently with respect to the input size, we show that it is not possible to represent the denominator efficiently with respect
to the input size. For this purpose, we present a lower bound for the product of all prime numbers between $x/e$ and $x$ for sufficiently large $x$. Using this lower bound we
can then bound the whole product appearing in the denominator of the output from below.

\begin{claim}
The number of prime numbers between $x/e$ and $x$ is bounded away from below by $x / (3 \ln x)$ for sufficiently large $x$.\label{claim1}
\end{claim}
\noindent \textit{Proof of Claim.}
Let $\pi(x)$ denote the prime-counting function. Due to \cite{rosser1962approximate}, we have for $x \geq 17$ that

\[\pi(x) > \frac{x}{\ln x} \enspace\enspace \text{ and } \enspace\enspace \pi(x/e) < 1.25506 \frac{x/e}{\ln (x/e)} \text{.}\]

For the number of prime numbers between $x/e$ and $x$ we then have
\begin{eqnarray*}
\pi(x) - \pi(x/e) > \frac{x}{\ln x} - 1.25506 \frac{x/e}{\ln (x/e)}
> \frac{x}{\ln x} - 1.25506 \frac{x/e}{3/4 \cdot \ln x}
> \frac{1}{3} \frac{x}{\ln x}.
\end{eqnarray*}
Hence this claim holds for sufficiently large $x$.

Claim~\ref{claim1} means that the product of all the prime numbers between $x/e$ and $x$ is bounded from below by $(x/e)^{x / (3 \ln x)}$. This value is doubly exponential in
the input size of $\lceil\log x\rceil$ and a representation of this value would require exponentially many bits in the input size. Note that the result computed by a
conventional counting problem or a counting problem using integer weights can always be represented using only polynomially many bits in the input size. Hence, a polynomial
reduction between these classes cannot exist in general.
In conclusion, any problem in a $\SharpP$-equivalent class, or in $\FP^{\SharpP}$, has output size bounded by a polynomial of the input size. $\WCPQs$ contains problems which do not have such bounded output size, so the result follows.
\end{proof}

\subsection{Closure properties}

Like conventional counting and $\GapP$, weighted counting comes with many useful closure properties. By definition, weighted counting is closed under addition and multiplication
with a constant. Additionally, it is closed under uniform sums of exponentially many terms and uniform products of polynomially many terms. Moreover, weighted counting is
closed under finite sums and products. These properties basically follow from the definition of weighted counting and the proofs are analogous to those for conventional
counting and $\GapP$ \cite{DBLP:conf/istcs/Beigel97}. We summarize these properties in the following proposition. \mg{We note that in the following proposition we do not attempt to be exhaustive and to cover all possible cases, that is, there might be other cases not directly captured by the proposition below where $f_i$ is a weighted counting problem. But the cases below are enough for the purposes of this work.}

\begin{proposition}
\label{proposition:closure} \mg{Let $f_1, \ldots, f_k$, $k \in \mathbb{N}$ be weighted counting problems and $c$ be any efficiently computable number (as in Definition \ref{definitionWeightedCountingProblem})}. Then the following functions $g$ can be computed by a weighted counting problem.
\begin{itemize}
 \item[(a)] $g(x) = f_1(x) + c$, where $c$ is a constant.
 \item[(b)] $g(x) = c \cdot f_1(x)$, where $c$ is a constant (and therefore in particular $-f_1(x)$).
 \item[(c)] $g(x) = \sum_{i=1}^{k} f_i(x)$.
 \item[(d)] $g(x) = \prod_{i=1}^{k} f_i(x)$.
 \item[(e)] $g(x) = \sum_{y \in \{0,1\}^{p(|x|)}} f_1(\langle x, y \rangle)$, where $p$ is a polynomial.
 \item[(f)] $g(x) = \prod_{y \in \{0,1\}^{\lceil \log p(|x|) \rceil}} f_1(\langle x, y \rangle)$, where $p$ is a polynomial.
\end{itemize}
\end{proposition}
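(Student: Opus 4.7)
The plan is to mirror the closure-property proofs for $\SharpP$ and $\GapP$ (cf.~\cite{DBLP:conf/istcs/Beigel97}) while taking extra care that every constructed weight function still satisfies the polynomial-time approximability condition of Definition~\ref{definitionWeightedCountingProblem}. I would treat (e) and (f) first, since they are the content-bearing cases; (c) and (d) will then follow as finite-range specializations, and (a), (b) are essentially bookkeeping.

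For (e), let $(p_1,w_1,v_1)$ witness $f_1$. Padding $\langle x,y\rangle$ so that its total length depends only on $|x|$, set $p_g(|x|)=p(|x|)+p_1(|\langle x,0^{p(|x|)}\rangle|)$ and split each path $z\in\{0,1\}^{p_g(|x|)}$ as $z=yu$ with $|y|=p(|x|)$. Define $w_g(x,z)=w_1(\langle x,y\rangle,u)$ and derive $v_g$ directly from $v_1$; then $\sum_{z}w_g(x,z)=\sum_{y}\sum_{u}w_1(\langle x,y\rangle,u)=g(x)$. For (f), distribute the product to obtain $\prod_{y\leq p(|x|)}\sum_{u_y}w_1(\langle x,y\rangle,u_y)=\sum_{(u_y)}\prod_y w_1(\langle x,y\rangle,u_y)$. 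Take $p_g$ to be the sum of the individual path-length bounds (still polynomial, since there are polynomially many factors each polynomially long), concatenate the $u_y$'s into a single path $z$, and set $w_g(x,z)=\prod_y w_1(\langle x,y\rangle,u_y)$.

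For (c) and (d), since $k$ is constant I would prefix each path with $\lceil\log_2 k\rceil$ index bits and pad the remaining positions to a uniform length $P=\max_i p_i(|x|)$. For addition, on an index $i\leq k$ let the weight equal $w_i(x,u)$ when the padding bits are all zero and $0$ otherwise; indices $i>k$ get weight $0$. For multiplication, concatenate one sub-path per factor as in (f). Part (a) is handled by keeping $p_1$ and setting $w'(x,u)=w_1(x,u)+c/2^{p_1(|x|)}$, which sums to $f_1(x)+c$ and remains approximable provided $c$ is efficiently approximable (which is implicit in calling $c$ a constant here). Part (b) is immediate from $w'(x,u)=c\cdot w_1(x,u)$.

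The main obstacle is verifying approximability in (d) and especially (f). Given the $v_i$'s, I must exhibit a polynomial-time $v_g$ with $|w_g(x,z)-v_g(x,z,b)/2^b|\leq 2^{-b}$. Using the crude bound $|w_i(x,u)|\leq|v_i(x,u,0)|+1\leq 2^{q(|x|,|u|)}$ for some polynomial $q$ (since $v_i$ runs in polynomial time and its output is polynomially long), each factor is bounded by $2^{q}$, so the full product is bounded by $2^{Q}$ for a polynomial $Q$ depending on $q$ and on the polynomial number of factors. Standard error propagation then shows that approximating each factor to precision $2^{-(b+Q+\lceil\log K\rceil)}$, where $K$ is the number of factors, suffices: the total error in a $K$-factor product is at most $K$ times the maximum single-factor absolute error multiplied by the bound on a $(K-1)$-fold partial product. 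Because the refined precision remains polynomial in $|x|$, $|z|$, and $b$, the composite $v_g$ is still polynomial-time computable, which completes the verification.
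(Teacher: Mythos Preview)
Your argument is correct and, structurally, runs in the opposite direction from the paper's: the paper establishes the finite cases (c) and (d) first by combining weight functions \emph{pointwise} on a common path space (setting $w'(x,u)=w_1(x,u)+w_2(x,u)$, respectively $w'(x,u)=w_1(x,u)\cdot w_2(x,u)$) and then waves (e), (f) through as ``immediate'' extensions, whereas you build (e) and (f) first by enlarging the path space (prefixing $y$, concatenating the $u_y$'s) and recover (c), (d) as fixed-$k$ specializations via index bits and concatenation. Your route is the standard $\GapP$-style construction and is in fact the one that is needed for the product: the paper's pointwise product $\sum_u w_1(x,u)w_2(x,u)$ does not equal $f_1(x)f_2(x)$ in general, so your path-concatenation argument for (d)/(f) is what actually works. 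Similarly, your handling of (a) by distributing $c/2^{p_1(|x|)}$ over all paths gives the right sum, while the paper's phrasing (adding $c$ to every weight) over-counts by a factor $2^{p_1(|x|)}$. The genuine content of your write-up is the approximability check for (f): bounding each factor by $2^{q}$ via the polynomial output length of $v_1(\cdot,\cdot,0)$ and then propagating the multiplicative error with per-factor precision $2^{-(b+Q+\lceil\log K\rceil)}$ is exactly the right bookkeeping, and doing it once for (f) covers (d) as well.
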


\begin{proof}
These properties follow from the definition of weighted counting and the proofs are analogous to those for conventional counting and $\GapP$
\cite{DBLP:conf/istcs/Beigel97}. In order to prove that in each case $g(x)$ can be computed by some weighted counting function, we need to specify the functions $w$ and $v$ (serving as the approximation of $w$) in accordance to Definition \ref{definitionWeightedCountingProblem}.
\mg{We will start with cases (c)-(d): since $c$ is an efficiently computable number, and thus a weighted counting problem, case (a) follows directly from case (c) and (b) follows from (d).}

The general proof idea is to write $g(x)$ in such a way that it is obvious that it constitutes a weighted counting problem, according to Definition \ref{definitionWeightedCountingProblem}. In the two summation cases (c) and (e), the \mg{weights and approximation function} will actually be transferred in a \mg{direct} fashion from the original weight \mg{functions} $w_i$ and their approximators $v_i$.

We start with case (c). Let $f_i$, $i \in \{1,\ldots,k\}$, be a weighted counting problem. According to Definition \ref{definitionWeightedCountingProblem}, \mg{each} such $f_i$ is characterized by a polynomial $p_i$, a weight function $w_i: \{0,1\}^* \times\{0,1\}^*\rightarrow \mathbb{R}$ and the polynomial time approximator $v_i: \{0,1\}^* \times \{0,1\}^*\times \mathbb{N} \rightarrow \mathbb{Z}$. We want to show that $g(x) = \sum_{i=1}^k f_i(x)$ is a weighted counting problem. Firstly, we choose a new polynomial $q$ such that $q(|x|) \geq p_i(|x|) + \lceil \log k \rceil$ for each $i$ and let $\kappa = \lceil \log k \rceil$. Trivially, such a polynomial $q$ exists. Now we have that

\begin{eqnarray*}
g(x) & = & \sum_{i=1}^k f_i(x) \\
& = & \sum_{i=1}^k \Bigg( \sum_{u \in\{0,1\}^{p_i(|x|)}} w_i(x,u) \Bigg) \\
& = & \sum_{i=1}^k \Bigg( \sum_{u \in\{0,1\}^{q(|x|) - \kappa}} \left\{ \begin{array}{cl}
w_i(x,u), & \text{if } \#u < 2^{p_i(|x|)} \\
0, & \text{otherwise}
\end{array} \right.    \Bigg) \\
& = &\sum_{\substack{u_1 \in\{0,1\}^{\kappa} \\ u_2 \in \{0,1\}^{q(|x|) - \kappa}}}
\left\{ \begin{array}{cl}
w_{\#u_1}(x,u_2), & \text{if } \#u_1 \in \{1, \dots, k\} \text{ and } \#u_2 < 2^{p_{\#u_1}(|x|)} \\
0, & \text{otherwise}
\end{array} \right.
\end{eqnarray*}

\noindent
According to Definition \ref{definitionWeightedCountingProblem} this is obviously a weighted counting problem, where each term is approximated by using the original approximation function and accuracy as is.

The proof of case (d) follows the same pattern. We write $g(x)$ as
\begin{eqnarray*}
g(x) & = & \prod_{i=1}^k f_i(x) \\
& = & \prod_{i=1}^k \Bigg( \sum_{u \in \{0,1\}^{p_i(|x|)}} w_i(x,u) \Bigg) \\
& = & \sum_{\substack{u_1 \in \{0,1\}^{p_1(|x|)} \\ \vdots \\ u_k \in  \{0,1\}^{p_k(|x|)}}} \Bigg(\prod_{i=1}^k w_i(x,u_i) \Bigg) \\
& = & \sum_{u \in \{0,1\}^{\sum_{i=1}^k p_i(n)}} w(x,u)\, ,
\end{eqnarray*}
\noindent \mg{where
$w(x,u) = \prod_{i=1}^k w_i(x,u_i)$, with $u=\langle u_1, u_2, \dots, u_k \rangle$ and $u_i \in \{0,1\}^{p_i(|x|)}$, $i \in \{1,\ldots,k\}$. Now we will define the approximator function:
\[
v(x,u, b) = v(x,\langle u_1, u_2, \dots, u_k \rangle, b) = \lfloor 2^b \cdot \prod_{i=1}^k \frac{v_i(x,u_i, b+1+k+\xi_i)}{2^{b+1+k+\xi_i}} \rceil = \lfloor 2^b \cdot v'(x,u, b) \rceil\, ,
\]
\noindent where
$\lfloor r \rceil$ means the unique integer $z$ such that $z - 1/2 < r \leq z + 1/2$ (that is, the nearest integer), and
$\xi_i=\sum_{j\in\{1,\ldots,k\} \setminus \{i\}} \text{ib}_j(x,u_j)$, with $\text{ib}_j(x,u_j)=\lceil \log(1+v_j(x,u_j,0))\rceil$, which guarantees that $2^{\text{ib}_j(x,u_j)} \geq w_j(x,u_j)$.}

\mg{Firstly note that the time requirements of the function $v$ are trivially polynomially bounded on $|x|, |u|$ and $b$ ($k$ could be even polynomial in $|x|$). In order to prove that the approximator correctly works, we will first prove that the term $v'(x,u, b)$ (potentially a fractional number) is very close (within $1/2^{b+1}$) to the true weight $w(x,u)$. By definition, each call to the individual function $v_i(x, u_i, b+1+k+\xi_i)$ in $v'(x,u, b)$ can be used to approximate the term $w_i(x,u_i)$ such that
\[
  \frac{v_i(x,u_i,b+1+k+\xi_i)}{2^{b+1+k+\xi_i}} = w_i(x,u_i) + \alpha_i\, ,
\]
\noindent
where $\alpha_i \in[-2^{-(b+1+k\xi_i)}, 2^{-(b+1+k+\xi_i)}]$. Then, we have that
\begin{eqnarray*}
v'(x,u, b) &=& \prod_{i=1}^k \frac{v_i(x,u_i, b+1+k+\xi_i)}{2^{b+1+k+\xi_i}} \\
           &=& \prod_{i=1}^k \Big( w_i(x,u_i) + \alpha_i \Big)\\
           &=&  w(x,u) + \sum_{J_1, J_2} \Bigg( \prod_{j \in J_1 }w_j(x,u_j) \Bigg)  \Bigg( \prod_{i \in J_2} \alpha_i \Bigg)\, ,
\end{eqnarray*}
\noindent
where $J_1,J_2$, with $J_2\neq\emptyset$, form a partition of $\{1,\ldots,k\}$.
Each of these $2^k-1$ terms indexed by $J_1,J_2$ has value in the interval  $[-2^{-(b+1+k)}, 2^{-(b+1+k)}]$.  This is because $J_2$ contains at least one index $i$ and thus at least the factor $\alpha_i$ which suffices to bring the term to such interval, since in this case $|\prod_{j \in J_1 }w_j(x,u_j)| \leq 2^{\xi_i}$. So, the absolute value of their sum is at most $2^k 2^{-(b+1+k)} = 2^{-(b+1)}$. Therefore, $|w(x,u) - v'(x,u, b)| \leq 2^{-(b+1)}$. Finally,  $v(x,u, b)$ can be written as $2^bv'(x,u, b)+\alpha$, with $\alpha\in [-1/2,1/2]$, and thus
\[
  \left|w(x,u)-\frac{v(x,u,b)}{2^b}\right| = \left|w(x,u) - \frac{2^bv'(x,u, b)+\alpha}{2^b}\right|\leq \left|w(x,u)-v'(x,u,b)\right|+\left|\frac{-\alpha}{2^b}\right|\leq 2^{-b}\, .
\]
}

\medskip
Now, we move on to case (e): $g(x) = \sum_{y \in \{0,1\}^{p(|x|)}} f_1(\langle x,y \rangle)$. This case is of particular interest since it involves a summation of exponentially many terms of a single weighted counting problem $f_1$ characterized by its polynomial $p_1$, weight function $w_1$ and approximation function $v_1$, in accordance to Definition \ref{definitionWeightedCountingProblem}. As before, we have that

\begin{eqnarray*}
g(x) & = & \sum_{y \in \{0,1\}^{p(|x|)}} f_1(\langle x,y \rangle) \\
& = & \sum_{y \in \{0,1\}^{p(|x|)}}  \Bigg(\sum_{u \in \{0,1\}^{p_1(|x|+|y|)}} w_1(\langle x,y \rangle, u) \Bigg) \\
& = & \sum_{\substack{ y \in \{0,1\}^{p(|x|)} \\ u \in \{0,1\}^{p_1(|x| + p(|x|))}}} w_1(\langle x,y \rangle, u) = \sum_{u \in \{0,1\}^{p(|x|) + p_1(|x|+p(|x|))}} w(x,u)
\end{eqnarray*}

\noindent where in the above $w(x,u) = w_1(\langle x,u_1 \rangle, u_2)$ with $u_1 = $ first $p(|x|)$ bits of $u$ and $u_2 = $ the remaining $p_1(|x|+p(|x|))$ bits of $u$. The approximation of $w$ follows immediately from the approximation of $w_1$ by $v_1$.

Finally, we move to case (f) where we have a product of polynomially many related counting functions. Let $f_1$ with all properties as in case (e) and we have

\begin{eqnarray*}
g(x) & = & \prod_{y \in \{0,1\}^{\lceil \log p(|x|) \rceil}} f_1(\langle x,y \rangle) \\
& = & \prod_{y \in \{0,1\}^{\lceil \log p(|x|) \rceil}} \Bigg( \sum_{u \in \{0,1\}^{p_1(|x|+|y|)}} w_1(\langle x,y \rangle, u)  \Bigg) \\
& = & \sum_{\substack{u_1 \in \{0,1\}^{p_1(|x|+|y|)} \\ \vdots \\ u_{2^{|y|}} \in \{0,1\}^{p_1(|x|+|y|)}}} \Bigg( \underbrace{\prod_{y \in \{0,1\}^{\lceil \log p(|x|) \rceil}} w_1(\langle x,y \rangle, u_y )}_{w(x,u)} \Bigg)\, ,
\end{eqnarray*}
\noindent with $w(x,u)=w(x,\langle u_1,\ldots,u_{2^{|y|}}\rangle)$. As in case (d), it follows that $g(x)$ is a weighted counting problem. We again need to control the precision (the third argument in the approximator function $v_1$) but identical arguments as in case (d) suffice for our purposes.
\end{proof}

Combining these results we can show that any multivariate polynomial in polynomially many variables (given as uniform weighted counting problems) with a degree bounded by a
polynomial and with coefficients which are uniformly computed by another weighted counting problem is itself a weighted counting problem. The following theorem formalizes this
statement.

\begin{theorem}
\label{theorem:MultivariatePolynomial} Let $f$ and $c$ be functions computed by weighted counting problems and let $q$ and $r$ be polynomials. Define the function $p$ for a
given input $x \in \{0,1\}^n$ as
\[
 p(x) = \sum_{(e_1,\ldots,e_{q(n)}) \in \{0,1,\ldots,r(n)\}^{q(n)}} c(\langle x,e_1,\ldots,e_{q(n)} \rangle) f(\langle x,1 \rangle)^{e_1}  \cdots f(\langle x,q(n) \rangle)^{e_{q(n)}} \text{.}
\]
Then $p$ can be computed by a weighted counting problem.
\end{theorem}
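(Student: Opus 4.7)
My plan is to build $p(x)$ up from $f$ and $c$ by repeated applications of Proposition~\ref{proposition:closure}, working from the innermost subexpression outwards. The four ingredients I will need are: the closure under uniform polynomial products (f), under uniform exponential sums (e), under finite products (d) and finite sums (c), plus the fact that constants and indicator-type predicates are trivially weighted counting.

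First I would handle the powers $f(\langle x,i\rangle)^{e_i}$, which are the only non-linear elements of the expression and which depend on the running exponent $e_i \in \{0,1,\ldots,r(n)\}$. I would define an auxiliary function
\[
 h(\langle x,i,e,j\rangle) \;=\; \begin{cases} f(\langle x,i\rangle) & \text{if } j \le e, \\ 1 & \text{otherwise,} \end{cases}
\]
and argue that $h$ is a weighted counting function: the predicate ``$j \le e$'' is polynomial-time computable on the index part of the input, so I can split $h$ into two weighted counting pieces (the one obtained from $f$ by zeroing out paths whenever $j>e$, and the constant $1$ zeroed out whenever $j\le e$) and add them via closure (c). Then closure (f) gives $f(\langle x,i\rangle)^{e_i} = \prod_{j=1}^{r(n)} h(\langle x,i,e_i,j\rangle)$ as a weighted counting function indexed by $x,i,e_i$.

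Next I would form the monomial $M(\langle x,e_1,\ldots,e_{q(n)}\rangle) = \prod_{i=1}^{q(n)} f(\langle x,i\rangle)^{e_i}$ by another application of (f) across $i=1,\ldots,q(n)$, and multiply by the coefficient $c(\langle x,e_1,\ldots,e_{q(n)}\rangle)$ using (d). So, for every fixed exponent tuple, the term of the sum is a weighted counting function of $\langle x,e_1,\ldots,e_{q(n)}\rangle$.

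Finally, to assemble $p(x)$ I would encode a tuple $(e_1,\ldots,e_{q(n)}) \in \{0,\ldots,r(n)\}^{q(n)}$ as a bitstring $y$ of length $q(n)\lceil \log_2(r(n)+1)\rceil$, which is polynomial in $n$, and apply closure (e) to perform the summation over $y$. The main obstacle is the mismatch between the index set in the theorem, $\{0,\ldots,r(n)\}^{q(n)}$, and the uniform Boolean cube $\{0,1\}^{p(|x|)}$ that (e) sums over; I would handle this by defining the weight of any $y$ that decodes to a tuple containing some $e_i > r(n)$ to be $0$. Decoding and validity-checking are polynomial-time, so this does not leave the weighted counting framework, and the remaining (valid) indices contribute exactly the terms of $p(x)$. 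Combining the chain yields $p$ as a weighted counting function, as required.
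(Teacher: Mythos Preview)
Your proposal is correct and is precisely a detailed spelling-out of the paper's own proof, which consists of the single sentence ``This result follows from an application of Proposition~\ref{proposition:closure}.'' You have supplied exactly the chain of closure applications (powers via (f), monomials via (f), coefficient via (d), outer sum via (e) with zero-padding for invalid exponent tuples) that the paper leaves implicit.
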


\begin{proof}
This result follows from Proposition \ref{proposition:closure}.
\end{proof}

We make use of this result in the next section, which deals with the approximation of weighted counting problems by conventional counting problems.

\subsection{Approximation}
\label{section:weightedCounting:approximation}

In the following we show that the approximation of general weighted counting problems up to a small additive approximation error is itself a $\SharpP$-equivalent
problem.

\begin{theorem} \label{theorem:Approximation}
\mg{Given $b \in \mathbb{N}$, approximating the output value of problems in  $\WCPR$ up to an additive approximation error of $2^{-b}$ is $\SharpP$-equivalent.}
\end{theorem}

\begin{proof}
\mg{Approximating problems in $\WCPR$ cannot be easier than $\SharpP$, since $\WCP{0,1}\subseteq\WCPR$}. It remains to show that such problems can be affinely reduced to a problem in $\SharpP$.
\mg{Let a weighted counting problem in $\WCPR$ be defined by a polynomial $p$ and a function $w: \{0,1\}^\star \times \{0,1\}^\star \rightarrow \mathbb{R}$.}
  For a given $x\in \{0,1\}^\star$, we can solve such problem using the
  integer weight function given by $v$ where the last parameter of $v$ is fixed to $p(|x|) + b$ with one
  call to a problem in $\WCPZ$, which is $\SharpP$-equivalent due to Theorem
  \ref{theoremEquivalent}, followed by a division by $2^{p(|x|)+b}$, so the reduction is affine (according to Definition~\ref{def:affine}, set $h_1(x)=2^{-p(|x|)-b}$ and $h_3(x)=0$). Let us call the function computed by this weighted counting problem $f^\prime(x)=\sum_{u \in \{0,1\}^{p(|x|)}} v(x,u,p(|x|)+b) / 2^{p(|x|)+b}$. The approximation error of the resulting
  value with respect to the value obtained by the original weighted counting problem satisfies
\begin{eqnarray*}
\left| f(x) - f^\prime(x) \right| & = & \left| \sum_{u \in \{0,1\}^{p(|x|)}} w(x,u) - \sum_{u \in \{0,1\}^{p(|x|)}} \frac{v(x,u,p(|x|)+b)}{2^{p(|x|)+b}} \right| \\
& \leq & \sum_{u \in \{0,1\}^{p(|x|)}} \left| w(x,u) - \frac{v(x,u,p(|x|)+b)}{2^{p(|x|)+b}} \right| \\
& \leq & \sum_{u \in \{0,1\}^{p(|x|)}} 2^{-p(|x|)-b} = 2^{-b} \text{,}\\
\end{eqnarray*}
\noindent which concludes the proof.
\end{proof}

\mg{Although there might be ``easy" weighted counting problems, the above theorem provides an upper bound on the complexity of approximating the output value that is true for any problem in $\text{\#P}_{\mathbb{R}}$.} We can now use this result to show that weighted counting problems, for which the output (even if encoded as a fraction) is bounded polynomially in the input size, are
in $\FP^{\SharpP[1]}$.

\begin{theorem} We are given a weighted counting problem in $\WCPR$ defined by a polynomial $p$ and
  a function $w: \{0,1\}^\star \times \{0,1\}^\star \rightarrow \mathbb{R}$. If the size of the
  output, \mg{possibly encoded as a fraction},  is bounded by a polynomial $q(|x|)$, then the given weighted
  counting problem is in $\FP^{\SharpP[1]}$. \label{theorem:onecallsharpp} \end{theorem}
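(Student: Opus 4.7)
The plan is to combine Theorem~\ref{theorem:Approximation} with a classical rational reconstruction step. Write $f(x) = a/b$ in lowest terms with $b > 0$. The hypothesis that the fraction is encodable in at most $q(|x|)$ bits forces $|a|, b \leq 2^{q(|x|)} =: B$. Consequently any two distinct reduced rationals with denominators at most $B$ differ by at least $1/B^{2}$, so any dyadic approximation accurate to within $1/(2B^{2})$ pins down $f(x)$ uniquely.

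First I would invoke Theorem~\ref{theorem:Approximation} with precision parameter $b := 2q(|x|) + 2$, obtaining a rational $f'(x)$ with $|f(x) - f'(x)| \leq 2^{-b} < 1/(2B^{2})$. That theorem produces $f'(x)$ via an affine reduction to a problem in $\WCPZ$, which by Theorem~\ref{theoremEquivalent} is $\SharpP$-equivalent; unwinding the definitions, the entire approximation step costs exactly one query to a $\SharpP$ oracle together with polynomial-time pre- and postprocessing. No further oracle access will be needed after this.

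Second, I would apply the standard rational reconstruction algorithm (via the continued fraction expansion of $f'(x)$, or equivalently a truncated extended Euclidean algorithm) to identify the unique reduced rational $a/b$ with $b \leq B$ satisfying $|a/b - f'(x)| < 1/(2B^{2})$. By Legendre's classical theorem of Diophantine approximation, any such rational must appear among the convergents of the continued fraction of $f'(x)$, and these convergents are computable in time polynomial in the bit-length of $f'(x)$, which is itself polynomial in $|x|$. Uniqueness, established above, guarantees that the recovered value is exactly $f(x)$.

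The main obstacle is not a deep mathematical one but the bookkeeping required to verify that the composite procedure uses only a single $\SharpP$ oracle call. The approximation step of Theorem~\ref{theorem:Approximation} is already packaged as an affine reduction to a $\SharpP$-equivalent problem, which by Theorem~\ref{theoremEquivalent} unfolds into a single $\SharpP$ query surrounded by deterministic polynomial-time computation, and the subsequent rational reconstruction is purely deterministic polynomial-time postprocessing that invokes no further oracle. Composing these two deterministic polynomial-time stages with one $\SharpP$ call in between places the whole procedure in $\textrm{FP}^{\SharpP[1]}$.
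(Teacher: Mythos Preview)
Your proposal is correct and follows essentially the same approach as the paper: approximate via Theorem~\ref{theorem:Approximation} with precision parameter $2q(|x|)+2$, observe that only one admissible rational can lie in the resulting interval, and recover it by continued fractions. The paper phrases the reconstruction as computing continued fraction expansions of the two interval endpoints rather than invoking Legendre's theorem on $f'(x)$, but this is a cosmetic difference in the same algorithm.
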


\begin{proof} Using Theorem \ref{theorem:Approximation}, we compute with a single call to a problem
  that is $\SharpP$-equivalent a value $f^\prime(x)$ with an additive approximation error of at most
  $2^{-2q(|x|)-2}$. We know that the actual value $f(x)$ is within the interval $\left[f^\prime(x) -
    2^{-2q(|x|)-2}, f^\prime(x) + 2^{-2q(|x|)-2}\right]$ of size $2^{-2q(|x|)-1}$. In an interval of
  this size there is only one rational value which can be encoded by at most $q(|x|)$ many bits.
  Using the Euclidean algorithm to efficiently find the continued fraction expansions of the two interval endpoints (\cite{hardy1979introduction}, p. 131), we are able to easily retrieve this rational value, so the problem is in $\FP^{\SharpP[1]}$ (note that a $\SharpP$-equivalent class has the same power as $\SharpP$ if used as oracle).
\end{proof}

A similar result holds for the corresponding decision variant. In the following theorem we show that if the size of the output is bounded polynomially in the input size,
the decision variant of the weighted counting problem is in $\PP$.

\begin{theorem} We are given the decision variant of a weighted counting problem in $\SharpP_{\mathbb{R}}$ defined
  by a polynomial $p$, a function $w: \{0,1\}^\star \times \{0,1\}^\star \rightarrow \mathbb{R}$ and
  a threshold value \mg{$t \in \mathbb{Q}$}. If the size of the output of the corresponding weighted counting problem, \mg{possibly encoded as a fraction}, is
  bounded by a polynomial $q(|x|)$, then the decision problem is in $\PP$. \label{theorem:onecallpp}
\end{theorem}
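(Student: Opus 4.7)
The plan is to reduce the given decision problem to a decision problem in $\WDPZ = \PP$ by using the approximation machinery already developed and exploiting the polynomial bound on the output size to separate $f(x)$ from the threshold $t$ by a non-negligible gap whenever $f(x) \neq t$.

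First I would apply Theorem~\ref{theorem:Approximation} to obtain, with a single call to a problem in $\WCPZ$, an approximation $f^\prime(x)$ of $f(x)$ with additive error at most $2^{-q(|x|)-2}$. More precisely, taking $b = q(|x|)+2$ in that construction yields $f^\prime(x) = F(x)/2^{p(|x|)+b}$ where $F(x) = \sum_{u} v(x,u,p(|x|)+b)$ is a function in $\WCPZ$. The rescaling by a fixed power of two is an affine transformation, so comparing $f^\prime(x)$ against a rational threshold is equivalent to comparing $F(x)$ against an integer threshold, which is exactly the kind of decision question that $\WDPZ$ handles.

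Next I would use the polynomial output-size hypothesis to obtain a separation. Since $f(x)$ (when written in lowest terms) has numerator and denominator of bit-size at most $q(|x|)$, any nonzero difference $f(x) - t$ is a rational number with denominator at most $2^{q(|x|)}$, and hence satisfies $|f(x) - t| \geq 2^{-q(|x|)}$. Combining this with the approximation bound gives a clean equivalence: $f(x) \geq t$ if and only if $f^\prime(x) \geq t - 2^{-q(|x|)-1}$. Indeed, if $f(x) \geq t$ then $f^\prime(x) \geq t - 2^{-q(|x|)-2} > t - 2^{-q(|x|)-1}$; if $f(x) < t$ then $f(x) \leq t - 2^{-q(|x|)}$, so $f^\prime(x) \leq t - 2^{-q(|x|)} + 2^{-q(|x|)-2} < t - 2^{-q(|x|)-1}$.

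Finally I would rephrase the shifted inequality $f^\prime(x) \geq t - 2^{-q(|x|)-1}$ purely as a question about the integer quantity $F(x)$ by clearing denominators: it becomes $F(x) \geq T(x)$ for an explicit integer $T(x)$ computable in polynomial time from $x$, $t$, $p$ and $q$. This is a $\WDPZ$-instance, and by the equality $\WDPZ = \WDP{-1,0,1} = \PP$ established earlier in the paper, the decision problem lies in $\PP$. The only delicate point, which I expect to be the main obstacle, is bookkeeping around the boundary case $f(x) = t$ and making sure the chosen threshold $t - 2^{-q(|x|)-1}$ is attained on the correct side of the strict-versus-non-strict $\PP$ definition; this is handled by the standard observation that the threshold in the definition of $\PP$ can be replaced by any fixed rational in $]0,1[$ without changing the class, so a small further shift by a quantity smaller than $2^{-q(|x|)-2}$ (which does not affect the equivalence above) reduces the question to a strict inequality decidable by a $\PP$ machine.
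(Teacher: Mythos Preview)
Your proof is correct and follows essentially the same approach as the paper: approximate $f(x)$ via the integer-weight construction of Theorem~\ref{theorem:Approximation}, exploit the polynomial output bound to ensure a gap of at least $2^{-q(|x|)}$ between $f(x)$ and $t$ whenever they differ, and reduce to a single $\WDPZ=\PP$ query. The only cosmetic difference is that the paper first converts $v$ into a \emph{one-sided} approximation $v'$ (satisfying $0 \le v'(x,u,b)/2^b - w(x,u) \le 2^{-b}$) so that the threshold can be kept at $t\cdot 2^{p(|x|)+c}$ without the shift by $2^{-q(|x|)-1}$ that you introduce; this sidesteps your ``delicate point'' about strict versus non-strict inequalities, but the two arguments are otherwise interchangeable.
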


\begin{proof}
  This proof is also based on Theorem \ref{theorem:Approximation}. As in the previous proof, we show
  that we can transform our problem to a problem with integer weights, such that the result of this
  problem divided by a certain integer gives us a good approximation for our original problem. Here
  we need two additional properties. First, we want to use an approximation from above, that means
  an approximation which is always at least as large as the exact value. In this way the approximate
  result is not smaller than the given threshold value if the exact result is not smaller than the
  given threshold. Second, we set the accuracy of the approximation such that the
  approximate value remains smaller than the given threshold value if the exact value is smaller
  than the threshold value. We can do this efficiently, since there is a certain gap between
  rational values whose denominators are bounded.

\mgg{First of all, without loss of generality, we assume that $t\in\mathbb{N}$ (if $t<0$, we can replace
the original $w$ by $-w$ and the original $t$ by $-t$; after that, if
$t=r'/r''\notin\mathbb{N}$ with $r',r''\in\mathbb{N}$, then
we can replace it by $r'$ and the function $w$ by $w\cdot r''$; these operations do not change the decision problem).}

  Take the function $v: \{0,1\}^\star \times \{0,1\}^\star \times \mathbb{N} \rightarrow
  \mathbb{Z}$ from Definition \ref{definitionWeightedCountingProblem} in order to obtain a one-side
  approximation for the weights. For this purpose, simply compute one additional bit using the two-side approximation given
  by $v$ and shift the result accordingly. In this way, the time bounds are not changed and yet we
  obtain a function $v^\prime: \{0,1\}^\star \times \{0,1\}^\star \times \mathbb{N} \rightarrow
  \mathbb{Z}$ which approximates the weights and satisfies $0 \leq v^\prime(x, u, b) / 2^b - w(x, u)
  \leq 2^{-b}$ for all $x \in \{0,1\}^\star, u \in \{0,1\}^\star, b \in \mathbb{N}$.

  For a given $c \in \mathbb{N}$, we can create the following problem in $\WDPZ$. Take the same
  polynomial $p$ as for the given problem. Additionally, take $v^\prime(\cdot, \cdot, p(|x|) + c)$
  as the integer weight function and $t^\prime = t 2^{p(|x|) + c}$ as the threshold value. Since the
  approximation error of the weights is always to the same side, it is easy to verify the following property:
  if the actual result of the original problem is at least as large as the threshold value $t$, then
  the result of the new problem is at least as large as the new threshold value $t^\prime$.

  It remains to show that if the actual resulting value of the original problem is smaller than $t$,
  then the resulting value of the new problem is also smaller than $t^\prime$. This can be achieved
  by using an appropriate accuracy for the approximation, namely a value of $c = q(|x|) + 1$. The result of the new problem divided by $2^{p(|x|)+c}$ approximates the
  actual value with an one-side error of at most $2^{-c}$. If the actual result of the
  original problem is smaller than $t$, then it differs from $t$ by at least $2^{-c+1}$, since the
  encoding of the denominator of the resulting value is bounded by $q(|x|)$. In that case the approximation of the actual result with an error of at most $2^{-c}$ is certainly smaller than $t$ and therefore the
  result of the new problem is smaller than $t^\prime$.

  This concludes the proof, since we have shown that we can polynomially reduce the original problem
  to a single call of a problem in $\WDPZ = \PP$.
\end{proof}

For problems from $\WCPQs$ to $\WCPR$, these approximate results are probably the best we can hope to obtain.
The approximation results for weighted counting problems have a very interesting application regarding the multivariate polynomials which have been discussed in the last
section. Of course, we can easily show that for a polynomially bounded output the exact evaluation of the multivariate polynomial is $\SharpP$-equivalent and the
corresponding decision variant is in $\PP$. But instead of investigating such polynomials, it is much more interesting to focus on rational functions, that is, ratios of two
such polynomials. The power of rational functions in the context of counting problems has been demonstrated in \cite{DBLP:journals/jcss/BeigelRS95, Fortnow19961}, where the
closure of PP under intersection and under truth-table reductions have been shown. While the (approximate) evaluation of rational functions seems to be more complex than
solving a counting problem, the situation is different for the corresponding decision variant due to the observation that the sign of a rational function is equal to the sign
of the product of numerator and denominator. If, in addition, the numerator and denominator are bounded away from 0 by an exponentially small term, the sign can be decided in PP.
The following theorem formalizes this statement.

\begin{theorem}
\label{theorem:RationalFunction} Let $p$ and $q$ be two multivariate polynomials which can be computed by weighted counting problems, as obtained in Theorem~\ref{theorem:MultivariatePolynomial}. If there exists a polynomial $t$, such that for every input $x \in \{0,1\}^n$ we have $|p(x)| \geq 2^{-t(n)}$ and $|q(x)| \geq 2^{-t(n)}$,
then the problem of deciding the sign of $p/q$ is in $\PP$.
\end{theorem}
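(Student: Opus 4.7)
The plan is to reduce the task to deciding the sign of the product $g(x) = p(x) \cdot q(x)$, exploiting the identity $\mathrm{sign}(p/q) = \mathrm{sign}(pq)$, which is well-defined since the hypotheses $|p(x)| \geq 2^{-t(n)}$ and $|q(x)| \geq 2^{-t(n)}$ rule out $p(x) = 0$ and $q(x) = 0$. By Proposition \ref{proposition:closure}(d), $g$ is itself computed by a weighted counting problem, and the product of the two lower bounds yields the quantitative gap $|g(x)| \geq 2^{-2t(n)}$. In other words, $g(x)$ is separated from $0$ by an inverse-exponential margin, which is exactly the ingredient I need to push the decision into $\PP$.

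Next, I would replay the one-sided-approximation construction used in the proof of Theorem \ref{theorem:onecallpp}, with the $2^{-2t(n)}$ gap playing the role that the polynomially bounded output size played there. Concretely, let $p_g$ and $w_g$ be the polynomial and weight function witnessing $g \in \WCPR$, let $v_g$ be the polynomial-time approximation of $w_g$ from Definition \ref{definitionWeightedCountingProblem}, and form a one-sided variant $v'_g$ (computing one extra precision bit and shifting) satisfying $0 \le v'_g(x,u,b)/2^b - w_g(x,u) \le 2^{-b}$. Fixing the precision $c = 2t(n) + 1$, consider the integer-weighted counting problem with polynomial $p_g$ and weight function $v'_g(\cdot,\cdot,p_g(|x|)+c)$; by Theorem \ref{theoremEquivalent} its decision variant lies in $\WDPZ = \PP$.

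It remains to verify that the integer threshold $1$ correctly reads off the sign of $g$. Writing $F'(x) = \sum_{u} v'_g(x,u,p_g(|x|)+c)$, the one-sided property summed over $u \in \{0,1\}^{p_g(|x|)}$ gives $g(x) \le F'(x)/2^{p_g(|x|)+c} \le g(x) + 2^{-c}$. If $g(x) > 0$ then $F'(x)/2^{p_g(|x|)+c} \ge g(x) \ge 2^{-2t(n)} > 0$, so $F'(x) \ge 1$; if $g(x) < 0$ then $F'(x)/2^{p_g(|x|)+c} \le -2^{-2t(n)} + 2^{-2t(n)-1} = -2^{-2t(n)-1} < 0$, so $F'(x) \le -1$. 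Hence the single $\PP$ query ``$F'(x) \ge 1$?'' decides the sign of $p/q$.

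The main obstacle I anticipate is the simultaneous bookkeeping of the one-sided approximation error ($2^{-c}$ after summation) against the inverse-exponential separation ($2^{-2t(n)}$) inherited from the two hypotheses, so that both sign cases land strictly on opposite sides of an integer threshold; the choice $c = 2t(n)+1$ is the tight setting that achieves this. Once this calibration is fixed, the result is an immediate appeal to the weighted-counting closure of Proposition \ref{proposition:closure}, the approximation machinery of Theorem \ref{theorem:Approximation}, and the equivalence $\WDPZ = \PP$.
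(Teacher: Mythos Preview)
Your proposal is correct and follows essentially the same approach as the paper: reduce to the sign of the product $pq$, exploit the inverse-exponential gap $|pq|\ge 2^{-2t(n)}$, and pass to an integer-weighted approximation whose sign decision lies in $\WDPZ=\PP$. The paper's proof is terser---it approximates $p$ and $q$ separately (each to error $2^{-t(n)-1}$, preserving their individual signs) and then invokes Theorem~\ref{theorem:onecallpp}, whereas you form the product first and replay the one-sided-approximation argument directly with the calibrated precision $c=2t(n)+1$; your version is arguably cleaner since it issues a single $\WDPZ$ query and makes the threshold calculation explicit.
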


\begin{proof}
According to Theorem \ref{theorem:MultivariatePolynomial}, $p$ and $q$ are both weighted counting problems. \mgg{Since the sign of $p/q$ is the same as
the sign of $p\cdot q$, we will decide the latter}. Due to Theorem \ref{theorem:Approximation}, additive approximations
of $p$ and \mgg{of} $q$ with error at most $2^{-t(n)-1}$ \mgg{are} $\SharpP$-equivalent problems \mgg{(and hence are weighted counting problems)}. Moreover, such approximations preserve the signs of $p$ and \mgg{of} $q$ because of the assumptions of this theorem \mgg{that $|p(x)| \geq 2^{-t(n)}$ and $|q(x)| \geq 2^{-t(n)}$. Since approximated $p$ times approximated $q$ is a weighted counting problem (by Proposition~\ref{proposition:closure}(d)) and its sign is the same as the sign of $p\cdot q$, we can use the decision about such sign to conclude the proof (which is in $\PP$ by Theorem \ref{theorem:onecallpp})}.
\end{proof}

\section{The Closure of PP under Intersection and Truth-Table Reductions}

In this section we give short and intuitive proofs of the closure of PP under intersection and truth-table reductions \cite{DBLP:journals/jcss/BeigelRS95, Fortnow19961}. These
proofs show that it is more convenient to work in the more powerful class of weighted counting problems and to perform an approximation at the very end to come back to the
complexity of conventional counting problems. In this way, results about the approximation using rational functions can be directly used and some
technicalities can be avoided. At this point we would like to emphasize that we are using the exact same approaches as in the original proofs \cite{DBLP:journals/jcss/BeigelRS95,
Fortnow19961}, but in the context of weighted counting. A contribution is \mg{an alternative proof which follows a different path to reach the results}.

The basis for these results (as well as for the original results) is Newman's theorem about approximating the absolute value function~\cite{newman1964rational}. Such theorem states that a rational function (ratio of two polynomials) with a degree of $m$
can approximate the absolute value of a number in the interval $[-1,1]$ with an additive error of at most $3e^{-\sqrt{m}}$. Unfortunately such result does not ensure that the polynomials of numerator and denominator of the rational function are parameterized using rational coefficients and integer.

\begin{theorem}
If $n=(\nu+f(\nu))^2$ is a perfect square for some integer $\nu\geq 4$, $f$ is
a non-negative integer polynomial, and $\epsilon=1-\frac{1}{\sqrt{n}}+\frac{1}{2n}$, then
$| |x|-r(x) | < 2^{-\nu}$ holds throughout $[-2^{f(\nu)},2^{f(\nu)}]$, where
\begin{equation}
 r(x) = \frac{r'(x)}{r''(x)} = x\cdot\frac{p(x) - p(-x)}{p(x) + p(-x)}, \text{ and } ~ p(x) = \prod_{k=0}^{n-1} (x+2^{f(\nu)}\cdot \epsilon^k)\, .
\label{eq:newnewman}
\end{equation}
($r(x)$ is a fraction of polynomials with rational coefficients of length polynomial in $n$. \mgg{Moreover, the degree of $r(x)$ is $O(n)$.})\label{thm:newnewman}
\end{theorem}
\begin{proof}
See~\ref{app:a}.
\end{proof}

\begin{theorem}[Theorem 11 in \cite{DBLP:journals/jcss/BeigelRS95}]
The intersection of finitely many languages in $\PP$ is itself in $\PP$.\label{tinter}
\end{theorem}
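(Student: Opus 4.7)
The plan is to reduce membership in $\bigcap_i L_i$ to deciding the sign of a rational function of weighted counting problems and then to invoke Theorem \ref{theorem:RationalFunction}. For each $L_i \in \PP$ pick a $\GapP$ representative $g_i \in \WCP{-1,0,1}$ whose sign encodes membership. By scaling, and by passing to the complement when needed to eliminate the ambiguous boundary case, one can arrange that $x \in L_i \Leftrightarrow g_i(x) \geq 1$ and $x \notin L_i \Leftrightarrow g_i(x) \leq -1$. Dividing by $2^{p(|x|)}$ gives $\hat g_i(x) \in [-1,1]$ separated from $0$ by at least $2^{-p(|x|)}$.

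The core step is Newman's theorem \cite{newman1964rational}: for every $m$ there are explicit degree-$m$ polynomials $P_m, Q_m$ with $Q_m > 0$ on $[-1,1]$ and $\bigl| P_m(y)/Q_m(y) - |y| \bigr| \leq 3 e^{-\sqrt{m}}$. Choosing $m$ polynomial in $|x|$ so that this error is below $2^{-s(|x|)}$ for some polynomial $s$ dominating $p$, define
\[
\phi_i(x) \;=\; \frac{\hat g_i(x)\, Q_m(\hat g_i(x)) + P_m(\hat g_i(x))}{2\, Q_m(\hat g_i(x))},
\]
which approximates $\max(\hat g_i(x),0)$ and is therefore at least $2^{-p(|x|)-1}$ when $x \in L_i$ and at most $2^{-s(|x|)}$ otherwise. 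Setting $\Phi(x) = \prod_{i=1}^k \phi_i(x) - \tau$ for a threshold $\tau$ chosen strictly between $k \cdot 2^{-k s(|x|)}$ and $2^{-k(p(|x|)+1)}$ makes the sign of $\Phi(x)$ coincide with membership in $\bigcap_i L_i$.

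To finish, rewrite $\Phi(x) = N(x)/D(x)$ by clearing the common denominator $D(x) = \prod_{i=1}^{k} 2\, Q_m(\hat g_i(x))$. The numerator $N$ is then a multivariate polynomial of polynomial degree in $g_1(x), \ldots, g_k(x)$ whose coefficients are efficiently computable, so Theorem \ref{theorem:MultivariatePolynomial} certifies that both $N$ and $D$ are weighted counting problems. A uniform positive lower bound for $Q_m$ on $[-1,1]$ yields $|D(x)| \geq 2^{-t(|x|)}$ for a polynomial $t$, and the separation established above gives the analogous exponential lower bound on $|N(x) - \tau\, D(x)|$. Theorem \ref{theorem:RationalFunction} then decides the sign of $\Phi(x)$, hence membership in $\bigcap_i L_i$, in $\PP$; closure under finite intersection follows by induction on $k$.

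The main obstacle is tuning $m$, $s$ and $\tau$ so that all the bounds fit together at once: Newman's approximation error must beat the $2^{-p(|x|)}$ separation of the $\hat g_i$'s by enough room that, after multiplying $k$ factors, the product $\prod_i \phi_i(x)$ still lies on opposite sides of $\tau$, and the resulting $N, D$ must remain bounded below by an explicitly controllable exponentially small term so that Theorem \ref{theorem:RationalFunction} is applicable. Because Newman's polynomials are given in explicit closed form, $|Q_m|$, the coefficient sizes, and the multiplicative blow-up over the $k$ factors all admit polynomial-in-$|x|$ control, which is what makes the weighted-counting route cleaner than the direct $\GapP$ bookkeeping of \cite{DBLP:journals/jcss/BeigelRS95}.
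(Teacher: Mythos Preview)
Your route is sound and uses the same two pillars as the paper --- Newman's rational approximation of $|y|$ and Theorem~\ref{theorem:RationalFunction} --- but the combining step differs. The paper does not form a \emph{product} of thresholded approximants; instead it uses the additive combination
\[
f(x)\;=\;1\;-\;\sum_{i=1}^{k}\Bigl(\tfrac{r_i(x)}{q_i(x)}-f_i(x)\Bigr),
\]
where $r_i/q_i$ is Newman's approximation to $|f_i|$. Each summand $r_i/q_i - f_i \approx |f_i|-f_i$ is near $0$ when $x\in L_i$ and at least $1.5$ when $x\notin L_i$, so the sign of $f$ decides $\bigcap_i L_i$ with no threshold tuning at all. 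This sidesteps exactly the ``main obstacle'' you flag: there is no $\tau$ to calibrate against a multiplicative blow-up across $k$ factors, and the lower bounds on $|r|,|q|$ needed for Theorem~\ref{theorem:RationalFunction} drop out directly from Newman's explicit construction.

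One concrete slip in your write-up: the lower end of the interval you give for $\tau$, namely $k\cdot 2^{-ks(|x|)}$, is far too small. When $x\notin L$ only \emph{one} factor $\phi_i$ is guaranteed to have $|\phi_i|\le 2^{-s}$; the remaining $k-1$ factors are merely bounded by something like $1+2^{-s}\le 2$, so $\bigl|\prod_i\phi_i\bigr|$ is only bounded by about $2^{k-1-s}$, not $2^{-ks}$. You therefore need $\tau$ in an interval like $\bigl(2^{k-s},\,2^{-k(p+1)}\bigr)$, which is nonempty once $s$ exceeds roughly $k(p+2)$. This is easily arranged, so your argument survives, but the stated bound should be corrected. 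The paper's additive combination avoids this bookkeeping entirely, which is the practical advantage of its version.
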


\begin{proof}
 Let $L_1, L_2, \ldots, L_k$ be languages in $\PP$ and let $L = \cap_{i=1}^k L_i$ denote the intersection of these languages. Then there exist corresponding $\GapP$ functions $f_1,
f_2, \ldots, f_k$ such that for any input $x \in \{0,1\}^n$ we have $f_i(x)\geq 0\iff x\in L_i$ and a polynomial $q$ such that $|f_i(x)| \leq 2^{q(n)}$, with index $i \in \{1,2,\ldots,k\}$.
Note that $f_i(x)<0\iff f_i(x)\leq -1$.
Due to Theorem~\ref{thm:newnewman} (applied using $\nu=q(n)$ and $f(\nu)=\nu$), there exist \mgg{rational} polynomials $r'_i$ and $r''_i$, with $i \in \{1,2,\ldots,k\}$, of degree polynomial in \mgg{$(\nu+f(\nu))^2=q(n)^2$, and hence in $n$}, such that the value of $|f_i|$ is
approximated by \mgg{$(r'_i\circ f_i)/(r''_i\circ f_i)$ to within $2^{-q(n)}$. Hence $|r'_i(f_i(x))\footnote{We use the more compact notation $r'_i(f_i(x))$ for $(r_i' \circ f_i)(x)$. Similar for $r''_i\circ f_i$.}/r''_i(f_i(x))-f_i(x)| \leq 2^{-q(n)}$ if $x\in L_i$ and $r'_i(f_i(x))/r''_i(f_i(x))-f_i(x) \geq 2-2^{-q(n)}$} if $x\notin L_i$. Using these rational functions, we create the following function for deciding the intersection of the given languages:
\begin{eqnarray*}
 f(x) = 1 - \sum_{i=1}^{k} \left(\frac{r'_i(f_i(x))}{r''_i(f_i(x))} - f_i(x)\right)\, .
\end{eqnarray*}
\noindent Note that $f(x) \geq 0 \iff x \in L$. Manipulating $f(x)$, we have $f(x)=r'(x)/r''(x)$, where
\begin{eqnarray*}
r'(x) = \sum_{i=1}^{k}\left(f_i(x) r''_i(f_i(x)) - r'_i(f_i(x))\right) \prod_{j \neq i} r''_j(f_j(x)) + r''(x)  ~\text{ and }~ r''(x) = \prod_{j=1}^k r''_j(f_j(x))\, . \\
\end{eqnarray*}
\mgg{Since each $r''_j$ is a (rational) polynomial, Theorem~\ref{theorem:MultivariatePolynomial} tells us that it is a weighted counting problem. The same is trivially true for the GapP functions $f_j$. This implies that, by Theorem~\ref{theorem:MultivariatePolynomial}, each $r''j(f_j(x))$ is a weighted counting problem, that is, each composite factor is a polynomial in a single variable whose coefficients are computed by the same weighted counting function for $r''_j$ whose existence is guaranteed by the assumptions of Theorem~\ref{theorem:MultivariatePolynomial}. Then, by Proposition~\ref{proposition:closure}(d) we get that $\prod_{j=1}^k r''_j(f_j(x))$ is a weighted counting problem as well. The same reasoning can be applied to each composite function in $r'$, and then multiple applications of Proposition~\ref{proposition:closure} suffice to show that $r'$ is also a weighted counting problem.}

Finally, a rather sloppy lower bound of $2^{-\Theta(kq(n))}$ on the absolute values of $r'(x)$ and $r''(x)$ allows us to apply Theorem \ref{theorem:RationalFunction} (the bound holds by construction from Expression~\eqref{eq:newnewman}), which concludes the proof.
\end{proof}

\mgg{The following argumentation, which uses powerful previous well-known results, gives an alternative proof of the closure of $\PP$. Since it follows a different path to reach such important result, it might be useful to understand other properties of $\PP$ and related classes.}

\begin{theorem}[Theorem 5.4 in \cite{Fortnow19961}]
$\PP$ is closed under polynomial-time truth-table reductions.
\end{theorem}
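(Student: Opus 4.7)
The plan is to generalize the argument of Theorem \ref{tinter} in the obvious way, replacing the $k$-fold $\mathrm{AND}$ used for intersection by an arbitrary polynomial-time computable Boolean function $T:\{0,1\}^k\to\{0,1\}$. First, I would unfold the definition: a polynomial-time truth-table reduction from $L$ to a PP language $L'$ yields, on input $x$ of length $n$, a list of queries $y_1,\ldots,y_k$ with $k=\mathrm{poly}(n)$ and a circuit computing some $T$, such that $x\in L$ iff $T(\chi_{L'}(y_1),\ldots,\chi_{L'}(y_k))=1$. I would take GapP witnesses $f_i$ with $|f_i(y_i)|\le 2^{p(n)}$ for membership of $y_i$ in $L'$ and normalize them so the value $0$ is avoided (e.g.\ replace $f_i$ by $2f_i-1$, so that $|f_i(y_i)|\ge 1$). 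Then, exactly as in Theorem \ref{tinter}, I would invoke Newman's theorem to produce, for each $i$, rational functions $r_i/q_i$ of polynomially bounded degree approximating $|f_i(y_i)|$ to additive error $2^{-(p(n)+k)}$. Combining each $f_i$ with its Newman approximation yields a rational function $b_i(x)$ that lies within additive error $2^{-\Omega(k)}$ of the $\{0,1\}$-valued indicator $\chi_{L'}(y_i)$.

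Next, I would substitute the $b_i$ into the multilinear extension
\[
\tilde T(z_1,\ldots,z_k)\;=\;\sum_{a\in T^{-1}(1)}\,\prod_{i=1}^{k}\bigl(a_i z_i+(1-a_i)(1-z_i)\bigr),
\]
a polynomial of degree $k$ with at most $2^k$ terms whose integer coefficients are polynomial-time computable from $T$ (and hence from $x$). By Theorem \ref{theorem:MultivariatePolynomial} together with the closure properties of Proposition \ref{proposition:closure}, after clearing the common denominator $Q(x)=\prod_i q_i(x)$ the resulting expression collapses to a rational function $F(x)=R(x)/Q(x)$ in which both $R$ and $Q$ are computable by weighted counting problems. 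Since $\tilde T$ agrees with $T$ on $\{0,1\}^k$, and with the Newman precision chosen so that the accumulated error across the $2^k$ products of $k$ factors stays below $1/4$, one gets $|F(x)-T(\chi_{L'}(y_1),\ldots,\chi_{L'}(y_k))|<1/4$. Hence $x\in L$ iff $F(x)>1/2$, equivalently iff $R'(x):=2R(x)-Q(x)>0$, and $R'$ is itself a weighted counting problem by Proposition \ref{proposition:closure}.

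Finally, I would apply Theorem \ref{theorem:RationalFunction} to $R'/(2Q)$. The main obstacle, and where I expect the real work to go, is verifying the exponentially small lower bounds $|R'(x)|,|Q(x)|\ge 2^{-\mathrm{poly}(n)}$ required by that theorem. The bound on $Q$ follows from the explicit form of Newman's rational approximants, whose denominators are products of polynomially many factors each bounded away from $0$ by an inverse-exponential amount. The bound on $R'$ is subtler: it relies on the observation that, because $F(x)$ lies within $1/4$ of the integer $T(\chi_{L'}(y_1),\ldots,\chi_{L'}(y_k))\in\{0,1\}$, one has $|2R(x)-Q(x)|\ge |Q(x)|/2$. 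Formalizing this requires tracking how the $2^{-\Omega(k)}$ error in each $b_i$ propagates through the degree-$k$ polynomial $\tilde T$ with up to $2^k$ terms, and then choosing the Newman precision large enough — yet still polynomial, so the degrees of $R$ and $Q$ stay polynomial — that the final accumulated error is safely below $1/4$. Once these bounds are established, Theorem \ref{theorem:RationalFunction} immediately places the decision of the sign of $R'/(2Q)$ in $\PP$, exactly mirroring the structure of the intersection-closure argument.
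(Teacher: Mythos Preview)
Your proposal is correct and follows essentially the same route as the paper's proof: Newman's rational approximations convert the signs of the GapP witnesses into near-Boolean values, these are substituted into a polynomial encoding of the truth-table predicate, and after clearing denominators Theorem~\ref{theorem:RationalFunction} is invoked with an inverse-exponential lower bound on numerator and denominator. The only notable variation is that the paper obtains the polynomial encoding by appealing to the arithmetization of deterministic Turing machines \cite{babai1991arithmetization}, whereas you use the explicit multilinear extension $\tilde T$; both are equally valid within the weighted-counting framework (the exponential sum in $\tilde T$ being absorbed by Proposition~\ref{proposition:closure}(e)), and the paper likewise leaves the lower-bound verification at the level of a ``rather sloppy'' $2^{-2t(n)p(n)}$ estimate.
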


\begin{proof} \mgg{(This proof is provided in a high-level language since the details are similar to those of previous proofs.)}
We are given a polynomial time algorithm computing the input for the $\PP$ oracle calls, a polynomial time predicate $g$ representing the truth-table and a $\PP$ oracle represented
by the $\GapP$ function $f$. For a given input $x \in \{0,1\}^n$, let $x_1, x_2, \ldots, x_{t(n)}$ be the polynomially many inputs of at most polynomial size for the oracle
calls. Then there exists a polynomial $q$ such that for any input $x \in \{0,1\}^n$ and any index $i \in \{1,2,\ldots,t(n)\}$ we have $|f(x_i)| \leq 2^{q(n)}$. Due to Theorem~\ref{thm:newnewman}, there exist for each $i \in \{1,2,\ldots,t(n)\}$ \mgg{rational} polynomials $r'$ and $r''$ of degree polynomial in $n$ such that the value of $|f|$ is
approximated by \mgg{$(r'\circ f)/(r''\circ f)$} to within $2^{-q(n)}$. As in the proof of Theorem~\ref{tinter}, we can use the function \mgg{$(r'\circ f)/(r''\circ f) - f$} to flatten positive cases to (near) zero while expanding negative cases away from zero.
Additionally, the truth-table predicate $g$ can be written as a polynomial in the results from the oracle calls. This follows from arithmetization properties of deterministic
Turing machines \cite{babai1991arithmetization} (which were also used in the original proof of this theorem \cite{Fortnow19961}). Inserting the rational approximations into this
polynomial results in a multivariate rational function, \mgg{as done in Theorem~\ref{tinter}}. Due to the accuracy of the rational approximation, the sign of this function can be used to decide
membership with respect to the truth-table reduction. Finally, using a similar technique as in the proof of Theorem~\ref{tinter}, all terms can be brought to the same denominator
and a rather sloppy lower bound of $2^{-\Theta(t(n)q(n))}$ on the absolute values of the numerator and denominator allows us to apply Theorem \ref{theorem:RationalFunction}, which concludes the proof.
\end{proof}

\section{Quantum Computing}
\label{section:quantumComputing}

In this section we investigate the relationship between weighted counting and quantum computation. For a very thorough treatment of the latter subject, see \cite{nielsen}. The most common computational model for quantum computations is the quantum Turing machine \cite{bernstein1993quantum}. Such a Turing machine is similar to a probabilistic Turing machine \cite{aaronson}. Instead of (positive) transition probabilities, which preserve the $\ell_1$ norm, quantum Turing machines use transition amplitudes, which are in general arbitrary complex numbers that preserve the $\ell_2$ norm. During the execution of a probabilistic Turing machine, the machine is at any point of time in exactly one state (according to the underlying probability distribution). The situation for quantum Turing machines is different. The machine is in a superposition of different states, whose probabilities are defined as the squares \mg{of the absolute values} of their amplitudes \mg{(since amplitudes are generally complex numbers)}. The exact nature of the state can only be determined through measurements. Such measurements are performed at the end of the computation to reveal the result, but also intermediate measurements are
possible. Measuring certain bits destroys the superposition of states in the following sense. Only states which are compatible with the outcome of the measurement survive and
the amplitudes change accordingly to preserve the $\ell_2$ norm. The acceptance probability of a quantum Turing machine on a given input is the probability with which the final
measurements reveal an accepting configuration. We follow the standard exposition regarding quantum Turing machines and their acceptance probabilities as in \cite{aaronson} (in particular Chapter 10, p. 139--140).

We begin this section by showing that the acceptance probability of a quantum Turing machine can be represented using weighted counting. If there are no intermediate
measurements, then the acceptance probability of the given quantum Turing machine is the sum of the acceptance probabilities for accepting configurations. The acceptance
probability for an accepting configuration is the absolute square of its amplitude and its amplitude is the sum of the amplitudes of computational paths leading to this state.
Finally, we can efficiently compute the amplitude of a given computational path of the quantum Turing machine to any precision we need. The key observation is that we can now
express the overall acceptance probability by summing over all pairs of computational paths. We call two computational paths compatible if they arrive at the same accepting
configuration. The weight for a pair of computational paths that are compatible is the real part of the product of the amplitude of the first computational path and the complex
conjugate of the amplitude of the second computational path. Here we may consider only the real part of that product, since the imaginary parts cancel each other out over the
whole sum. The weight for a pair of non-compatible computational paths is just 0. To allow intermediate measurements, we have to slightly modify the definition of compatible
computational paths, but the underlying idea does not change. In the following we state this result in a more formal way.

\begin{theorem} \label{theoremQuantum} For a given polynomial time quantum Turing machine, the task
  of computing the acceptance probability on any input is a weighted counting problem in $\WCPR$.
\end{theorem}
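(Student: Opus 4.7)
The plan is to write the acceptance probability of the quantum Turing machine $Q$ on input $x$ as a double sum over pairs of computational paths, which is naturally of the weighted counting form. Let $T(|x|)$ be a polynomial bound on the runtime of $Q$, so each deterministic-branch computational path on input $x$ is encoded by a bitstring $u \in \{0,1\}^{T(|x|)}$, and let $\alpha(x,u) \in \mathbb{C}$ denote its amplitude (the product of the at most $T(|x|)$ transition amplitudes taken along the path). First I would treat the case without intermediate measurements: the acceptance probability equals $\sum_c \bigl|\sum_{u \to c} \alpha(x,u)\bigr|^{2}$ where the outer sum is over accepting final configurations $c$, and expanding the square yields $\sum_{c}\sum_{u_1,u_2 \to c} \alpha(x,u_1)\overline{\alpha(x,u_2)}$.

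Next I would take $p(n)=2T(n)$ as the polynomial of Definition \ref{definitionWeightedCountingProblem}, view each $u \in \{0,1\}^{p(|x|)}$ as a pair $\langle u_1,u_2\rangle$, and define
\[w(x,\langle u_1,u_2\rangle)=\begin{cases}\mathrm{Re}\bigl(\alpha(x,u_1)\overline{\alpha(x,u_2)}\bigr) & \text{if $u_1$ and $u_2$ are compatible,}\\ 0 & \text{otherwise,}\end{cases}\]
where ``compatible'' means that the classical configurations reached along $u_1$ and $u_2$ coincide at the end and that this shared final configuration is accepting. Since the acceptance probability is a real number, the imaginary parts of the cross-terms cancel in the full sum and $\sum_{u} w(x,u)$ equals the acceptance probability of $Q$ on $x$, which is exactly the form required by Definition \ref{definitionWeightedCountingProblem}. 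The compatibility predicate is polynomial-time decidable, since reconstructing the classical configuration reached along a path of length $T(|x|)$ is straightforward bookkeeping on the transition function of $Q$.

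For the approximability condition of Definition \ref{definitionWeightedCountingProblem}, I would invoke the standard hypothesis that the transition amplitudes of $Q$ are themselves efficiently computable in the same sense: each individual amplitude admits a polynomial-time approximation within additive error $2^{-k}$. Since each $\alpha(x,u_i)$ is a product of at most $T(|x|)$ such amplitudes of modulus at most $1$, computing each factor to precision $2^{-(b+O(T(|x|)))}$ and then multiplying yields an integer-valued $v(x,\langle u_1,u_2\rangle,b)$ computable in polynomial time (in $|x|$ and $b$) such that $|w(x,\langle u_1,u_2\rangle) - v(x,\langle u_1,u_2\rangle,b)/2^{b}| \le 2^{-b}$, as required for membership in $\WCPR$.

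The main obstacle is handling intermediate measurements cleanly. A measurement projects the state onto a measurement-outcome subspace, so that outcomes branch independently thereafter and cross-terms between distinct outcomes do not contribute to the probability. To remain inside the pair-of-paths framework I would enlarge the notion of compatibility: $u_1$ and $u_2$ are compatible only if they produce identical outcomes at every intermediate measurement and end in the same accepting configuration. This is implemented by treating each measurement outcome as additional classical information recorded in the configuration tracked along the path, which keeps compatibility polynomial-time decidable. The algebraic identity expressing the acceptance probability as $\sum_{u} w(x,u)$ then survives verbatim because the projection operators annihilate exactly those cross-terms whose measurement sequences disagree.
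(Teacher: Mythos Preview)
Your proposal is correct and follows essentially the same approach as the paper: expand the acceptance probability as a double sum over pairs of computational paths, take $w(x,\langle u_1,u_2\rangle)=\mathrm{Re}(\alpha(x,u_1)\overline{\alpha(x,u_2)})$ on compatible pairs, observe the imaginary parts cancel, and approximate the weights via the product of approximated transition amplitudes. The only minor difference is in the justification of the intermediate-measurement case, where the paper appeals to the standard deferred-measurement principle (CNOT to ancilla bits) while you argue directly via the projections killing mismatched cross-terms; both arguments are valid and yield the same extended compatibility relation.
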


\begin{proof}
  \mg{Our proof is based on the original ideas of Adleman, DeMarrais, and Huang   \cite{DBLP:journals/siamcomp/AdlemanDH97} but embedded in the weighted counting framework}. We first focus on polynomial time quantum Turing machines without intermediate measurements. After
  that we show how the result can be extended to the more general case.

  Let $M$ be a quantum Turing machine without intermediate measurements and whose computational time
  is bounded from above by a polynomial $t$. For a given input $x$, let us denote the set of possible accepting
  final configurations of $M$ by $C$ and the set of possible computational paths by $P$. Note that
  the cardinality of both sets is at most exponential in the input length and that elements from
  these sets can be efficiently enumerated. The amplitude for a configuration $c \in C$ is denoted
  by $a_c$ and the amplitude for a computational path $p \in P$ is denoted by $a_p$. We can then
  write the acceptance probability of $M$ for the given input $x$ in the following way (we use
  the notation $p_1 \sim p_2$ to indicate that two computational paths $p_1$ and $p_2$ arrive at the
  same accepting configuration).

\begin{eqnarray}
\text{Pr}(\text{$M$ accepts $x$}) & = & \sum_{c \in C} \left| a_c^2 \right| =  \sum_{c \in C} \, \left| \sum_{\substack{\text{$p \in P$} \\ \text{arriving at $c$}}} a_p \right|^2 \nonumber \\
 & = & \sum_{c \in C} \,\,\, \sum_{\substack{\text{$p_1, p_2 \in P$} \\
  \text{both arriving at $c$}}} a_{p_1} \overline{a_{p_2}} \nonumber \\
 & = & \sum_{\substack{\text{$p_1, p_2 \in P$} \\ p_1 \sim p_2}} a_{p_1}
  \overline{a_{p_2}} \,\,\,\,=  \sum_{\substack{\text{$p_1, p_2 \in P$}\\ p_1
  \sim p_2}}  \text{Re}\left(a_{p_1} \overline{a_{p_2}}\right)\, ,
\label{qeq}
\end{eqnarray}
\noindent where $\text{Re}(\cdot)$ is the real part of a number.  To compute the acceptance probability, we have to sum over all pairs of computational paths, as shown in
Expression~\eqref{qeq}. If two computational paths arrive at the same accepting configuration, we call them compatible. The weight for two compatible computational paths is the
real part of the product of the amplitude of the first path with the complex conjugate of the amplitude of the second part. The weight for two non-compatible paths is just $0$.
In order to approximate the product of the amplitudes of two computational paths up to a specified precision, we just have to multiply the transition amplitudes between all the
computational steps of the two paths with a certain precision. That means we can express the acceptance probability of $M$ as a weighted counting problem.  This derivation is straightforward, but to our best knowledge it had not been used before for showing a direct relationship between quantum computation and weighted counting.

Now we still have to show that the result also holds if we allow intermediate measurements. Observe that this case is not necessary because, as it was shown in
\cite{DBLP:journals/siamcomp/BernsteinV97, DBLP:journals/siamcomp/AdlemanDH97}, we can always postpone the intermediate measurements to the end of the computation. For the sake
of completeness, we include the standard (and not very hard) arguments here.

For this purpose we have to adapt our definition of compatible computational paths. We now say that two computational paths are compatible if they arrive at the same accepting
configuration and if they have the same results for intermediate measurements. We capture this extended definition of two compatible computational paths $p_1$ and $p_2$ by the
same notation of $p_1 \sim p_2$. The general formula is then

\begin{displaymath}
\text{Pr}(\text{$M$ accepts $x$}) \,\,\, = \sum_{\substack{\text{$p_1, p_2 \in P$} \\ p_1 \sim p_2}} \text{Re} \left(a_{p_1} \overline{a_{p_2}}\right).
\end{displaymath}

To show that this formula is indeed correct, we postpone the intermediate measurements using the following standard technique. Here instead of an intermediate measurement, a
controlled {\it not} (or simply CNOT) operation is performed on the bits that have to be measured and special ancilla bits. Then at the very end a measurement on the ancilla
bits is performed. We call this quantum Turing machine $M^\prime$. $M$ and $M^\prime$ have the same acceptance probabilities and in particular they accept the same language.
Now our extended definition of compatible computational paths respects the following property: Two computational paths are compatible for $M$ according to our extended
definition if and only if the corresponding two computational paths are compatible for $M^\prime$ according the basic definition. This shows that such result also holds for the
more general case in which intermediate measurements are allowed. \end{proof}

We point out that there are no specific assumptions about the quantum Turing machines of the previous proof. In particular, we do not impose any restrictions on the transition amplitudes or on the measurements used thereby. Although the most general definition of a quantum Turing machine allows arbitrary weights from the complex plane $\mathbb{C}$ for the transition amplitudes, we can assume that these are restricted to the set of rational numbers $\mathbb{Q}$. In the following we use $\BQP_{\mathbb{K}}$ to emphasize that the underlying quantum TM uses amplitudes drawn form the set $\mathbb{K}$. The reason that we can restrict our attention to rational amplitudes is that, as Adleman, DeMarrais, and Huang  proved \cite{DBLP:journals/siamcomp/AdlemanDH97}, $\BQP_{\tilde{\mathbb{C}}} = \BQP_{\mathbb{Q}} \subseteq \PP$ but $\BQP_{\mathbb{C}}$ has non-recursive languages, where $\tilde{\mathbb{C}}$  means the set of polynomial-time approximable complex numbers. The non-recursiveness of $\BQP_\mathbb{C}$ implies that we have no effective way to determine whether the acceptance probability of a bounded-error polynomial-time QTM is greater than (for instance) $2/3$ (see also relevant discussions in \cite{DBLP:journals/ipl/YamakamiY99}). Thus, in the following we can restrict our attention to rational transition amplitudes ($\BQP = \BQP_{\mathbb{Q}}$).

Based on Theorem \ref{theoremQuantum}, we are now able to give extremely short and intuitive proofs of the well-known facts that
$\BQP  \subseteq \PP$ and (the stronger version) that $\BQP  \subseteq \AWPP$.

\begin{theorem}[\mg{Proved in~\cite{DBLP:journals/siamcomp/AdlemanDH97}}]
$\BQP  \subseteq \PP$.
\end{theorem}

\begin{proof} For any given language $L \in \BQP$, there exists a bounded-error quantum Turing
  machine $M$ which decides $L$. We can now apply Theorem \ref{theoremQuantum} to $M$ and obtain a
  weighted counting problem approximating the acceptance probabilities of $M$. The (approximate)
  decision version corresponding to this weighted counting problem is in $\PP$ and can be used to
  decide $L$ due to the gap between the acceptance probabilities for words inside and outside the
  language. \end{proof}

For the second proof we need an alternative definition of the complexity class $\AWPP$. In \cite{fenner2003pp} the following definition for $\AWPP$ is given.

\begin{definition}
\label{theoremAWPP} A language $L\subseteq\{0,1\}^\star$ is in $\AWPP$ if and only if there exists a polynomial $p$ and a $\GapP$ function $g$ such that, for all $x \in \{0,1\}^\star$,
\begin{eqnarray*}
x \in L & \Rightarrow & 1-c \leq g(x) / 2^p \leq 1, \\
x \notin L & \Rightarrow & 0 \leq g(x) / 2^p \leq c,
\end{eqnarray*}
where $p = p(|x|)$ and $c$ is a constant smaller than $1/2$.
\end{definition}

We now give an equivalent definition based on weighted counting.

\begin{theorem}
\label{theoremAWPP2} A language $L\subseteq\{0,1\}^\star$ is in $\AWPP$ if and only if there exists a function $g$ corresponding to a weighted counting problem such that, for all $x \in
\{0,1\}^\star$,
\begin{eqnarray*}
x \in L & \Rightarrow & 1-c \leq g(x) \leq 1, \\
x \notin L & \Rightarrow & 0 \leq g(x) \leq c,
\end{eqnarray*}
where $c$ is a constant smaller than $1/2$.
\end{theorem}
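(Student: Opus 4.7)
The plan is to prove the two directions of the equivalence independently. In both cases the main machinery is the identity $\GapP = \WCP{-1,0,1} = \WCPZ$ (from Theorem~\ref{firstResult} and Theorem~\ref{theoremEquivalent}, together with the discussion that identifies $\GapP$ with $\WCPZ$), combined with the additive approximation result Theorem~\ref{theorem:Approximation}.

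For the direction from Definition~\ref{theoremAWPP} to Theorem~\ref{theoremAWPP2}, suppose $L \in \AWPP$ is witnessed by a $\GapP$ function $g$ and polynomial $p$. Since $g$ can be written as the sum of $\{-1,0,1\}$ weights over a set of path strings, I would form a new weighted counting problem whose weight function is that of $g$ multiplied by the polynomial-time computable scalar $2^{-p(|x|)}$. The resulting problem clearly lies in $\WCPR$, and its value is exactly $g(x)/2^{p(|x|)}$, so the inequalities of Definition~\ref{theoremAWPP} translate verbatim to those of Theorem~\ref{theoremAWPP2}, with the same constant $c$.

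For the converse direction from Theorem~\ref{theoremAWPP2} to Definition~\ref{theoremAWPP}, suppose $g \in \WCPR$ satisfies the two conditions with constant $c < 1/2$. First, I would invoke the affine closure of weighted counting (Proposition~\ref{proposition:closure}(a)--(b)) to produce a rescaled problem, for instance $g^\#(x) = g(x)/4 + 3/8$, whose values lie in $[3/8, \tfrac{3}{8}+c/4]$ for $x \notin L$ and in $[\tfrac{3}{8}+(1-c)/4, 5/8]$ for $x \in L$. These two sub-intervals of $[3/8,5/8]$ are separated by a constant gap $(1-2c)/4 > 0$ and sit inside $[0,1]$ with a constant cushion on either side. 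Second, I would apply Theorem~\ref{theorem:Approximation} to compute an additive approximation of $g^\#(x)$ with error $2^{-b}$ by means of a single affine reduction to a $\SharpP$-equivalent problem; by $\WCPZ = \GapP$ this approximation can be written as $\tilde g(x)/2^{N(|x|)}$ with $\tilde g \in \GapP$ and $N$ a polynomial. Third, I would choose the constant $b$ so that $2^{-b}$ is strictly smaller than both the distance from the sub-intervals to $\{0,1\}$ and than half the gap $(1-2c)/4$; this guarantees $\tilde g(x)/2^N \in [0,c'']$ for $x \notin L$ and $\tilde g(x)/2^N \in [1-c'',1]$ for $x \in L$ with some constant $c'' < 1/2$, which is precisely the condition required by Definition~\ref{theoremAWPP}.

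The main obstacle is the rigidity of the $\AWPP$ template: Definition~\ref{theoremAWPP} insists that $g(x)/2^{p(|x|)}$ simultaneously remain non-negative, stay below $1$, and lie on the correct side of the constant threshold $c$ or $1-c$, whereas a naive approximation of the weighted counting function could violate any of these. The rescaling step is what resolves this: it is purely cosmetic (affine preprocessing) but inserts a fixed constant cushion around each of the three boundaries, so that any constant-accuracy approximation preserves the AWPP pattern. Because $c < 1/2$ is a fixed constant, all cushions are constants and hence a constant $b$ suffices, keeping the construction well within the polynomial-time budget of Theorem~\ref{theorem:Approximation}.
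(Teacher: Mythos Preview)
Your proof is correct and follows essentially the same approach as the paper: the forward direction scales a $\GapP$ function by $2^{-p(|x|)}$ to obtain a weighted counting problem, and the reverse direction approximates via Theorem~\ref{theorem:Approximation} to land back in a $\GapP$ function divided by a power of two. Your explicit affine rescaling $g^\# = g/4 + 3/8$ to create a constant cushion around the endpoints $0$ and $1$ is a detail the paper glosses over (it merely asserts that a ``sufficiently good approximation'' fulfills Definition~\ref{theoremAWPP} with some constant below $1/2$), but this is a refinement of the same argument rather than a different route.
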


\begin{proof} Due to Definition \ref{theoremAWPP}, it is clear that for a language $L \in \AWPP$
  there exists a function $g$ corresponding to a weighted counting problem with the above
  properties. Now let us assume that the above properties are fulfilled for a function $g$
  corresponding to a weighted counting problem. We can approximate the weighted counting problem
  using a $\GapP$ function and a power of $2$ as a scaling factor. For a sufficiently good
  approximation we are then able to fulfill the requirements of Definition \ref{theoremAWPP} with
  some constant smaller than $1/2$. \end{proof}

This definition can now be used for \mg{an alternative} proof of $\BQP \subseteq \AWPP$.

\begin{theorem}[\mg{Proved in \cite{DBLP:journals/jcss/FortnowR99}}]
$\BQP \subseteq \AWPP$.
\end{theorem}

\begin{proof} For any given language $L \in \BQP$, there exists a bounded-error quantum Turing
  machine $M$ which decides $L$. By applying Theorem \ref{theoremQuantum} to $M$, we obtain a weighted
  counting problem approximating the acceptance probability of $M$. Since $M$ has a bounded-error
  probability, we can show $L \in \AWPP$ due to Theorem \ref{theoremAWPP2}. \end{proof}

Using the framework of weighted counting, it might be also possible to give proofs for other results in the field of quantum computation. As an example, we give a
shorter proof for a recent result \cite{DBLP:journals/toc/MelkebeekW12} in the remaining part of this section.
In \cite{DBLP:journals/toc/MelkebeekW12}, it is shown among other results that quantum Turing machines with bounded error can be simulated in a time- and space-efficient way
using randomized algorithms with unbounded error that have access to random access memory. The most important version of this result has been stated in the following way.

\begin{theorem}[Theorem 1.1 in \cite{DBLP:journals/toc/MelkebeekW12}] Every
  language solvable by a bounded-error quantum algorithm running in time $t \geq \log n$ and space
  $s \geq \log n$ with algebraic transition amplitudes is also solvable by an unbounded-error
  randomized algorithm with random access memory running in time $\mathcal{O}(t \log t)$ and space $\mathcal{O}(s + \log t)$,
  provided $t$ and $s$ are constructible by a deterministic algorithm with the latter time and space
  bounds. \end{theorem}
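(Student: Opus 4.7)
The plan is to combine Theorem \ref{theoremQuantum} with a direct randomized sampling scheme over pairs of computational paths, using the RAM to avoid storing the path data.

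First, I would apply Theorem \ref{theoremQuantum} to the given bounded-error quantum Turing machine $M$ running in time $t$ and space $s$. This yields the weighted counting identity
\[
\Pr(M \text{ accepts } x) \;=\; \sum_{(p_1,p_2):\,p_1\sim p_2} \text{Re}\bigl(a_{p_1}\overline{a_{p_2}}\bigr),
\]
where $(p_1,p_2)$ ranges over pairs of length-$t$ computational paths and each amplitude $a_{p_i}$ is a product of $t$ algebraic transition amplitudes of $M$. Since $M$ has bounded error, the value of the sum lies either in $[2/3,1]$ or in $[0,1/3]$, so there is a constant separation between the two cases.

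Next, I would describe an unbounded-error randomized algorithm with RAM that samples $2t$ random bits encoding a pair $(p_1,p_2)$, reading them from the RAM on demand rather than storing them. Simultaneously simulating $M$ along both paths, the algorithm maintains the two current configurations (each of size $O(s)$), a step counter ($O(\log t)$ bits), and a running approximation of $a_{p_1}\overline{a_{p_2}}$. After $t$ steps it tests the compatibility predicate $p_1\sim p_2$ and accepts with probability equal to a fixed affine rescaling of $\text{Re}(a_{p_1}\overline{a_{p_2}})\cdot[p_1\sim p_2]$ into $[0,1]$. The overall acceptance probability is then $\tfrac{1}{2}\pm\Theta(1)/2^{2t}$, with sign determined by membership in $L$, exactly the biased coin that an unbounded-error machine can exploit.

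For the resource bounds, the space is $O(s)$ for the two configurations plus $O(\log t)$ for the counter, the RAM addresses, and the bookkeeping of the amplitude product; no random bit is ever stored, since each is re-generated upon access through the RAM. The time is $t$ update steps, each performing one transition lookup and one $O(\log t)$-bit arithmetic operation on the running product, for a total of $O(t\log t)$.

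The hard part is carrying out the amplitude arithmetic inside the $O(\log t)$-per-step budget with algebraic transition amplitudes. Because $M$ has a fixed finite alphabet of transition amplitudes, each with a fixed minimal polynomial, one precomputes compact representations once (charged to the constant $M$) and then performs all subsequent multiplications modulo those polynomials using $O(\log t)$ bits; the truncation error per step is exponentially small and, crucially, need not survive analytically because the unbounded-error framework tolerates a gap as small as $1/2^{2t}$, which is still preserved by the bounded-error gap of $M$ after rescaling. This is the same accounting as in \cite{DBLP:journals/toc/MelkebeekW12}, but the weighted counting reformulation via Theorem \ref{theoremQuantum} reduces the argument to a single pass of random sampling over the index set of the weighted sum, with no separate treatment of interference between paths.
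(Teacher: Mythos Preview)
Your proposal shares the overall skeleton with the paper's proof (apply Theorem~\ref{theoremQuantum}, then sample pairs of paths step by step), but it has a genuine gap at the point you yourself flag as ``the hard part'': you propose to maintain a running approximation of $a_{p_1}\overline{a_{p_2}}$ using only $O(\log t)$ bits. That cannot work. The running product is a product of $2t$ transition amplitudes, and the final acceptance-probability gap you must preserve is of order $2^{-\Theta(t)}$ after rescaling. With $O(\log t)$ bits the truncation error per multiplication is $1/\mathrm{poly}(t)$, not ``exponentially small''; compounded over $t$ steps this swamps any $2^{-\Theta(t)}$ gap. Doing the arithmetic exactly in the algebraic number field generated by the (finitely many) transition amplitudes does not help either: the coefficients of the running product in a fixed basis grow to $\Theta(t)$ bits, which violates the $O(s+\log t)$ space bound. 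This is precisely the overhead the paper calls out explicitly (``keep track of the transition amplitudes of the paths \ldots\ causes some overhead that would invalidate the overall results'').

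The idea you are missing is the paper's fix: do \emph{not} maintain the product at all. Instead, at every computational step, split the current transition amplitude into its positive/negative and real/imaginary parts and perform a \emph{random branching} weighted by the $O(\log t)$-bit approximation of that single amplitude. This pushes the multiplication into the probability space rather than into stored state, so only one amplitude (with $O(\log t)$ bits of precision) is ever handled at a time; the cumulative product appears only implicitly as the probability of the sampled branch. That is what buys the $O(t\log t)$ time and $O(s+\log t)$ space simultaneously. Your write-up would be correct once you replace ``maintain a running approximation of $a_{p_1}\overline{a_{p_2}}$'' by this per-step random branching; the rest of your outline (precomputing the finitely many amplitudes to $O(\log t)$ bits, simulating the two paths in lockstep, checking compatibility at measurements and at the end) matches the paper.
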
 \begin{proof}
  Let $M$ be a given quantum Turing machine. We use again the formulation of the acceptance
  probability as a weighted counting problem:

\begin{eqnarray*}
 \text{Pr}(\text{$M$ accepts $x$}) \,\,\, = \sum_{\substack{\text{$p_1, p_2 \in P$} \\p_1 \sim p_2}} \text{Re}\left(a_{p_1} \overline{a_{p_2}}\right) \text{.}
\end{eqnarray*}

Based on this formulation, we can build a randomized algorithm with unbounded error: the problem of computing the acceptance probability of $M$ is in $\WCPR$. The corresponding
decision version is in $\WDPR$ and a proper approximation resides in $\PP$, which is a randomized algorithm with unbounded error. To show the desired time and space bounds, we
have to bound the running time and the space used for the computation of each of the summands. For this purpose, one can use approximations of the constant number of transition
amplitudes of $M$ with a precision of $\mathcal{O}(\log t)$ bits. According to \cite{bernstein1993quantum}, the resulting quantum Turing machine is a good approximation to $M$.
Note that the result was originally stated for quantum Turing machines without intermediate measurements, but with a similar argument as in the proof of Theorem
\ref{theoremQuantum}, this result can be extended to the more general case. We now continue our investigations with this machine. We basically follow the proof of
\cite{DBLP:journals/toc/MelkebeekW12}, but we employ the new terminology, which can help to state some parts in a simpler way.

In a preprocessing step, we compute the (constant number of) transition amplitudes of $M$ up to a precision of $\mathcal{O}(\log t)$ bits. This can be done for algebraic
transition amplitudes in a computational time of $\mathcal{O}(\text{polylog}(t))$ and space $\mathcal{O}(\log t)$. Due to the random access of the randomized machine, we can
retrieve these values efficiently whenever they are required. Now the idea is to simulate all pairs of computational paths in parallel, step by step. Again due to the random
access of the randomized machine, we can jump between the two computational paths without an additional overhead. For the moment we additionally keep track of the transition
amplitudes of the paths. This causes some overhead that would invalidate the overall results, but we demonstrate how to avoid such overhead later. Whenever a quantum
measurement occurs, we check if the two computational paths would result in the same measurement. If this is indeed the case, then we continue with the simulation, otherwise we
stop and assign to this path a weight of $0$. At the end, we check if both paths have arrived at the same accepting configuration. If that is the case, then we assign the real
part of the product of the amplitude of the first path with the complex conjugate of the amplitude of the second part as the weight, otherwise we use a weight of $0$. The
resulting randomized algorithm shows the desired behavior, but does not respect the required time and space bounds due to the overhead needed to keep track of the transition
amplitudes of the two computational paths. In the following we show how to fix this issue.

In order to avoid the overhead needed to keep track of the transition amplitudes of the two computational paths, one must realize that it is not even necessary to keep track of
the weights throughout the whole computation. Instead, we may split the amplitudes in positive/negative and real/imaginary parts and perform random branchings at every
computational step, instead of multiplying with the transition amplitudes. For that to work, the resulting weights of all the paths have to be adapted accordingly. If we look
at time and space requirements of this approach, we see that, at each computational step, we have to generate $\mathcal{O}(\log t)$ random bits for the branching, which results
in a computational time of $\mathcal{O}(t \log t)$ and a space of $\mathcal{O}(s + \log t)$, as desired. \end{proof}

We shall emphasize that these bounds rely heavily on the fact that random access to memory is granted to the randomized algorithm. Otherwise, we would obtain slightly weaker
bounds for the general case. With a more elaborate approach it would still be possible to obtain the same bounds without random access to memory if the number of quantum
measurements is $\mathcal{O}((\log t)^2)$. In this case, we have to move the pre-computed transition amplitudes during the simulation process to be always able to access them
efficiently. This approach does not yield any overhead. Additionally, we simulate each of the computational paths until $\mathcal{O}(\log t)$ measurements occur. We must keep
track of the measurement results and switch to the other path. In total, there are at most $\mathcal{O}(\log t)$ switches which cause a time overhead of $\mathcal{O}(s)$, which
can be bounded from above by $\mathcal{O}(t)$.

\section{Further Applications of Weighted Counting}\label{furtherapp}

In this section we take the previous discussions and results into two very relevant problems. First we discuss on inferences in the so called probabilistic graphical models,
which appear in abundance in artificial intelligence, data mining and machine learning. Then we talk about stochastic combinatorial optimization problems, which represent a
very important class of problems in operations research, with applications in numerous fields. Our approach focuses on using the complexity results presented so far to simplify
or even to prove new complexity results for these problems.

\subsection{Probabilistic Graphical Models}

We present an application of the previously discussed complexity results to prove that {\it predictive inferences in probabilistic graphical models} (and especially in Bayesian
networks \cite{koller2009,pearl1988}) is $\SharpP$-equivalent (predictive inferences are also called {\it belief updating} in this context). Historically, the community working
with probabilistic graphical models has been used to cite papers that only partially resolve this question \cite{littman1998,littman2001,roth1996}. The most
cited work is due to Roth \cite{roth1996}, where hardness for $\SharpP$ is demonstrated, but membership is only superficially discussed and no formal proof is given.  This
issue is acknowledged many years later by Kwisthout \cite{kwisthout2011}, who in an attempt to close this question proves that predictive inference is in the so-called
$\SharpP$ {\it modulo normalization} class. This situation is indeed inevitable, because the output here should be a probability value, so the problem cannot be in $\SharpP$.
However, using our terminology and results, we can state that such problems are $\SharpP$-equivalent.
By applying the results of this paper, we obtain a simple alternative proof to (and yet more powerful than) \cite{kwisthout2011}, as we discuss in the sequel of this
section. We start by presenting a general definition of probabilistic graphical models. This definition encompasses
the two most important models: Bayesian networks and Markov random fields, also known as Markov networks (see~\cite{koller2009}, Chapters 3 and 4, respectively).

\begin{definition}[Probabilistic Graphical Model (PGM)]
Let $\set{J}=\{1,\ldots,n\}$ and $X_{\set{J}}\eq (X_1,\dotsc,X_n)$ be a vector of discrete random variables, $\set{J}_1, \dotsc, \set{J}_m$ be a collection of index sets
satisfying $\set{J}_1 \cup \dotsi \cup \set{J}_m\eq \set{J}$, and $\set{P}\eq\{ \phi_1, \dotsc, \phi_m \}$ be a set of (explicitly represented, for instance in a table)
functions over vectors ${X}_{\set{J}_1}, \dotsc,
{X}_{\set{J}_m}$ to non-negative rational numbers, respectively. We call $\set{P}$ a \emph{probabilistic graphical model} for ${X}_{\set{J}}$ if the functions in $\set{P}$
specify a joint probability distribution over assignments ${x}_{\set{J}} \in \ps{{X}_{\set{J}}}$ by
\begin{equation}
\Pr({X}_{\set{J}}\eq {x}_{\set{J}}) = \frac{1}{Z} \prod_{i=1,\ldots,m}
\phi_i({x}_{\set{J}_i}) \, ,
\label{probdef}
\end{equation}
where $Z\eq\sum_{{x}_{\set{J}}\in {X}_{\set{J}}} \prod_{i=1,\ldots,m} \phi_i({x}_{\set{J}_i})$ is a normalizing value known as the \emph{partition
  function} and where the values ${x}_{\set{J}_i}$ are compatible with ${x}_{\set{J}}$. If the functions in $\set{P}$ are not explicitly represented but can be computed
in polynomial time in the size of the input, then we call $\set{P}$ a \emph{generalized probabilistic graphical model}.
\end{definition}

For instance,
a Bayesian network is a probabilistic graphical model as just defined that satisfies the following properties:
\begin{enumerate}
\item[(i)] it has exactly one $\phi_i$ for each $X_i$;
\item[(ii)] $\set{J}_i$ must be such that $i\in \set{J}_i$, and such that $j\notin \set{J}_i$ whenever $j>i$;
\item[(iii)] $\sum_{x_i\in X_i} \phi_i({x}_{\set{J}_i}) = 1$.
\end{enumerate}
Such restrictions naturally
imply $Z=1$ and induce conditional stochastic independence among variables $X_{\set{J}}$ of the domain.

The {\it predictive inference} task can be succinctly defined as follows.
\begin{definition}[Predictive inference in PGMs]
Given a probabilistic graphical model  $\set{P}$ defined by functions $\{ \phi_1, \dotsc, \phi_m \}$ over discrete random variables
$X_{\set{J}}\eq (X_1,\dotsc,X_n)$, {\it predictive inference} (also known as {\it the computation of the partition function})
is the task of computing $Z\eq\sum_{{x}_{\set{J}}\in {X}_{\set{J}}} \prod_{i=1,\ldots,m} \phi_i({x}_{\set{J}_i})$.
\end{definition}

Predictive inference turns out to be a general task that can be used to compute the probability value $\Pr({X}_\set{J'}\eq {x}_\set{J'})$ for any event ${x}_\set{J'}$ of interest: One has simply to
take their specification of the Bayesian network and include into it the indicator functions $\prod_{j\in\set{J'}}\psi_{x_j}(X_j={x_j})$.
It is not hard to check that $Z$ equals to $\Pr({X}_\set{J'}\eq {x}_\set{J'})$ for a Bayesian network that is extended with these new indicator functions. Because of that, we
call this probabilistic graphical model where $Z$ equals to a probability value of interest as {\it queried Bayesian network}.

The complexity of predictive inference depends on how the input is encoded. When all functions in $\set{P}$ are given by numbers directly encoded in the input, it is easy to
show that the output has size that is polynomial in the input size (one could multiply all rational numbers in the input by their least common denominator in order to achieve
an input defined solely by integers -- this is done in polynomial time because all numbers are explicitly given). Hence the following theorem holds for any probabilistic
graphical model, including queried Bayesian networks.
\begin{theorem}
  Given a probabilistic graphical model defined by $\set{P}$, the predictive
  inference is $\SharpP$-equivalent.\label{pgm1}
\end{theorem}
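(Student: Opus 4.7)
My plan is to separately establish $\SharpP$-hardness under 1-Turing metric reductions and affine reducibility of predictive inference to some $\SharpP$ problem, as required by the definition of $\SharpP$-equivalence.

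For the hardness direction I would reduce from $\#\mathrm{SAT}$ in the style of Roth~\cite{roth1996}. Given a CNF formula $\varphi$ over $u_1,\ldots,u_n$, I would build a Bayesian network with Boolean root variables $U_1,\ldots,U_n$ each with uniform prior, together with deterministic internal nodes encoding every clause and the conjunction of all clauses. Evaluating the partition function with an indicator pinning the conjunction node to ``true'' yields $\#\mathrm{SAT}(\varphi)/2^n$; multiplying the oracle output by $2^n$ as post-processing recovers the count, which is a legitimate single-call 1-Turing metric reduction.

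For the affine reduction to $\SharpP$, the idea is to cast $Z$ itself as a weighted counting problem and then scale its weights to integers. Encode each discrete assignment $x_\set{J}$ as a polynomial-length binary string $u$ (assigning weight $0$ to binary strings that do not correspond to any valid combination of values of $X_1,\ldots,X_n$ when arities are not powers of two), and set
\begin{equation*}
w(x,u) \;=\; \prod_{i=1}^{m} \phi_i(x_{\set{J}_i}(u)).
\end{equation*}
Since every $\phi_i$ is given by an explicit table in the input, $w$ is polynomial-time computable and rational, so $Z$ already lies in $\WCPQs$. Let $L(x)$ be the least common multiple of all denominators appearing in the $\phi_i$; then $L(x)$ has polynomial bit-length and is polynomial-time computable. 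The scaled weight $\tilde{w}(x,u) := L(x)^m \cdot w(x,u)$ takes non-negative integer values and is polynomial-time computable, so $g(x) := \sum_u \tilde{w}(x,u)$ belongs to $\WCPN$, which equals $\SharpP$ by the earlier equality $\WCP{0,1} = \WCPN$. Therefore
\begin{equation*}
Z(x) \;=\; L(x)^{-m} \cdot g(x),
\end{equation*}
an affine reduction from $Z$ to a $\SharpP$ function with $h_1(x)=L(x)^{-m}$, $h_2(x)=x$, $h_3(x)=0$.

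The main technical point is ensuring that the scaling factor $L(x)^m$ remains polynomially bounded so that $h_1$ is genuinely polynomial-time computable. This is immediate because $m$ is polynomial in $|x|$ (the model is part of the input) and $|L(x)|$ is polynomial (every denominator is explicitly listed); hence $L(x)^m$ has polynomial bit-length. The only other mildly delicate point is handling non-binary domain variables via padding and zeroing of invalid encodings, but this is standard and does not affect the closure arguments (Proposition~\ref{proposition:closure}) that turn a polynomial-time-computable non-negative integer weight function, summed over exponentially many binary strings, into a function in $\SharpP$.
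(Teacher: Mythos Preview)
Your proposal is correct and follows essentially the same approach as the paper. For hardness, both you and the paper build a Bayesian network encoding a propositional formula via uniform-prior Boolean roots and deterministic gate nodes so that the partition function equals the fraction of satisfying assignments (the paper phrases this as a reduction from MAJSAT but uses exactly this construction, which of course also gives $\#\mathrm{SAT}$ via a single oracle call and a $2^N$ rescaling). For the affine reduction to $\SharpP$, the paper clears denominators by multiplying each $\phi_i$ by the \emph{product} $d$ of all denominators and then dividing the resulting $\WCPN$ answer by the appropriate power of $d$; you do the same with the LCM $L(x)$ in place of $d$ and are slightly more explicit about the bit-length bookkeeping and the encoding of non-binary domains, but the underlying argument is identical.
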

\begin{proof}
Let $D$ be the set of all denominators appearing in the images of any $\phi_i$. Let $d=\prod_{r\in D} r$. Define new functions $\phi_i'=\phi_i\cdot d$.
Now all input values are integer and so are all weights. This is clearly a problem in $\WCPN$. After the result is obtained, we divide the output
by $d^{|{X}_{\set{J}}|}$ and the result of the original problem is obtained. Hardness is obtained by reducing majority propositional satisfiability (MAJSAT), as previously done in the literature, see for example~\cite{darwiche2009}:
create one $X_i$ for each Boolean variable and a corresponding $\phi_i$ over it
with image equals to 1/2. Create a Boolean circuit representing the propositional formula using auxiliary Boolean variables $X_j$ and functions $\phi_j$, one for
each time that a Boolean operator $\lnot$, $\land$, $\lor$ appears in the propositional formula. Take the last variable $X_n$ which subsumes the circuit output and attach to it the function $\phi_n$ which is an indicator of $X_n=\textrm{true}$. It is easy to see that $Z$ is the proportion of satisfying assignments of the original propositional formula, as the circuit will evaluate to zero for every non-satisfying assignment and to $1/2^N$ for each satisfying assignment (with $N$ the number of Boolean variables of MAJSAT).
\end{proof}

In the case of the generalized probabilistic graphical models, where functions are not given explicitly, we
might end up with a large output size, but we can nevertheless show membership in $\WCPQs$, which is straightforward.

\begin{theorem}
  Given a generalized probabilistic graphical model defined by $\set{P}$, the
  predictive inference is $\WCPQs$-complete. \label{predictioninpq}
\end{theorem}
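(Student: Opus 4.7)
The plan is to establish the two directions of $\WCPQs$-completeness separately: membership in $\WCPQs$, and $\WCPQs$-hardness (in the sense of affine or 1-Turing reductions, consistent with how completeness has been used for the other weighted classes in this paper).

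For membership, I would observe that the partition function $Z=\sum_{x_{\set{J}}}\prod_{i=1}^{m}\phi_i(x_{\set{J}_i})$ is already in the exact shape required by Definition~\ref{definitionWeightedCountingProblem}, with weights restricted to rationals. Concretely, let each variable $X_j$ have domain representable in $\lceil\log_2|X_j|\rceil$ bits and fix a polynomial $p$ bounding the total encoding length of an assignment $x_{\set{J}}$. Index the sum by bitstrings $u\in\{0,1\}^{p(|x|)}$ that decode to an assignment (padding or zeroing out otherwise), and take $w(x,u)=\prod_{i=1}^{m}\phi_i(u_{\set{J}_i})$. Since each $\phi_i$ is computable in polynomial time in the input size (this is precisely the hypothesis of the \emph{generalized} PGM), and there are only polynomially many factors, the product $w(x,u)$ is polynomial-time computable and rational-valued. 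By definition this places predictive inference in $\WCPQs$.

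For hardness, I would take an arbitrary problem in $\WCPQs$, given by polynomial $p$ and polynomial-time weight function $w:\{0,1\}^\star\times\{0,1\}^\star\to\mathbb{Q}$, and encode it as predictive inference on a generalized PGM. Introduce Boolean variables $X_1,\dots,X_{p(|x|)}$ whose joint assignment encodes the string $u$. The natural choice is a single factor $\phi(X_1,\dots,X_{p(|x|)}):=w(x,u)$, summed over all assignments, which would give $Z=\sum_u w(x,u)=f(x)$ directly. The main obstacle is that PGM potentials are required to be non-negative, whereas $\WCPQs$-weights can be negative rationals. The fix is an affine shift: since $w$ is polynomial-time computable, the output has polynomial bit-length and therefore $|w(x,u)|\leq 2^{r(|x|)}$ for some polynomial $r$ that can be read off the runtime of the algorithm computing $w$. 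Setting $M:=2^{r(|x|)}$ and defining instead $\phi'(u):=w(x,u)+M\geq 0$ (which is still polynomial-time computable and rational), the partition function becomes $Z=f(x)+M\cdot 2^{p(|x|)}$, so $f(x)=Z-M\cdot 2^{p(|x|)}$. Since both the preprocessing (building the PGM description from $x$) and the postprocessing (subtracting the known constant $M\cdot 2^{p(|x|)}$) are polynomial-time, this yields an affine reduction from the arbitrary $\WCPQs$ problem to predictive inference in a generalized PGM, establishing hardness.

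The step I expect to require the most care is the sign-handling in the hardness direction: one must verify that the polynomial bound $r$ on the output length of $w$ is uniformly extractable from the problem description, so that the shift $M$ and the correction term $M\cdot 2^{p(|x|)}$ are themselves polynomial-time computable from $x$ and the reduction is genuinely affine rather than requiring multiple oracle queries. Once this is in place, combining hardness with membership gives the $\WCPQs$-completeness claim.
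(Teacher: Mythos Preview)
Your approach is essentially the same as the paper's: membership by reading the partition function directly as a weighted sum with polynomial-time computable rational weights, and hardness by packaging an arbitrary $\WCPQs$ weight function into a single factor $\phi_1$ so that $Z$ equals the target sum.

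The one substantive difference is that you explicitly address the non-negativity constraint on PGM potentials (the $\phi_i$ are defined to take non-negative rational values), which the paper's proof passes over in silence. Your affine shift $\phi'(u)=w(x,u)+M$ with $M=2^{r(|x|)}$ is exactly the right fix and keeps the reduction affine in the sense used throughout the paper; your concern about extracting $r$ uniformly is well-placed but easily discharged, since $r$ can be taken to be the running-time bound of the machine computing $w$, which is part of the problem description. So your argument is correct and in fact slightly more careful than the paper's on this point. One very minor refinement: to guarantee the resulting $\set{P}$ is a bona fide PGM (i.e.\ $Z>0$), take $M$ strictly larger than the weight bound, e.g.\ $M=2^{r(|x|)+1}$, so that every $\phi'(u)>0$.
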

\begin{proof}
Take the computation paths to correspond to values $x_{\set{J}}\in X_{\set{J}}$ and define the weight $w$ of a path $x_{\set{J}}$ to be the product
$\prod_{i}\phi_i(x_{\set{J}_i})$ of rational functions in $\set{P}$. The sum of rational weights gives exactly the desired value of $Z$, so the problem is in $\WCPQs$.
Hardness comes from the fact that we can use a single $\phi_1$ representing the weights of the $\WCPQs$ problem.
In this case, $Z$ is exactly the sum of weights of the weighted counting problem with rational values.
\end{proof}

Regarding the decision version of {\it predictive inference}, where one queries whether $Z$ is greater than a given rational threshold, the membership in $\PP$ is often
attributed to Littman et al. \cite{littman1998}, where membership of a similar (yet not equal) problem, namely probabilistic acyclic planning, is shown by the construction of a
non-deterministic Turing machine with probability of acceptance greater than half. Such result, if manipulated properly, implies membership for the predictive inference in
probabilistic graphical models too, but it is valid only in cases where the encoding of the instances satisfy some (restrictive) properties. This issue makes that result only
partially satisfactory. Recently, Kwisthout \cite{kwisthout2011} settles the membership of predictive inference in $\PP$ for queried Bayesian networks, but it does not extend
to the generality of probabilistic graphical models as defined here. Hence, we obtain a stronger membership result, because we require only the output size to be polynomially
bounded in the input size. Moreover, this is not restricted to queried Bayesian networks but works for any probabilistic graphical model, including those with functions that
are parametric and shortly encoded (as long as they can be well approximated in polynomial time). In summary, results here lead to a proof that generalizes previous
results for this problem \cite{darwiche2009,kwisthout2011,littman1998,littman2001}.

\begin{theorem}\label{theo:gpgmpp}
  Given a generalized probabilistic graphical model defined by $\set{P}$ such
  that its output size is known to be polynomial in the input size,
the decision version of predictive inference is in $\PP$.
\end{theorem}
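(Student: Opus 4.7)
The plan is to obtain the result as an immediate consequence of two earlier theorems. By Theorem \ref{predictioninpq}, the predictive inference task on a generalized probabilistic graphical model $\set{P}$ can be cast as a weighted counting problem in $\WCPQs$ (and therefore in $\WCPR$): the computational paths are the joint assignments ${x}_{\set{J}}\in{X}_{\set{J}}$, and each path carries weight $\prod_{i}\phi_i({x}_{\set{J}_i})$, which is computable in polynomial time because the $\phi_i$ are. This weight function trivially satisfies the polynomial-time approximability requirement of Definition \ref{definitionWeightedCountingProblem}, and $Z$ is exactly the sum of these weights. The decision version of the inference task, asking whether $Z$ compares against a rational threshold $t$, is then precisely the decision variant of this weighted counting problem in the sense of Definition \ref{definitionWeightedCountingProblemDecision}.

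Next, I would invoke Theorem \ref{theorem:onecallpp}, which says that whenever a weighted counting problem in $\WCPR$ has an output whose size (even encoded as a fraction) is polynomially bounded in the input size, its decision variant sits in $\PP$. By assumption of the present theorem, the output size of the predictive inference is polynomially bounded in the input size, so the hypothesis of Theorem \ref{theorem:onecallpp} is met and membership in $\PP$ follows directly. The only small subtlety worth spelling out is that the threshold $t$ in the predictive inference may itself be a rational number rather than an integer; as noted right after Definition \ref{definitionWeightedCountingProblemDecision}, one can clear the denominator of $t$ into the weight function without leaving the class, so this is not an actual obstacle. No further technical work is required; the content of the statement is really just the packaging of Theorems \ref{predictioninpq} and \ref{theorem:onecallpp} under the stated output-size assumption.
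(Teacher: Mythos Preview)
Your proposal is correct and takes essentially the same approach as the paper: the paper's own proof is simply ``By applying Theorems \ref{predictioninpq} and \ref{theorem:onecallpp}, the result follows.'' Your write-up just unpacks those two invocations a bit more explicitly (and your remark about clearing the rational threshold's denominator is a harmless extra observation).
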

\begin{proof}
By applying Theorems \ref{predictioninpq} and \ref{theorem:onecallpp}, the result follows.
\end{proof}

\begin{corollary}\label{cor:pgmpp}
  Given a probabilistic graphical model defined by $\set{P}$,
the decision version of predictive inference is in $\PP$.
\end{corollary}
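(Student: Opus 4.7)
The plan is to deduce the corollary by exhibiting the non-generalized probabilistic graphical model as a special case of a generalized probabilistic graphical model whose output size is polynomially bounded, and then invoking Theorem \ref{theo:gpgmpp}. The key observation is that in a (non-generalized) PGM all functions $\phi_i$ are given explicitly, say as tables of rational numbers directly present in the input; therefore every $\phi_i$ is trivially computable in polynomial time by a look-up, which already makes $\set{P}$ a generalized PGM.

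Next I would verify the polynomial output-size hypothesis. The proof of Theorem \ref{pgm1} already contains exactly what is needed: since all rational values in the images of the $\phi_i$ appear explicitly in the input, their least common denominator $d$ has an encoding of size polynomial in the input size, and after multiplying each $\phi_i$ by $d$ we obtain integer-valued functions whose bit-sizes are polynomial in the input. The partition function $Z$ is then a sum of at most exponentially many products of polynomially many such integers, divided by $d^{|X_{\set{J}}|}$. This bounds both numerator and denominator of $Z$ by numbers of polynomial bit-length, so $Z$ encoded as a fraction has size polynomial in the input size.

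With these two facts in hand, the hypotheses of Theorem \ref{theo:gpgmpp} are met: $\set{P}$ is a generalized probabilistic graphical model and its predictive inference output is polynomially bounded in the input size. Applying that theorem immediately yields that the decision version of predictive inference lies in $\PP$, which is the claim of the corollary.

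There is essentially no hard step here; the whole content of the corollary is recognising that the explicit representation in the non-generalized setting automatically supplies the polynomial output-bound assumption required by Theorem \ref{theo:gpgmpp}. The only minor point to be careful about is that the decision threshold $t \in \mathbb{Q}$ combined with the denominator $d$ can be cleared by the same polynomial-time preprocessing used in Theorem \ref{pgm1}, so the reduction to a $\WDPZ = \PP$ instance goes through without further work.
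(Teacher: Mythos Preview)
Your proposal is correct and follows exactly the same approach as the paper: observe that explicit representation of the functions $\phi_i$ makes the output size polynomially bounded in the input (so the PGM is a generalized PGM satisfying the hypothesis of Theorem~\ref{theo:gpgmpp}), and then invoke that theorem. The paper's proof is essentially a one-line version of your argument.
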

\begin{proof}
A probabilistic graphical model has explicitly represented functions, so its output size is known to be polynomial in the input size. Hence Theorem~\ref{theo:gpgmpp}
suffices to achieve the desired result.
\end{proof}

Of greater interest is the {\it conditional predictive inference}.
\begin{definition}[Conditional Predictive inference in PGMs]
Given a probabilistic graphical model $\set{P}$ defined by functions $\{ \phi_1, \dotsc, \phi_m \}$ over discrete random variables
$X_{\set{J}}\eq (X_1,\dotsc,X_n)$, {\it conditional predictive inference} (also known as {\it posterior probability computation})
is the task of computing $\Pr({X}_\set{J'}\eq {x}_\set{J'} ~|~ {X}_\set{J''}\eq {x}_\set{J''})$, for given
instantiations ${x}_\set{J'}\in {X}_\set{J'}$ and ${x}_\set{J''}\in {X}_\set{J''}$.
\end{definition}

The computation of conditional predictive inference can be  accomplished by two calls of predictive inference to obtain
$\Pr({X}_\set{J'}\eq {x}_\set{J'}\land {X}_\set{J''}\eq {x}_\set{J''})$ and $\Pr({X}_\set{J''}\eq {x}_\set{J''})$, where probability values $\Pr$ are defined as in
Expression~\eqref{probdef}. Even if the corresponding decision problem for conditional predictive inference is in $\PP$, we are not aware of such proof (numerous papers cite
sources which do not formally prove such result). Hence, a proof is given here.
\begin{theorem}
  Given a generalized probabilistic graphical model defined by $\set{P}$ such
  that its output size is known to be polynomial in the input size,
the decision version of conditional predictive inference is in $\PP$.
\end{theorem}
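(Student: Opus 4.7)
The plan is to reduce the decision version of conditional predictive inference to a sign decision for an affine combination of two (unconditional) predictive inference values, and then invoke the machinery already developed for the unconditional case. Let $t\in\mathbb{Q}$ be the rational threshold of the decision query. The starting identity, valid whenever $\Pr({X}_\set{J''}\eq {x}_\set{J''}) > 0$, is
\[
\Pr({X}_\set{J'}\eq {x}_\set{J'}\mid {X}_\set{J''}\eq {x}_\set{J''}) \geq t
\;\iff\;
\Pr({X}_\set{J'}\eq {x}_\set{J'}\land {X}_\set{J''}\eq {x}_\set{J''}) - t\cdot\Pr({X}_\set{J''}\eq {x}_\set{J''}) \geq 0.
\]

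First, I would observe that each of the two probabilities on the right-hand side is itself a predictive inference value on an extended PGM obtained by attaching the appropriate indicator functions, exactly as described right after the definition of predictive inference. By the argument behind Theorem \ref{predictioninpq}, each of these values is computed by a weighted counting problem in $\WCPQs$; call them $p_1(x)$ (for the joint) and $p_2(x)$ (for the evidence marginal).

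Second, I would form the affine combination $g(x) = p_1(x) - t\cdot p_2(x)$. By the closure properties collected in Proposition \ref{proposition:closure}, parts (a)--(c) (scalar multiplication and finite sums), $g$ is itself a weighted counting problem. Since both $p_1(x)$ and $p_2(x)$ have outputs polynomially bounded in the input size (this is precisely the hypothesis of the theorem, inherited from the unconditional setting of Theorem \ref{theo:gpgmpp}), their affine combination with a fixed rational coefficient $t$ also has polynomially bounded output size. Theorem \ref{theorem:onecallpp} then yields that deciding whether $g(x) \geq 0$ lies in $\PP$.

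Third, I would dispose of the degenerate case $p_2(x) = 0$, where the conditional probability is undefined and the equivalence above does not apply, by separately deciding whether the evidence marginal is strictly positive. This is itself a $\PP$ question by Corollary \ref{cor:pgmpp} applied to the extended PGM whose partition function equals $p_2$. The overall decision is the intersection of ``$g(x) \geq 0$'' with ``$p_2(x) > 0$'', and remains in $\PP$ by closure of $\PP$ under intersection (Theorem \ref{tinter}). The main thing to watch, rather than a true obstacle, is that the polynomially bounded output hypothesis propagates to the affine combination; but since $t$ is a fixed part of the input and the denominators of $p_1$ and $p_2$ are polynomially bounded by hypothesis, this is routine.
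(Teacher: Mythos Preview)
Your proposal is correct and takes essentially the same route as the paper: rewrite the conditional threshold test as the sign of $p_1 - t\cdot p_2$, recognize this affine combination as a single weighted counting problem with polynomially bounded output, invoke Theorem~\ref{theorem:onecallpp}, and dispose of the zero-evidence case by a separate $\PP$ call combined via a closure property of $\PP$ (the paper writes out the per-configuration weights explicitly rather than citing Proposition~\ref{proposition:closure}, and phrases the last step as ``two adaptive $\PP$ calls equal $\PP$'', whereas your appeal to closure under intersection is if anything the cleaner justification). One small slip: for the evidence-positivity check in the generalized setting you should cite Theorem~\ref{theo:gpgmpp} rather than Corollary~\ref{cor:pgmpp}, which is stated only for explicitly represented PGMs.
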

\begin{proof}
First of all, we can decide whether $\Pr({X}_\set{J''}\eq {x}_\set{J''})>0$ (a call to $\PP$ suffices). If so, then let $q$ be a
rational given as the threshold. We want to decide whether
\begin{align*}
& \frac{\Pr({X}_{\set{J'}\cup\set{J''}}\eq {x}_{\set{J'}\cup\set{J''}})}{\Pr({X}_\set{J''}\eq {x}_\set{J''})} > q \iff\\
  \iff &\Pr({X}_\set{J'}\eq {x}_\set{J'}\land {X}_\set{J''}\eq  {x}_\set{J''}) - q\cdot \Pr({X}_\set{J''}\eq {x}_\set{J''}) > 0\\
  \iff &\sum_{{y}_{\set{J'}\cup\set{J''}}\in {X}_{\set{J'}\cup\set{J''}}} \Pr({X}_\set{J'}\eq {y}_\set{J'}\land {X}_\set{J''}\eq  {y}_\set{J''})
\left[\mathcal{I}(y_{\set{J'}\cup\set{J''}} = {x}_{\set{J'}\cup\set{J''}}) -q\cdot \mathcal{I}(y_{\set{J''}}= {x}_\set{J''})\right] > 0\\
  \iff &\sum_{{y}_{\set{J}}\in {X}_{\set{J}}} \Pr({X}_\set{J}\eq {y}_\set{J})
\left[\mathcal{I}(y_{\set{J'}\cup\set{J''}} = {x}_{\set{J'}\cup\set{J''}}) -q\cdot \mathcal{I}(y_{\set{J''}}= {x}_\set{J''})\right] > 0\\
\iff & \sum_{y_{\set{J}}\in {X}_\set{J}} w(y_{\set{J}}) > 0\, ,
\end{align*}
\noindent where $\mathcal{I}(\cdot)$ is the indicator function. Hence we can see the problem as an instance of $\WDPQs$ with weights $w(y_{\set{J}})=0$ whenever $y_{\set{J}}$ is
not compatible with ${x}_\set{J''}$, $w(y_{\set{J}})=-q\cdot \Pr({X}_\set{J}\eq {y}_\set{J})$ if $y_{\set{J}}$ is compatible with ${x}_\set{J''}$ but not with ${x}_\set{J'}$,
and finally $w(y_{\set{J}})=(1-q)\cdot \Pr({X}_\set{J}\eq {y}_\set{J})$ if $y_{\set{J}}$ is compatible with both ${x}_\set{J'}$ and ${x}_\set{J''}$.
Under the assumptions of this theorem (about bounded output), we can use Theorem~\ref{theorem:onecallpp} to show that this decision is in $\PP$.
Hence we need at most two adaptive calls of $\PP$, which equals to $\PP$ itself.
\end{proof}
\begin{corollary}
  Given a probabilistic graphical model defined by $\set{P}$, the decision version of conditional predictive inference is in $\PP$.
\end{corollary}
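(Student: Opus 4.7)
The plan is to argue that the corollary follows almost immediately from the preceding theorem on generalized probabilistic graphical models once we check that the output size of conditional predictive inference is polynomially bounded when the PGM is given by explicitly represented functions.

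First I would unpack the definition of a (non-generalized) probabilistic graphical model: every $\phi_i$ is represented explicitly (e.g., as a table of rational numbers) inside the input. Let $D$ be the set of denominators appearing in these tables and let $d=\prod_{r\in D} r$, exactly as in the proof of Theorem~\ref{pgm1}. Then $d$ has bit-size polynomial in the input size, and after scaling each $\phi_i$ by $d$ we obtain integer-valued factors of polynomial bit-size. Both $\Pr({X}_\set{J'}\eq {x}_\set{J'}\land {X}_\set{J''}\eq {x}_\set{J''})$ and $\Pr({X}_\set{J''}\eq {x}_\set{J''})$ are, up to a uniform factor of $d^{|X_\set{J}|}$, sums of products of these integers over an exponential index set, hence each is a rational whose numerator and denominator have polynomial bit-size. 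The ratio of two such rationals (which is precisely the conditional probability of interest) therefore also admits an encoding as a fraction whose size is polynomial in the input.

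Second, I would invoke the preceding theorem for generalized PGMs: since every PGM is in particular a generalized PGM (the explicit tables trivially define polynomial-time computable $\phi_i$), and since the output size condition has just been verified, the decision version of conditional predictive inference falls under the hypotheses of that theorem and is therefore in $\PP$.

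This is essentially a bookkeeping argument rather than a new proof, so I do not expect any genuine obstacle; the only care needed is to confirm that the polynomial bound on the output survives the two subtractions and the division hidden in the previous proof (namely the step reducing the conditional query to a single $\WDPQs$ instance with weights $w(y_\set{J})$), so that Theorem~\ref{theorem:onecallpp} can still be applied. Because scaling all weights by $d^{|X_\set{J}|}$ turns $w$ into an integer-valued weight function of polynomial bit-size, this final check is routine.
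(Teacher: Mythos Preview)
Your proposal is correct and follows essentially the same route as the paper: observe that an explicitly represented PGM is a special case of a generalized PGM whose output size is polynomially bounded, then invoke the preceding theorem. The paper simply points to the identical argument used for Corollary~\ref{cor:pgmpp}, whereas you spell out the denominator-clearing and bit-size bookkeeping in more detail than the paper does.
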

\begin{proof}
The same argument as in Corollary~\ref{cor:pgmpp} suffices.
\end{proof}

We have refrained from discussing PP-hardness for these problems, since such result is very well-established, see for instance the book of Darwiche on the topic~\cite{darwiche2009} (the reduction presented in the proof of Theorem~\ref{pgm1} above can be used).

\subsection{Stochastic Combinatorial Optimization}

In this section we present another application of the complexity results, this time in the context of discrete two-stage stochastic combinatorial optimization problems
\cite{stougie1996stochastic}.

Two-stage stochastic combinatorial optimization problems contain uncertainty in terms of stochastic data in their problem formulation. This uncertainty can for example be given
by probability distributions over events of the domain. We call a specific realization of the random events a scenario and we assume that the number of different scenarios is
bounded exponentially in the input size. We further assume that these scenarios can be enumerated efficiently and that the probability that a given scenario occurs can be
computed efficiently in the sense of Definition \ref{definitionWeightedCountingProblem}. In the first stage a decision is made solely based on the given information, without
knowing the actual realizations of the random events. This first-stage decision imposes certain costs. In the second stage, after the realization of the random events is
revealed, another decision has to be taken based on the first-stage decision and on the revealed information. This second-stage decision can for example be used to guarantee
feasibility of the final solution or to react to certain random events. The second-stage decision causes additional costs which are usually called recourse costs. The overall
goal is to find a solution for the given two-stage stochastic combinatorial optimization problem which minimizes the total expected costs, which is defined by the costs of the
first-stage decision plus the expected costs of the second-stage decision.

The formal definition of the model is given below.

\begin{definition}[2-Stage Optimization \cite{stougie1996stochastic}]
We are given a probability distribution over  all possible realizations of the data, called scenarios and denoted by $\mathcal{S}$, and we construct a solution in two stages:
\begin{enumerate}
\item First, we may take some decisions to construct an initial part of the solution, $x$, incurring a cost of $c(x)$.
\item Then some scenario $A \in \mathcal{S}$ is realized according to the distribution, and in the second-stage we may augment the initial decisions by taking recourse
    actions $y(A)$, (if necessary) incurring some cost $f_A(x, y(A))$.
\end{enumerate}

The goal is to choose the initial decisions so as to minimize the expected total cost
\[
c(x)+ E_{A \in \mathcal{S}}[f_A(x, y(A))]\, .
\]
\end{definition}

Dyer and Stougie \cite{dyer2006computational} have shown that discrete two-stage stochastic combinatorial optimization problems are $\SharpP$-hard in general. In order to
obtain this result, they make use of some artificial stochastic combinatorial optimization problems. The result has then been strengthened in
\cite{isco14,weyland2014computational,weylandHardness}, where $\SharpP$-hardness has been shown for a practically relevant stochastic vehicle routing problem. These results are both
imposing lower bounds on the computational complexity of stochastic combinatorial optimization problems. Here we complement them with upper bounds for the computational
complexity of many tasks related to discrete two-stage stochastic combinatorial optimization problems.

In this context the actual solution for a two-stage stochastic combinatorial optimization problems is usually the first-stage decision. This will become more clear with the
following additional assumptions. Assume now that the costs for a first-stage decision can be computed in polynomial time and that for a given first-stage decision and a given
scenario the corresponding recourse costs can also be computed in polynomial time. Given a solution we can compute the expected costs in the following way: By definition we are
able to compute the costs imposed by the first-stage decision in polynomial time. We then enumerate the at most exponentially many scenarios and add to the total costs for each
scenario the recourse costs of this scenario multiplied with the probability that this scenario occurs. Using Theorem \ref{theorem:Approximation} we can prove the following
result for the evaluation of solutions.

\begin{theorem}
We are given a discrete two-stage stochastic combinatorial optimization problem (respecting our assumptions) and a value $b \in \mathbb{N}$. The task of computing the expected
costs for a solution up to an additive error of $2^{-b}$ is $\SharpP$-equivalent.
\end{theorem}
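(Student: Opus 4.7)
The plan is to show this theorem as a near-immediate corollary of Theorem~\ref{theorem:Approximation} combined with the existing hardness results of Dyer and Stougie~\cite{dyer2006computational} (and its strengthenings). The essence of the argument is that, for a fixed solution $x$, the expected cost has the exact shape of a weighted counting problem in $\WCPR$ (in fact in $\WCPQs$), so an additive $2^{-b}$ approximation falls under the scope of Theorem~\ref{theorem:Approximation}.

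First I would write out the quantity to be computed as
\[
c(x) + \sum_{A \in \mathcal{S}} \Pr(A)\cdot f_A(x, y(A)).
\]
The term $c(x)$ is computable in polynomial time by assumption, so it suffices to approximate the sum. I would then cast this sum as a weighted counting problem whose computational paths are the scenarios $A \in \mathcal{S}$ (with an appropriate polynomially-bounded encoding, since there are at most exponentially many of them and they can be enumerated efficiently), and whose weight function is $w(x, A) = \Pr(A)\cdot f_A(x, y(A))$. The assumptions on the model guarantee both that $\Pr(A)$ is approximable in polynomial time in the sense of Definition~\ref{definitionWeightedCountingProblem} and that $f_A(x, y(A))$ is computable in polynomial time, so the product $w$ inherits a polynomial-time approximation function $v$ satisfying the required additive bound (using the closure properties in Proposition~\ref{proposition:closure}, in particular the product case, to combine the approximations cleanly).

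At this point the membership side follows directly: applying Theorem~\ref{theorem:Approximation} to this weighted counting problem with accuracy $2^{-(b+1)}$, plus the polynomial-time computation of $c(x)$, yields an additive $2^{-b}$ approximation via a single call to a $\SharpP$-equivalent problem and an affine post-processing step (the shift by $c(x)$ and the division by the appropriate power of two). So the approximation task affinely reduces to a $\SharpP$ problem.

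The hard part, and really the only step that is not bookkeeping, is the $\SharpP$-hardness direction. Here I would lean on the hardness constructions in~\cite{dyer2006computational,isco14,weyland2014computational,weylandHardness}, where the expected cost turns out to be an integer (or a rational with polynomially-bounded denominator) whose value encodes a $\SharpP$-complete counting quantity. For such instances an additive error of $2^{-b}$ with $b$ larger than the bit length of the denominator pins down the expected cost exactly, so any algorithm solving the approximation task for general $b$ also solves the exact task. This gives $\SharpP$-hardness under a 1-Turing metric reduction, as required by Definition~\ref{defff}. Combining the two directions establishes $\SharpP$-equivalence. The main subtlety to verify is only that the cited hardness reductions indeed produce instances with sufficiently well-behaved (polynomially bounded) output so that a polynomial-accuracy approximation recovers the exact value; this is standard in those constructions.
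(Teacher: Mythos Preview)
Your proposal is correct and mirrors the paper's own argument: the paper does not give an explicit proof but simply states that the result follows from Theorem~\ref{theorem:Approximation}, having already noted in the preceding paragraph both that the expected cost is an exponential sum of efficiently computable weights (membership) and that Dyer and Stougie established $\SharpP$-hardness for this class of problems (hardness). Your write-up just spells out these two ingredients in more detail, including the observation that the hardness instances have polynomially bounded output so that approximation recovers the exact value.
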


For most of the discrete two-stage stochastic combinatorial optimization problems it holds that the expected costs for any solution can be encoded using at most polynomially
many bits in the input size. In fact, we are not aware of any problem of practical relevance in which this is not the case. Using this additional assumption we can prove the
following results.

\begin{theorem}
\label{theoremUpperBound} We are given a discrete two-stage stochastic combinatorial optimization problem (respecting our extended assumptions). Then the following results
regarding the computational complexity of different computational tasks related to the given problem hold:
\begin{itemize}
 \item[(a)] The task of computing the expected costs for a solution is $\SharpP$-equivalent.
 \item[(b)] The problem of deciding if a given solution has expected costs of at most $t \in \mathbb{Q}$ is in $\PP$.
 \item[(c)] The problem of deciding if a solution with expected costs bounded by $t \in \mathbb{Q}$ exists is in $NP^{\SharpP[1]}$.
 \item[(d)] The problem of computing a solution with minimum expected costs can be solved in polynomial time with access to an ${NP^{\SharpP[1]}}$ oracle.
\end{itemize}
\end{theorem}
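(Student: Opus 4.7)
The plan is to handle the four parts in order, with each part building on the previous.

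For part (a), I would observe that, under the stated assumptions, the expected cost $c(x) + \sum_{A \in \mathcal{S}} \Pr(A)\, f_A(x, y(A))$ fits the pattern of a weighted counting problem in $\WCPR$: the computational paths are indexed by scenarios $A \in \mathcal{S}$, which can be enumerated efficiently, and the weight per path is $\Pr(A)\cdot f_A(x, y(A))$, efficiently approximable in the sense of Definition~\ref{definitionWeightedCountingProblem} by assumption. The additive constant $c(x)$ can be absorbed into the weights (e.g.\ add $c(x)/|\mathcal{S}|$ to each weight, or fold it in via the affine reduction of Theorem~\ref{theoremEquivalent}). Since we assume the output is polynomially bounded in the input, Theorem~\ref{theorem:onecallsharpp} gives $\SharpP$-equivalence immediately.

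For part (b), the decision version of the same weighted counting problem with the polynomially bounded output lands in $\PP$ directly by Theorem~\ref{theorem:onecallpp}; the threshold $t \in \mathbb{Q}$ can be shifted into the weights as in Definition~\ref{definitionWeightedCountingProblemDecision}.

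For part (c), I would use the standard guess-and-verify structure: a nondeterministic machine guesses a first-stage decision $x$ (of polynomial size, since first-stage decisions are assumed encodable compactly), then makes a single query to an oracle that decides whether the expected cost of $x$ is at most $t$. By part (b), that oracle problem is in $\PP$, hence realizable via one call to a $\SharpP$ oracle. This places the problem in $\mathrm{NP}^{\SharpP[1]}$.

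For part (d), the plan is the usual polynomial-time reduction from search to decision using the oracle from part (c). First I would binary-search on the optimum cost value $t^\star$; since the expected cost has polynomially bounded bit-length, $O(\mathrm{poly}(n))$ queries to the $\mathrm{NP}^{\SharpP[1]}$ oracle suffice to pin down $t^\star$. Then I would reconstruct a witness $x$ bit by bit: after fixing a prefix of $x$, ask the oracle whether there exists a completion achieving cost $\leq t^\star$; this remains an $\mathrm{NP}^{\SharpP[1]}$ question under the same guess-and-verify template. The main subtlety I anticipate is the bookkeeping in part (d): one must argue that partial first-stage decisions still admit the $\mathrm{NP}^{\SharpP[1]}$ verification (which they do, because we simply restrict the nondeterministic guess to completions of the fixed prefix), and that the binary-search range for $t^\star$ is polynomial in the input, which follows from the polynomial output-size assumption. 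No genuinely new idea beyond the weighted counting machinery is needed; the theorem is essentially a direct corollary of Theorems~\ref{theorem:onecallsharpp} and~\ref{theorem:onecallpp} combined with standard oracle reductions.
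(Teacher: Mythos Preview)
Your plan is correct and matches the paper's intent; in fact, the paper omits a formal proof of this theorem entirely, treating it as an immediate consequence of the weighted-counting machinery (Theorems~\ref{theorem:Approximation}, \ref{theorem:onecallsharpp} and~\ref{theorem:onecallpp}) together with standard guess-and-verify and search-to-decision reductions, exactly as you outline for parts~(c) and~(d). One small point to tighten in part~(a): $\SharpP$-equivalence requires both the affine-reduction upper bound \emph{and} $\SharpP$-hardness; Theorem~\ref{theorem:onecallsharpp} supplies only the upper bound (and strictly speaking yields $\textrm{FP}^{\SharpP[1]}$ rather than an affine reduction, a distinction the paper itself elides here), while the hardness direction comes from the Dyer--Stougie result the paper cites immediately before the theorem and should be invoked explicitly.
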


These results open some interesting paths for further research. First of all, they describe upper bounds for the complexity of various computational tasks related to two-stage
stochastic combinatorial optimization. In \cite{weylandHardness,weyland2014computational}, it has already been shown that the upper bound of the first result in Theorem
\ref{theoremUpperBound} is tight for a practically relevant problem. It remains to answer whether there are also practically relevant problems whose decision/optimization
variants match the corresponding upper bounds given here.

\section{Conclusions}

We have presented a structured view on weighted counting. We have shown that weighted counting problems are a natural generalization of counting problems and that in many cases
the computational complexity of weighted counting problems corresponds to that of conventional counting problems. The computational complexity of decision problems in $\WDPQs$,
where the size of the output of the associated $\WCPQs$ problem is not necessarily polynomially bounded in the input size, remains an interesting open problem. As for conventional counting problems, it is also
of great interest to improve our understanding of the (polynomial time) approximability and inapproximability of weighted counting problems.

Additionally, we have seen that weighted counting problems arise in many different fields. Using the framework of weighted counting we could give more intuitive
proofs for known results regarding counting problems and quantum computation. We could even obtain new results regarding probabilistic graphical models and two-stage stochastic
combinatorial optimization based on weighted counting. Finally, we believe that our results regarding weighted counting have many more applications in other situations and
fields and we hope that our structured approach to weighted counting might help in revealing such relations in the near future.

\section{Acknowledgments}
The second author's research has been partially supported by the Netherlands Organisation for Scientific Research (NWO) TOP2  (617.001.301) grant and by the \emph{Swiss National Science Foundation} as part of the \emph{Early Postdoc.Mobility} grant P1TIP2\_152282 while the author was at the University of Paris-Dauphine. The third author's research has been partially supported by the \emph{Swiss National Science Foundation} as part of the \emph{Early Postdoc.Mobility} grant P2TIP2\_152293 while the author was at the Department of Economics and Management, University of Brescia, Italy. The authors also thank the reviewers and the editor for very valuable comments and suggestions (including, but not limited to, ideas to prove that PP with approximable rational numbers can decide the halting problem and including the discussion regarding the choice of transition amplitudes in a quantum Turing machine) that greatly improved the manuscript.

\bibliographystyle{elsarticle-num_mine}
\bibliography{article}

\appendix

\section{Rational Approximation of $|x|$}\label{app:a}

In this section we extend the results of~\cite{newman1964rational} to a range of
functions, including polynomials in which all parameters are rational
numbers. By rational parameters we mean polynomials with rational coefficients and integer exponents.
Such extension is straightforward and uses mainly the same proof strategy as in~\cite{newman1964rational}, but to be precise we have rewritten those results as needed. We try to use a
similar notation for the reader who is familiar with~\cite{newman1964rational}, even though we make explicit the dependencies of some variables on each other.
Let $n$ be an integer greater than 4, $c_n\geq 1$, and $\xi_n = e^{-n^{-1/2}}\leq\epsilon_n\leq 1$.
Let $p_n(x), r_n(x)$ be real valued functions defined as follows:

\begin{equation}
p_n(x) = \prod_{k=0}^{n-1} (x+c_n\cdot \epsilon_n^k)\, , ~ ~ ~ r_n(x) = x\cdot\frac{p_n(x) - p_n(-x)}{p_n(x) + p_n(-x)}\, .
\label{eq:px}
\end{equation}

Note that the degree of $r_n(x)$ (defined as the maximum between the degrees of numerator and denominator) is $O(n)$. An intermediate result is useful to prove the next lemma.
\begin{lemma}
Let $n$, $\epsilon_n$ and $\xi_n$ be as defined. Then
$\prod_{j=1}^n \frac{1-\epsilon_n^j}{1+\epsilon_n^j} \leq \xi_n^n$.
\label{newman1}
\end{lemma}
\begin{proof}
 Since $\xi_n\leq\epsilon_n \leq 1$, we have
\[
\prod_{j=1}^n \frac{1-\epsilon_n^j}{1+\epsilon_n^j}
\leq \prod_{j=1}^n \frac{1-\xi_n^j}{1+\xi_n^j} \leq e^{-\sqrt{n}} = \xi_n^n\, .
\]
We have used Lemma 1 of~\cite{newman1964rational} for the last inequality.
\end{proof}
\begin{lemma}
Let $n$, $c_n$, $\epsilon_n$ and $\xi_n$ be as defined. For $c_n\cdot \epsilon_n^n < x\leq c_n$, it holds that $\left|\frac{p_n(-x)}{p_n(x)}\right|\leq \xi_n^n$.
\label{newman2}
\end{lemma}
\begin{proof}
 The proof is very similar to the proof of Lemma 2 in~\cite{newman1964rational}.
Suppose that $c_n\cdot\epsilon_n^{j+1} < x\leq c_n\cdot\epsilon_n^j$, for integer $0\leq j < n$. Then
\begin{eqnarray*}
\left|\frac{p_n(-x)}{p_n(x)}\right| &=& \prod_{k=0}^j \frac{c_n\cdot\epsilon_n^k-x}{c_n\cdot\epsilon_n^k+x}\cdot\prod_{k=j+1}^{n-1} \frac{x-c_n\cdot\epsilon_n^k}{x+c_n\cdot\epsilon_n^k} \leq
\prod_{k=0}^j \frac{c_n\cdot\epsilon_n^k-c_n\cdot\epsilon_n^n}{c_n\cdot\epsilon_n^k+c_n\cdot\epsilon_n^n}\cdot\prod_{k=j+1}^{n-1} \frac{c_n\cdot\epsilon_n^j-c_n\cdot\epsilon_n^k}{c_n\cdot\epsilon_n^j+c_n\cdot\epsilon_n^k} \\
&=& \prod_{m=n-j}^n \frac{1-\epsilon_n^m}{1+\epsilon_n^m}\cdot\prod_{m=1}^{n-j-1}\frac{1-\epsilon_n^m}{1+\epsilon_n^m}=\prod_{m=1}^n\frac{1-\epsilon_n^m}{1+\epsilon_n^m}\, ,
\end{eqnarray*}
\noindent and the result follows from Lemma~\ref{newman1}.
\end{proof}

Theorem~\ref{newmanA} extends the result of~\cite{newman1964rational} in two manners: It allows $x$ to range in a user-defined interval, and it provides a bound based on $\epsilon_n$, which
can be chosen arbitrarily within some bounds. For instance, this will allow us to create fractional polynomials $r_n(x)$ which are defined using only rational parameters (while $\xi_n$ is an irrational number).

\begin{theorem}
Let $n$, $\epsilon_n$ and $\xi_n$ be as defined. Then
$| |x|-r_n(x) | \leq \max\{3\cdot\xi_n^n;c_n\cdot\epsilon_n^n\}$ with $x$ throughout $[-c_n,c_n]$, for $r_n(x)$ constructed as in Expression~\eqref{eq:px}.
\label{newmanA}
\end{theorem}
\begin{proof}
We follow the same arguments as in the proof of Theorem (A) in~\cite{newman1964rational}. Since $|x|$ and $r_n(x)$ are both even, we only need to show that the inequality holds in $[0,c_n]$.
If $0\leq x\leq c_n\cdot\epsilon_n^n$, then the result is trivial, since we have $p_n(-x)\geq 0$ and thus $0\leq r_n(x)\leq x$. Hence
\[
| |x| -r_n(x) | = x-r_n(x) \leq x \leq c_n\cdot\epsilon_n^n\, .
\]
If $c_n\cdot\epsilon_n^n < x\leq c_n$, then
\[
| |x| - r_n(x)| = 2x\cdot \left|\frac{p_n(-x)}{p_n(x)+p_n(-x)}\right| = \frac{2x}{\left|1+\frac{p_n(x)}{p_n(-x)}\right|} \leq \frac{2}{\left|\frac{p_n(x)}{p_n(-x)}\right| - 1}\, ,
\]
\noindent
where the last inequality comes from the fact that $|p_n(x)/p_n(-x)| > 1$ for $c_n\cdot\epsilon_n^n<x\leq c_n$.
Applying Lemma~\ref{newman2} to this last fraction, we have $||x|-r_n(x)| \leq 2/(\xi_n^{-n} -1)$, and since $n > 2$, this is less than $3\cdot\xi_n^n$.
\end{proof}

For instance, if one chooses $\epsilon_n=\xi_n$ and $c_n=3$, then Theorem~\ref{newmanA} extends the result of~\cite{newman1964rational} to the interval $[-3,3]$ without losing any accuracy in the approximation. The following result is useful
when dealing with larger intervals and requiring tight approximations (at the cost of a possibly very large $n$).

\begin{corollary}
If $n=(\nu+f(\nu))^2$ is a perfect square for some integer $\nu\geq 4$, $f$ is
a non-negative integer polynomial, and $\epsilon_n=1-\frac{1}{\sqrt{n}}+\frac{1}{2n}$, then
$| |x|-r_n(x) | < 2^{-\nu}$ holds throughout $[-2^{f(\nu)},2^{f(\nu)}]$ for $r_n(x)$ constructed as in Expression~\eqref{eq:px}. Moreover, $r_n(x)$ is a fraction of polynomials with rational parameters of length polynomial in $\nu$ (and thus in $n$).
\end{corollary}
\begin{proof}
Since $\epsilon_n = 1-\frac{1}{\sqrt{n}}+\frac{1}{2n} =
  1-\frac{1}{\nu+f(\nu)}+\frac{1}{2(\nu+f(\nu))^2}$,
it is clear (by construction) that $r_n(x)$ is a ratio of polynomials with rational parameters of length polynomial in $\nu$.
One can verify that $\epsilon_n$ as just defined satisfies $\xi_n\leq\epsilon_n\leq 1$, because it is the Taylor expansion of $\xi_n$ around zero (that is,
$e^{-1/\sqrt{n}} = \epsilon_n - O(n^{-3/2})$, where $0\leq O(n^{-3/2})\leq\frac{1}{2n}$). Hence, Theorem~\ref{newmanA} implies
$| |x|-r_n(x) | \leq \max\{3\cdot\xi_n^n; 2^{f(\nu)}\cdot \epsilon_n^n\}$ throughout $[-2^{f(\nu)},2^{f(\nu)}]$. Finally,
$3\cdot\xi_n^n = 3\cdot e^{-(\nu+f(\nu))} <  2^{-\nu}$ for $\nu \geq 4$, and
\[
\epsilon_n^n \leq \left(e^{-1/\sqrt{n}} +\frac{1}{2n}\right)^n = e^{-\sqrt{n}}\left(1 + \frac{e^{(1/\sqrt{n}) - \log(2)}}{n}\right)^n < e^{1-\sqrt{n}} = e^{1-(\nu+f(\nu))} < 2^{-(\nu+f(\nu))}\, ,
\]
and so $2^{f(\nu)}\epsilon_n^n < 2^{f(\nu)} 2^{-(\nu+f(\nu))}=2^{-\nu}$ for $\nu \geq 4$.
\end{proof}

\end{document}